\newcommand{\rmnum}[1]{\romannumeral #1}
\newcommand{\Rmnum}[1]{\expandafter\@slowromancap\romannumeral #1@}
\spnewtheorem{claim}{Claim}{\bfseries}{\rmfamily}
\spnewtheorem{remark}{Remark}{\bfseries}{\rmfamily}
\spnewtheorem{property}{Property}{\bfseries}{\rmfamily}
\begin{document}

\title{ Sublinear Time Algorithms for Several  Geometric Optimization (With Outliers) Problems In Machine Learning\thanks{Part of this work has appeared in~\cite{DBLP:conf/esa/Ding20,DBLP:conf/esa/Ding21}.
}}
\author{Hu Ding}
\institute{
 School of Computer Science and Engineering, University of Science and Technology of China \\
 He Fei, China\\
  \email{huding@ustc.edu.cn}\\
}
%
\maketitle

\thispagestyle{empty}



\begin{abstract}
In this paper, we study several important geometric optimization problems arising in machine learning. First, we revisit the Minimum Enclosing Ball (MEB) problem in Euclidean space $\mathbb{R}^d$. The problem has been extensively studied before, but real-world machine learning tasks often need to handle large-scale datasets so that we cannot even afford linear time algorithms. Motivated by the recent studies on {\em beyond worst-case analysis}, we introduce the notion of stability for MEB, which is natural and easy to understand. Roughly speaking, an instance of MEB is stable, if the radius of the resulting ball cannot be significantly reduced by removing a small fraction of the input points. Under the stability assumption, we present two sampling algorithms for computing radius-approximate MEB with sample complexities independent of the number of input points $n$. In particular, the second algorithm has the sample complexity even independent of the dimensionality $d$. We also consider the general case without the stability assumption. We present a hybrid algorithm that can output either a radius-approximate MEB or a covering-approximate MEB.  Our algorithm improves the running time and the number of passes for the previous sublinear MEB algorithms.  Our method relies on two novel techniques, the Uniform-Adaptive Sampling method and Sandwich Lemma.  Furthermore, we observe that these two techniques can be generalized to design sublinear time algorithms for a broader range of geometric optimization problems with outliers in high dimensions, including MEB with outliers, one-class and two-class linear SVMs   with outliers,  $k$-center clustering with outliers, and flat fitting with outliers. Our proposed algorithms also work fine for  kernels.

\end{abstract}

%
%
%
  

\pagestyle{plain}
\pagenumbering{arabic}
\setcounter{page}{1}

\section{Introduction}
\label{sec-intro}
Many real-world machine learning tasks can be formulated as geometric optimization problems in Euclidean space. We start with a fundamental geometric optimization problem,  {\em Minimum Enclosing Ball (MEB)}, which has attracted a lot of attentions in past years. 
Given a set $P$ of $n$ points in Euclidean space $\mathbb{R}^d$, where $d$ could be quite high, the problem of MEB is to find 
a ball with minimum radius to cover all the points in $P$ \cite{badoiu2003smaller,DBLP:journals/jea/KumarMY03,DBLP:conf/esa/FischerGK03}. MEB finds 
several important applications in machine learning~\cite{DBLP:journals/ijcga/NielsenN09}. For example, the popular classification model  {\em Support  Vector Machine (SVM)} can be formulated as an MEB problem in high dimensional space, if all the mapped points have the same norm by using kernel method, {\em e.g.,} the popular radial basis function kernel; this SVM is called ``Core Vector Machine (CVM)'' which is currently one of the most important SVM training methods for large-scale data sets, since it was proposed in 2005~\cite{DBLP:journals/jmlr/TsangKC05}. Hence 
fast MEB algorithms can be used to speed up 
its training procedure \cite{DBLP:journals/jmlr/TsangKC05,C10,DBLP:journals/jacm/ClarksonHW12}.
Recently, 
MEB has also been studied for preserving privacy~\cite{DBLP:conf/pods/NissimSV16,DBLP:conf/ipsn/FeldmanXZR17} and quantum cryptography~\cite{gyongyosi2014geometrical}.  


Usually, we consider the approximate solutions of MEB. If a ball covers all the $n$ points but has a radius larger than the optimal one, we call it a  \textbf{``radius-approximate solution''}; if a ball has the radius no larger than the optimal one but covers less than $n$ points, we call it a \textbf{``covering-approximate solution''} instead (the formal definitions are shown in Section~\ref{sec-pre}). 
In the era of big data, the dataset could be so large that we cannot even afford linear time algorithms. This motivates us to ask 
the following questions:

\vspace{0.03in}
{\em Is it possible to develop approximation algorithms for MEB  that run in sublinear time in the input size? And how about other high-dimensional geometric optimization problems arising in machine learning?}
\vspace{0.03in}

It is common to assume that the input data is represented by a  $n\times d$ matrix, and  any algorithm having complexity $o(nd)$ can be considered as a sublinear time algorithm. 
In practice,  data items are usually represented as sparse vectors in $\mathbb{R}^d$; so it can be fast to perform the operations, like distance computing, even though the dimensionality $d$ is high ({\em e.g.,} if each vector has $s\ll d$ non-zero entries, the time for computing the distance is $O(s)$ rather than $O(d)$; see the concluding remarks of~\cite{DBLP:journals/jacm/ClarksonHW12}). Moreover, the number of input points $n$ is often much larger than the dimensionality $d$ in many real-world scenarios.  \textbf{Therefore, we are interested in designing the algorithms that have complexities sublinear in $n$ (or linear in $n$ but with  small factor before it).} 



\subsection{Our Main Ideas and Results}
\label{sec-overview}
Our idea for designing sublinear time MEB  algorithms is inspired by the recent studies on optimization with respect to stable instances, under the umbrella of {\em beyond worst-case analysis} \cite{DBLP:journals/cacm/Roughgarden19}. 
For example, several recent works introduced 
the notion of stability for problems like clustering and max-cut~\cite{balcan2013clustering,DBLP:journals/cpc/BiluL12,DBLP:conf/focs/AwasthiBS10}. 
In this paper, we give the notion of  \textbf{``stability''} for MEB. Roughly speaking, an instance of MEB  is stable, if the radius of the resulting ball cannot be significantly reduced by removing a small fraction of the input points ({\em e.g.,} the radius cannot be reduced by $10\%$ if only $1\%$ of the points are removed). The rationale behind 
this notion is quite natural: if the given instance is not stable, the small fraction of points causing  significant reduction in the radius should be viewed as outliers (or we may need 
multiple balls to cover the input points as the $k$-center clustering problem~\cite{hochbaum1985best,gonzalez1985clustering}). To the best of our knowledge, this is the first study on MEB  from the perspective of stability.

We prove an important implication of the stability assumption: informally speaking, if an instance of MEB is stable, its center should reveal a certain extent of robustness in the space (Section~\ref{sec-stable}). 
Using 
this implication,
we propose two sampling algorithms for computing $(1+\epsilon)$-radius approximate MEB with sublinear time complexities (Section~\ref{sec-sub}); in particular, our second algorithm has the sample size ({\em i.e.,} the number of sampled points) independent of the number of input points $n$ and dimensionality $d$ (to the best of our knowledge, this is the first algorithm achieving $(1+\epsilon)$-radius approximation with such a sublinear complexity). 

Moreover, we have an interesting observation: the ideas developed under the stability assumption can even help us to solve the general instance without the stability assumption, if we relax the requirement slightly. We  introduce \textbf{a hybrid approach} that can output either a radius-approximate MEB or a covering-approximate MEB, depending upon whether the input instance is sufficiently stable\footnote{We do not need to explicitly know whether the instance is stable or not, when running our algorithm. } (Section~\ref{sec-final-extent}). Also, a byproduct is that we can  infer the stability degree of the given instance from the output. It is worth noting that the simple uniform sampling idea based on VC-dimension~\cite{vapnik2015uniform,haussler1987} can only yield a ``bi-criteria'' approximation, which has errors on both the radius and the number of covered points (see the discussion on our first sampling algorithm in Section~\ref{sec-firstesa}). 
Comparing with the sublinear time MEB algorithm proposed by Clarkson {\em et al.}~\cite{DBLP:journals/jacm/ClarksonHW12}, we reduce the total running time from $\tilde{O}(\epsilon^{-2}n+\epsilon^{-1}d+M)$ to $O(n+h(\epsilon, \delta)\cdot d+M)$, where $M$ is the number of non-zero entries in the input $n\times d$ matrix and $h(\epsilon, \delta)$ is a factor depending on the pre-specified radius error bound $\epsilon$ and covering error bound $\delta$. Thus, our improvement is significant if $n\gg d$. The only tradeoff is that we allow a covering approximation for unstable instance (given the lower bound proved by~\cite{DBLP:journals/jacm/ClarksonHW12}, it is quite unlikely to reduce the term $\epsilon^{-2}n$ if we keep restricting the output to be $(1+\epsilon)$-radius approximation). Moreover, our algorithm only needs \textbf{uniform sampling and a single pass over the data}; on the other hand, the algorithm of~\cite{DBLP:journals/jacm/ClarksonHW12} needs  $\tilde{O}(\epsilon^{-1})$ passes (the details are  shown in Table~\ref{tab-1}). 
In addition to the stability idea, our method also relies on two key techniques, the novel ``\textbf{Uniform-Adaptive Sampling}'' method and ``\textbf{Sandwich Lemma}''.  Roughly speaking, the Uniform-Adaptive Sampling method can help us to bound the error induced in each ``randomized greedy selection'' step; the Sandwich Lemma enables us to estimate the objective value of each candidate and select the best one in sublinear time.

\begin{table}[h]
 \renewcommand{\arraystretch}{1.3}
 \newcommand{\tabincell}[2]{\begin{tabular}{@{}#1@{}}#2\end{tabular}} 
 \center
 \begin{tabular}{p{1.2cm}<{\centering}|p{1.5cm}<{\centering}|c|c|c|p{2.8cm}<{\centering}}
  \hline
  \multicolumn{2}{c|}{\textbf{Results}} & \textbf{Quality} & \textbf{Time }& \textbf{Number of passes} & \textbf{Extendibility for MEB with outliers} \\ \hline
  \multicolumn{2}{l|}{Clarkson {\em et al.}~\cite{DBLP:journals/jacm/ClarksonHW12}} & $(1+\epsilon)$-rad.  & $\tilde{O}(\epsilon^{-2}n+\epsilon^{-1}d+M)$ & $\tilde{O}(\epsilon^{-1})$ & N/A\\ \hline
  \multicolumn{2}{l|}{\tabincell{c}{Core-sets methods\\ \cite{badoiu2003smaller,DBLP:journals/jea/KumarMY03,panigrahy2004minimum,C10}}} & $(1+\epsilon)$-rad. &  {\tabincell{c}{roughly $O(\epsilon^{-1}nd)$ \\  or $O(\epsilon^{-1}(n+d+M))$\\ if $M=o(nd)$}} & $O(\epsilon^{-1})$ & bi-criteria approx.~\cite{BHI} \\ \hline
  \multicolumn{2}{l|}{Numerical method~\cite{DBLP:conf/soda/SahaVZ11}} & $(1+\epsilon)$-rad. & {\tabincell{c}{$\tilde{O}(\epsilon^{-1/2}nd)$ or \\ $\tilde{O}(\epsilon^{-1/2}(n+d+M))$\\ if $M=o(nd)$}}& $O(\epsilon^{-1/2})$ & N/A\\ \hline
    \multicolumn{2}{l|}{Numerical method~\cite{DBLP:conf/icalp/ZhuLY16}} & $(1+\epsilon)$-rad. &   $\tilde{O}(nd+n\sqrt{d}/\sqrt{\epsilon})$ & $\tilde{O}(d+\sqrt{d/\epsilon})$ & N/A\\ \hline
  \multicolumn{2}{l|}{Streaming algorithm~\cite{DBLP:journals/comgeo/ChanP14,DBLP:journals/algorithmica/AgarwalS15}} & $1.22$-rad. & $O(nd/\epsilon^5)$ & one pass & N/A \\ \hline
  \multirow{2}{*}{\tabincell{c}{\\ \\\textbf{This} \\ \textbf{paper}}} & stable instance   & $(1+\epsilon)$-rad. & \tabincell{c}{ $O(C_1\cdot d)$ (Sec.~\ref{sec-secondesa}) } & uniform sampling & N/A  \\ \cline{2-6}  
  & general instance & \tabincell{c}{$(1+\epsilon)$-rad. \\  or $(1-\delta)$-cov.} &  \tabincell{c}{$O\big((n+C_2)d\big)$ or \\ $O(n+C_2\cdot d+M)$\\  if $M=o(nd)$  (Sec.~\ref{sec-unknown2})} & \tabincell{c}{uniform sampling\\ plus a single pass} &  \tabincell{c}{$(1+\epsilon)$-rad. \\  or $(1-\delta)$-cov. \\(Sec.~\ref{sec-unknown})}  \\  \hline
 \end{tabular}
 \vspace{0.1in}
 \caption{The existing and our results for computing MEB in high dimensions. In the table, ``rad.'' and ``cov.'' stand for ``radius approximation'' and ``covering approximation'', respectively. ``$M$''  is the number of non-zero entries in the input $n\times d$ matrix. The factor $C_1$ depends on $\epsilon$ and the stability degree of the given instance; the factor $C_2$ depends on $\epsilon$ and $\delta$. }
  \label{tab-1}
\end{table}

Finally, we present several extensions of our result. In practice,  we may assume the presence of outliers in 
given datasets. In particular,  as the rapid development of machine learning, the field of {\em  adversarial machine learning}  has attracted a great amount of attentions~\cite{biggio2018wild,DBLP:journals/cacm/GoodfellowMP18}. A small set of outliers could be added by some adversarial attacker to make the model severely deviate and cause unexpected error (the seminal paper~\cite{DBLP:conf/icml/BiggioNL12} on poisoning attacks against SVM has just received the {\em ICML2022 Test of Time award}). To defend such poisoning attacks, we often design robust algorithms that are resilient against outliers~\cite{DBLP:conf/sp/JagielskiOBLNL18}. However, the presence of outliers makes the problem  not only non-convex but also highly combinatorial in high dimensions; for example, if $m$ of the input $n$ data items are outliers ($m<n$), we have to consider an exponentially large number ${n\choose m}$ of  different possible cases when optimizing the objective function. So we consider to design sublinear time algorithms for the following problems.

\vspace{0.1in}
\textbf{MEB with outliers.} 
MEB with outliers is a natural generalization of the MEB problem, where the goal is to find the minimum ball covering at least a certain  fraction of input points.  We can apply MEB with outliers to solve many practical problems ({\em e.g.,} outlier recognition) in data mining and data analysis~\cite{tan2006introduction}. 
We define the stability for MEB with outliers, and propose the sublinear time  approximation algorithm. Our algorithm is the first sublinear time  algorithm  for the MEB with outliers problem (comparing with the previous linear time algorithms like~\cite{BHI}), to the best of our knowledge.

%
 
  \vspace{0.1in}
  \textbf{Other enclosing with outliers problems. }    Besides MEB with outliers, we observe that our proposed techniques can be used to solve a broader range of enclosing with outliers problems. We define a general optimization problem called \textbf{ minimum enclosing ``x'' (MEX) with Outliers}, where the ``x'' could be a specified kind of shape ({\em e.g.,} the shape is a ball for MEB with outliers). We prove that it is possible to generalize the Uniform-Adaptive Sampling method and  Sandwich Lemma to adapt the shape ``x'', as long as it satisfies several properties. 
 In particular we focus on the MEX with outlier problems including flat fitting, $k$-center clustering, and SVM with outliers; a common characteristic of these problems is that each of them has an iterative algorithm based on greedy selection for its vanilla version (without outliers) that is similar to the MEB algorithm of~\cite{badoiu2003smaller}. Though these problems have been widely studied before, the research in terms of their sublinear time algorithms is till quite limited.

 \begin{remark}
 Because the geometric optimization problems studied in this paper are motivated from machine learning applications, we also take into account the \textbf{kernels}~\cite{DBLP:books/lib/ScholkopfS02}. Our proposed algorithms only need to conduct the basic operations, like computing the distance and inner product, on the data items. Therefore, 
 they also work fine for kernels. 
 \end{remark}
 
 \textbf{The rest of the paper is organized as follows.} In Section~\ref{sec-related}, we summarize the previous results that are related to our work. In Section~\ref{sec-pre}, we present the important definitions and briefly introduce the coreset construction method for MEB from~\cite{badoiu2003smaller} (which will be used in our following algorithms  and analysis). In Section~\ref{sec-stable}, we prove the implication of MEB stability. Further,  in Section~\ref{sec-sub} we propose two sublinear time MEB algorithms for stable instance. In Section~\ref{sec-final-extent}, we propose two key techniques, Uniform-Adaptive sampling and Sandwich lemma, and then present our sublinear time algorithm for general MEB without the stability assumption. In Section~\ref{sec-unknown}, we extend the idea of Section~\ref{sec-final-extent} to the MEB with outliers problem.  Finally, we present  the generalized Uniform-Adaptive sampling and Sandwich lemma, together with the applications in several enclosing with outliers problems (including flat fitting, $k$-center clustering, and SVM with outliers) in Section~\ref{sec-ext}.

\subsection{Previous Work}
\label{sec-related}

The works most related to ours  are \cite{DBLP:journals/jacm/ClarksonHW12,alon2003testing}. Clarkson {\em et al.}~\cite{DBLP:journals/jacm/ClarksonHW12} developed an elegant perceptron framework for solving several optimization problems arising in machine learning, such as MEB.  Given a set of $n$ points in $\mathbb{R}^d$ represented as  
an $n\times d$ matrix with $M$ non-zero entries, 
their framework can compute the MEB in $\tilde{O}(\frac{n}{\epsilon^2}+\frac{d}{\epsilon})$ time~\footnote{The asymptotic notation $\tilde{O}(f)=O\big(f\cdot \mathtt{polylog}(\frac{nd}{\epsilon})\big)$.}. Note that the parameter ``$\epsilon$'' is an additive error ({\em i.e.,} the resulting radius is $r+\epsilon$ if $r$ is the radius of the optimal MEB) which can be converted into a relative error ({\em i.e.,} $(1+\epsilon)r$) in $O(M)$ preprocessing time. Thus, if $M=o(nd)$, the running time is still sublinear in the input size $nd$ (please see Table~\ref{tab-1}). 
The framework of~\cite{DBLP:journals/jacm/ClarksonHW12} also inspires the sublinear time algorithms for training SVMs~\cite{DBLP:conf/nips/HazanKS11} and approximating Semidefinite Programs~\cite{DBLP:conf/nips/GarberH11}. Hayashi and Yoshida~\cite{DBLP:conf/nips/HayashiY16} presented a sampling-based method for minimizing quadratic functions of which the MEB objective is a special case, but it yields a large additive error $O(\epsilon n^2)$.

Alon {\em et al.}~\cite{alon2003testing} studied the following property testing problem: given a set of $n$ points in some metric space, 
determine 
whether the instance is $(k, b)$-clusterable, where an instance is called $(k, b)$-clusterable if it can be covered by $k$ balls with radius (or diameter) $b>0$. They proposed several sampling algorithms to answer the question ``approximately''. Specifically, they distinguish between the case that the instance is $(k, b)$-clusterable and the case that it is $\epsilon$-far away from $(k, b')$-clusterable, where $\epsilon\in (0,1)$ and $b'\geq b$. ``$\epsilon$-far'' means that more than $\epsilon n$ points should be removed so that it becomes $(k, b')$-clusterable. 
Note that their method cannot yield a single criterion radius-approximation or covering-approximation algorithm for the MEB problem, since it will introduce unavoidable errors on the radius and the number of covered points due to the relaxation of ``$\epsilon$-far''. But  it is possible to convert it into a ``\textbf{bi-criteria}'' approximation, where it allows approximations on both the radius and the number of uncovered outliers ({\em e.g.,} discard more than the pre-specified number of outliers).

 \vspace{0.1in}

\textbf{MEB and core-set.} A {\em core-set} is a small set of points that approximates the structure/shape of a much larger point set~\cite{agarwal2005geometric,DBLP:journals/corr/Phillips16,DBLP:journals/widm/Feldman20}. The core-set idea  has also been used to compute approximate  MEB  in high dimensional space~\cite{BHI,DBLP:journals/jea/KumarMY03,panigrahy2004minimum,DBLP:conf/isaac/KerberS13}. B\u{a}doiu and Clarkson \cite{badoiu2003smaller} showed that it is possible to find a core-set of size $\lceil2/\epsilon\rceil$ that yields a $(1+\epsilon)$-radius approximate MEB. Several other methods can yield even lower core-set sizes, such as~\cite{coresets1,DBLP:conf/isaac/KerberS13}. 
In fact, the algorithm for computing the core-set of MEB is a {\em Frank-Wolfe}  algorithm~\cite{frank1956algorithm}, which has been systematically studied by Clarkson~\cite{C10}. 
Other MEB algorithms that do not rely on core-sets include 
~\cite{DBLP:conf/esa/FischerGK03,DBLP:conf/soda/SahaVZ11,DBLP:conf/icalp/ZhuLY16}.  Agarwal and Sharathkumar~\cite{DBLP:journals/algorithmica/AgarwalS15} presented a streaming $(\frac{1+\sqrt{3}}{2}+\epsilon)$-radius approximation algorithm for computing MEB; later, Chan and Pathak~\cite{DBLP:journals/comgeo/ChanP14} proved that the same algorithm actually yields an approximation ratio less than $1.22$. Very recently,  Cohen-Addad {\em et al.}~\cite{cohen2021improved} proposed the sublinear time algorithm for computing high dimensional power means ({\em e.g.,} geometric median and mean points) by using core-sets.

 \vspace{0.1in}

\textbf{MEB with outliers and $k$-center clustering with outliers.} 
The MEB with outliers problem can be viewed as the case $k=1$ of the $k$-center clustering with outliers problem~\cite{charikar2001algorithms}. 
B\u{a}doiu {\em et al.}~\cite{BHI} extended their core-set idea to the problems of MEB and $k$-center clustering with outliers, and achieved linear time bi-criteria approximation algorithms (if $k$ is assumed to be a constant).  Huang {\em et al.}~\cite{DBLP:conf/focs/HuangJLW18} and  Ding {\em et al.}~\cite{DBLP:journals/corr/abs-1901-08219} respectively showed that simple uniform sampling approach can yield bi-criteria approximation of $k$-center clustering with outliers.  
Several algorithms for the low dimensional MEB with outliers have also been developed~\cite{aggarwal1991finding,efrat1994computing,har2005fast,matouvsek1995enclosing}. 
There also exist a number of works on streaming MEB and $k$-center clustering with outliers~\cite{charikar2003better,mccutchen2008streaming,zarrabistreaming,DBLP:journals/corr/abs-1802-09205}. 
Other related topics include robust optimization~\cite{DBLP:journals/ior/BertsimasS04}, robust fitting~\cite{har2004shape,agarwal2008robust}, and optimization with uncertainty~\cite{DBLP:journals/mp/CalafioreC05}.

 \vspace{0.1in}


\textbf{SVM with outliers.}  Given two point sets $P_1$ and $P_2$ in $\mathbb{R}^d$, the problem of {\em Support Vector Machine (SVM)} is to find the largest margin to separate $P_1$ and $P_2$ (if they are separable)~\cite{journals/tist/ChangL11}. 
SVM can be formulated as a quadratic programming problem, and a number of efficient techniques have been developed in the past, such as the soft margin SVM~\cite{mach:Cortes+Vapnik:1995,platt99}, $\nu$-SVM~\cite{bb57389,conf/nips/CrispB99}, and Core-SVM~\cite{tkc-cvmfstv-05}. There also exist a number of works on designing robust algorithms for SVM with outliers~\cite{conf/aaai/XuCS06,icml2014c2_suzumura14,ding2015random}.  
 \vspace{0.1in}

\textbf{Flat fitting with outliers.} Given an integer $j\geq 0$ and a set of points in $\mathbb{R}^d$, the flat fitting problem is to find a $j$-dimensional flat having the smallest maximum distance to the input points~\cite{DBLP:conf/focs/Har-PeledV01}; obviously, the MEB problem is a special case with $j=0$.  In high dimensions, Har-Peled and Varadarajan~\cite{DBLP:journals/dcg/Har-PeledV04} provided a linear time algorithm if $j$ is assumed to be fixed; their running time was further reduced by Panigrahy~\cite{panigrahy2004minimum} based on a core-set approach. There also exist several methods considering flat fitting with outliers but only for low-dimensional case~\cite{har2004shape,agarwal2008robust}.

 \vspace{0.1in}

\textbf{Optimizations under stability.} Bilu and Linial~\cite{DBLP:journals/cpc/BiluL12} showed that the Max-Cut problem becomes easier if the given instance is stable with respect to perturbation on edge weights. Ostrovsky {\em et al.}~\cite{ostrovsky2012effectiveness} proposed a separation condition for $k$-means clustering which refers to the scenario where  the clustering cost of $k$-means  is significantly lower than that of $(k-1)$-means  for a given instance, and demonstrated the effectiveness of the Lloyd heuristic \cite{lloyd1982least} under the separation condition. Balcan {\em et al.}~\cite{balcan2013clustering} introduced the concept of approximation-stability for finding the ground-truth of $k$-median and $k$-means clustering. 
Awasthi {\em et al.}~\cite{DBLP:conf/focs/AwasthiBS10} introduced another notion of clustering stability and gave a PTAS for $k$-median and $k$-means clustering. More clustering algorithms  under stability assumption were studied in~\cite{DBLP:journals/ipl/AwasthiBS12,DBLP:journals/siamcomp/BalcanL16,DBLP:conf/icalp/BalcanHW16,DBLP:conf/colt/BalcanB09,kumar2010clustering}.

 \vspace{0.1in}

\textbf{Sublinear time algorithms.} Besides the aforementioned sublinear MEB algorithm~\cite{DBLP:journals/jacm/ClarksonHW12}, a number of sublinear time algorithms have been studied for the problems like clustering~\cite{DBLP:conf/stoc/Indyk99,DBLP:conf/focs/Indyk99,meyerson2004k,mishra2001sublinear,czumaj2004sublinear} and property testing~\cite{DBLP:journals/jacm/GoldreichGR98,DBLP:books/sp/BhattacharyyaY22}. 
More detailed discussion on sublinear time algorithms can be found in the survey papers~\cite{rubinfeld2006sublinear,czumaj2006sublinear}.

\section{Definitions and Preliminaries}
\label{sec-pre}

We describe and analyze our algorithms in the unit-cost RAM model~\cite{10.5555/211390}. Suppose the input is represented by an $n\times d$ matrix ({\em i.e.,} $n$ points in $\mathbb{R}^d$). As mentioned in~\cite{DBLP:journals/jacm/ClarksonHW12},  it is common to assume that each entry  of the matrix can be recovered in constant time.

We let $|A|$ denote the number of points of a given point set $A$ in $\mathbb{R}^d$, and $||x-y||$ denote the Euclidean distance between two points $x$ and $y$ in $\mathbb{R}^d$. We use $\mathbb{B}(c, r)$ to denote the ball centered at a point $c$ with radius $r>0$. Below, we give the definitions for MEB and the notion of stability. To keep the structure of our paper more compact, we place other necessary definitions for our extensions to Section~\ref{sec-final-extent}, Section~\ref{sec-unknown},  and Section~\ref{sec-ext}, respectively.


\begin{definition}[Minimum Enclosing Ball (MEB)]
\label{def-meb}
Given a set $P$ of $n$ points in $\mathbb{R}^d$, the MEB problem is to find a ball with minimum radius to cover all the points in $P$. The resulting ball and its radius are denoted by $\mathbf{MEB}(P)$ and $\mathbf{Rad}(P)$, respectively.
\end{definition}

\begin{definition}[Radius Approximation and Covering Approximation]
\label{def-app}
Let $0<\epsilon, \delta<1$. A ball $\mathbb{B}(c, r)$ is called a $(1+\epsilon)$-radius approximation of $\mathbf{MEB}(P)$, if the ball covers all points in $P$ and has radius $r\leq (1+\epsilon) \mathbf{Rad}(P)$.  On the other hand, the ball is called a $(1-\delta)$-covering approximation of $\mathbf{MEB}(P)$, if it covers at least $(1-\delta)n$ points in $P$ and has radius $r\leq \mathbf{Rad}(P)$.
\end{definition}
Both radius approximation and covering approximation are single-criterion approximations. When $\epsilon$ ({\em resp.,} $\delta$) approaches to $0$, the $(1+\epsilon)$-radius approximation ({\em resp.,} $(1-\delta)$-covering approximation) will approach to  $\mathbf{MEB}(P)$. The ``covering approximation'' seems to be similar to ``MEB with outliers'', but actually they are quite different  (see Definition~\ref{def-outlier} in Section~\ref{sec-final-extent}).



%

\begin{definition}[($\alpha$, $\beta$)-stable]
\label{def-stable}
Given a set $P$ of $n$ points in $\mathbb{R}^d$ with two parameters $\alpha$ and $\beta$ in $(0,1)$,  $P$ is an ($\alpha$, $\beta$)-stable instance if \textbf{(1)} $\mathbf{Rad}(P\setminus Q)> (1-\alpha)\mathbf{Rad}(P)$ for any $Q\subset P$ with $|Q|<   \beta n $, and \textbf{(2)} there exists a $Q'\subset P$ with $|Q'|= \lceil\beta n\rceil$ having $\mathbf{Rad}(P\setminus Q')\leq (1-\alpha)\mathbf{Rad}(P)$. 
%
\end{definition}

\vspace{0.01in}
\textbf{The intuition of Definition~\ref{def-stable}.} Actually, $\beta$ can be viewed as a function of $\alpha$, and vice versa. For example,  given an $\alpha>0$, there always exists a $\beta\geq \frac{1}{n}$ such that $P$ is an ($\alpha$, $\beta$)-stable instance ($\beta\geq \frac{1}{n}$ because we must remove at least one point). 
The property of stability indicates that $\mathbf{Rad}(P)$ cannot be significantly reduced unless removing a large enough fraction of points from $P$. For a fixed $\alpha$, the larger $\beta$ is, the more stable $P$ should be. Similarly, for a fixed $\beta$,  the smaller $\alpha$ is, the more stable $P$ should be.

 Actually, our stability assumption is quite reasonable in practice. For example, if the radius  can be reduced considerably (say by $\alpha=10\%$) after removing only a very small fraction (say $\beta=1\%$) of  points, 
it is natural to view the small fraction of points as outliers. 
To better understand the notion of stability in high dimensions, we consider the following two examples. 
\vspace{0.01in}

\textbf{ Example (\rmnum{1}).} Suppose that the distribution of $P$ is uniform and dense inside $\mathbf{MEB}(P)$. Let $\alpha\in (0,1)$ be a fixed number, and we  study the corresponding $\beta$ of $P$. If we want the radius of the remaining $(1-\beta)n$ points to be as small as possible, intuitively we should remove the outermost $\beta n$ points (since $P$ is uniform and dense).  Let $Q'$  denote  the set of outermost $\beta n$ points that has $\mathbf{Rad}(P\setminus Q')\leq(1-\alpha)\mathbf{Rad}(P)$. Then we have $\frac{|P\setminus Q'|}{|P|}\approx \frac{Vol\big(\mathbf{MEB}(P\setminus Q')\big)}{Vol\big(\mathbf{MEB}(P)\big)}=\frac{(\mathbf{Rad}(P\setminus Q'))^d}{(\mathbf{Rad}(P))^d}\leq (1-\alpha)^d$, where $Vol(\cdot)$ is the volume function. That is, $1-\beta \leq(1-\alpha)^d$ and it implies $\lim_{d\to\infty}\beta =1$ when $\alpha$ is fixed; that means $P$ tends to be very stable as $d$ increases. 

\textbf{ Example (\rmnum{2}).} Consider a regular $d$-dimensional  simplex $P$ containing $d+1$ points where each pair of points have the pairwise distance equal to $1$. It is not hard to obtain $\mathbf{Rad}(P)=\sqrt{\frac{d}{2(1+d)}}$, and we denote it by $r_d$. If we remove $\beta  (d+1)$ points from $P$, namely it becomes a regular $d'$-dimensional simplex with $d'=(1-\beta  )(d+1)-1$, the new radius $r_{d'}=\sqrt{\frac{d'}{2(1+d')}}$. To achieve $\frac{r_{d'}}{r_d}\leq 1-\alpha$ with a fixed $\alpha$, it is easy to see that $1-\beta  $ should be no larger than $\frac{1}{1+(2\alpha-\alpha^2) d}$; this implies $\lim_{d\to\infty}\beta=1$. Similar to example (\rmnum{1}), the instance  
$P$ tends to be very stable as $d$ increases. 

\begin{remark}
In practice, it is difficult to know the exact value of $\beta$ for a fixed $\alpha$.  However, the value of $\beta$ only affects the sample sizes in our proposed algorithms in Section~\ref{sec-sub}, and thus only assuming a reasonable lower bound $\beta_0< \beta$ is already sufficient. In Section~\ref{sec-final-extent}, we also consider the general case without the stability assumption, where the proposed algorithm does not even need to input  $\beta_0$.  
\end{remark}

\subsection{A More Careful Analysis for Core-set Construction in~\cite{badoiu2003smaller}}
\label{sec-newanalysis}


We first briefly introduce the core-set construction for MEB, since it will be  used in our proposed algorithms. 
Let $0<\epsilon<1$. The algorithm in~\cite{badoiu2003smaller} yields an MEB core-set of size $2/\epsilon$ (for convenience, we always assume that $2/\epsilon$ is an integer). But there is a small issue in their paper. The analysis assumes that the exact MEB of the core-set is computed in each iteration, but in fact 
one may only compute an approximate MEB. Thus, an immediate question is whether the quality is still guaranteed with such a change. Kumar {\em et al.}~\cite{DBLP:journals/jea/KumarMY03} fixed this issue, and showed that computing a $(1+O(\epsilon^2))$-approximate MEB for the core-set in each iteration still guarantees a core-set with  size  $O(1/\epsilon)$, where the hidden constant is larger than $80$. Clarkson~\cite{C10}  showed that the greedy core-set construction algorithm of MEB, as a special case of  the  Frank-Wolfe algorithm, yields a core-set with size slightly larger than $ 4/\epsilon$. Note that there exist several other methods yielding even lower core-set size~\cite{coresets1,DBLP:conf/isaac/KerberS13}, but their construction algorithms are more complicated and thus not applicable to our problems. 
Below we show that it is possible to guarantee a core-set of~\cite{badoiu2003smaller}  with the size being arbitrarily close to $2/\epsilon$, even if we only compute an approximate MEB in each iteration. This improves the core-set sizes of \cite{DBLP:journals/jea/KumarMY03,C10}, and the new analysis is also interesting in its own right.


For the sake of completeness, we first briefly introduce the idea of the core-set construction algorithm in~\cite{badoiu2003smaller}.
Given a point set $P\subset\mathbb{R}^d$, the algorithm is a simple iterative procedure. Initially, it selects an arbitrary point from $P$ and places it into an initially empty set $T$. 
In each of the following $2/\epsilon$ iterations, the algorithm updates the center of $\mathbf{MEB}(T)$ and adds to $T$ the farthest point from the current center of $\mathbf{MEB}(T)$. 
Finally, the center of $\mathbf{MEB}(T)$ induces a $(1+\epsilon)$-approximation for $MEB(P)$. The selected set of $2/\epsilon$ points ({\em i.e.}, $T$) is called the core-set of MEB. To ensure the expected improvement in each iteration, they~\cite{badoiu2003smaller} showed that the following two inequalities hold if the algorithm always selects the farthest point to the current center of $\mathbf{MEB}(T)$:
\begin{eqnarray}
r_{i+1}  \geq  (1+\epsilon)\textbf{Rad}(P)-L_i; \hspace{0.2in} r_{i+1} \geq  \sqrt{r^2_i+L^2_i},\label{for-bc--2}
 \end{eqnarray}
where $r_i$ and $r_{i+1}$ are the radii of $\mathbf{MEB}(T)$ in the $i$-th and $(i+1)$-th iterations, respectively, and $L_i$ is the shifting distance of the center of $\mathbf{MEB}(T)$ from the $i$-th to $(i+1)$-th iteration.



\begin{figure}
\begin{center}
    \includegraphics[width=0.3\textwidth]{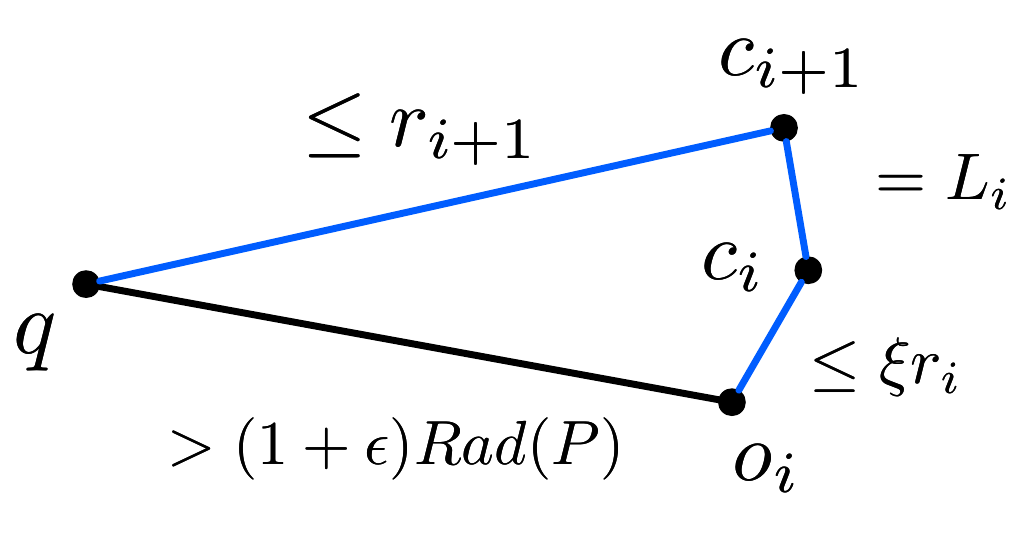}  
    \end{center}
  \caption{An illustration of (\ref{for-bc--1}).}     
   \label{fig-bc}
\end{figure}

As mentioned earlier, we often compute only an approximate $\mathbf{MEB}(T)$ in each iteration. In the $i$-th iteration, we let $c_i$ and $o_i$ denote the centers of the exact and the approximate $\mathbf{MEB}(T)$, 
respectively. Suppose that $||c_i-o_i||\leq \xi r_i$, where $\xi\in (0,\frac{\epsilon}{1+\epsilon})$ (we will see why  this bound is needed later). Using another algorithm proposed in~\cite{badoiu2003smaller}, one can obtain the point $o_i$ in $O(\frac{1}{\xi^2}|T|d)$ time. Note that we only compute $o_i$ rather than $c_i$ in each iteration. Hence we can only select the farthest point (say $q$) to $o_i$. If $||q-o_i||\leq (1+\epsilon)\textbf{Rad}(P)$,  we are done and a $(1+\epsilon)$-approximation of MEB is already obtained. Otherwise, we have
\begin{eqnarray}
(1+\epsilon)\textbf{Rad}(P)< ||q-o_i||\leq ||q-c_{i+1}||+||c_{i+1}-c_i||+||c_i-o_i||\leq r_{i+1}+L_i+\xi r_i \label{for-bc--1}
\end{eqnarray}
by  the triangle inequality (see Figure \ref{fig-bc}). In other words, we should replace the first inequality of (\ref{for-bc--2}) by ``$r_{i+1} > (1+\epsilon)\textbf{Rad}(P)-L_i-\xi r_i$''. Also, the second inequality of (\ref{for-bc--2}) still holds since it depends only on the property of the exact MEB (see~\cite[Lemma 2.1]{badoiu2003smaller}). Thus,  we have 
\begin{eqnarray}
r_{i+1}\geq \max\Big\{\sqrt{r^2_i+L^2_i}, (1+\epsilon)\textbf{Rad}(P)-L_i-\xi r_i\Big\}.\label{for-bc4}
\end{eqnarray}

This leads to the following theorem whose proof can be found 
 in Section~\ref{sec-proof-newbc}.

\begin{theorem}
\label{the-newbc}
In the core-set construction algorithm of~\cite{badoiu2003smaller}, if one computes an approximate MEB for $T$ in each iteration and the resulting center $o_i$ has the distance to $c_i$ less than $\xi r_i= s\frac{\epsilon}{1+\epsilon} r_i$ for some $s\in(0,1)$, the final core-set size is bounded by $z=\frac{2}{(1-s)\epsilon}$. Also, the bound could be arbitrarily close to $2/\epsilon$ when $s$ is small enough. 
\end{theorem}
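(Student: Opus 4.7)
The plan is to bootstrap the recursion (\ref{for-bc4}) into a lower bound on $r_i$ that grows so fast that the a-priori ceiling $r_i \leq \mathbf{Rad}(P) =: R$ (which holds because $T\subseteq P$) cannot be sustained for more than $z$ iterations, thereby forcing termination.

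First, I would eliminate $L_i$ from (\ref{for-bc4}). The first inequality gives $L_i \leq \sqrt{r_{i+1}^2 - r_i^2}$, while the second gives $L_i > (1+\epsilon)R - r_{i+1} - \xi r_i$; under non-termination the latter right-hand side is strictly positive (otherwise one immediately deduces $r_{i+1} > R$, a contradiction). Squaring the resulting inequality and rearranging produces a closed-form recursion of the shape $r_{i+1} \geq a/2 + r_i^2/(2a)$ with $a := (1+\epsilon)R - \xi r_i$. Using $r_i \leq R$ I bound $a$ from below by $a_0 := (1+\epsilon-\xi)R$, and invoke the monotonicity of $a \mapsto a/2 + r_i^2/(2a)$ on $[r_i,\infty)$ (valid because $a_0 \geq R \geq r_i$) to replace $a$ by $a_0$, obtaining the clean bound $r_{i+1} \geq a_0/2 + r_i^2/(2a_0)$.

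Next, I would normalize via $\tilde\rho_i := r_i/a_0$, which turns the bound into the classical B\u{a}doiu--Clarkson recursion $\tilde\rho_{i+1} \geq (1+\tilde\rho_i^2)/2$. Writing $\delta_i := 1 - \tilde\rho_i$, a short computation gives $\delta_{i+1} \leq \delta_i(1-\delta_i/2)$, and the standard telescoping trick $1/\delta_{i+1} \geq 1/\delta_i + 1/2$ then produces $\delta_i \leq 2/(i+1)$, i.e., $\tilde\rho_i \geq 1 - 2/(i+1)$.

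Finally, I would combine this growth estimate with the a-priori ceiling $\tilde\rho_i \leq R/a_0 = 1/(1+\epsilon-\xi)$. Consistency of the two bounds forces $i \leq 2(1+\epsilon-\xi)/(\epsilon-\xi) - 1$; substituting $\xi = s\epsilon/(1+\epsilon)$ and simplifying shows this threshold coincides with $2/((1-s)\epsilon)$ up to lower-order corrections, giving the claimed $z$. Letting $s \to 0^{+}$ sends $\xi \to 0$ and the threshold approaches $2(1+\epsilon)/\epsilon$, which is arbitrarily close to $2/\epsilon$ for small $\epsilon$, confirming the second assertion of the theorem. The main obstacle is this last simplification: one has to verify by explicit substitution that the ``raw'' bound $2(1+\epsilon-\xi)/(\epsilon-\xi)$ indeed collapses to the advertised clean form $2/((1-s)\epsilon)$. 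The monotonicity argument underlying the $a \to a_0$ substitution in the first step is elementary but essential, since it is precisely the dependence of $a$ on $r_i$ that prevents a direct application of the original B\u{a}doiu--Clarkson analysis.
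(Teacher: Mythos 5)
Your route is essentially a cleaner instance of the paper's argument, and in one respect it is actually tighter. Both proofs eliminate $L_i$ from the two inequalities --- yours by intersecting the two one-sided bounds and squaring, the paper's by computing the minimizing $L_i$ explicitly; these are algebraically the same and both yield $r_{i+1}\geq a/2 + r_i^2/(2a)$ with $a=(1+\epsilon)\mathbf{Rad}(P)-\xi r_i$. The genuine difference is how the $r_i$-dependence inside $a$ is discharged. The paper first squares to get a recursion in $\lambda_i=r_i/\big((1+\epsilon)\mathbf{Rad}(P)\big)$, then uses a derivative computation to replace $\xi\lambda_i$ by the looser $\xi$ (effectively bounding $\lambda_i\leq 1$); you instead use $r_i\leq\mathbf{Rad}(P)$ to obtain $a\geq a_0=(1+\epsilon-\xi)\mathbf{Rad}(P)$ directly, invoking the monotonicity of $a\mapsto a/2+r_i^2/(2a)$ on $[r_i,\infty)$. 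Your substitution $r_i\to\mathbf{Rad}(P)$ is tighter than the paper's effective $r_i\to(1+\epsilon)\mathbf{Rad}(P)$, which is why your ceiling $\tilde\rho_i\leq 1/(1+\epsilon-\xi)$ is sharper than the paper's $\lambda_i/(1-\xi)\leq 1/\big((1+\epsilon)(1-\xi)\big)$ and ultimately yields the bound $2/(\epsilon-\xi)$ versus the paper's $2/(\epsilon-\xi-\epsilon\xi)$, a strict improvement.

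However, the step you flagged as unfinished does hide a genuine wrinkle, and as written the proof does not quite close. The telescoping from $\delta_0=1$ with $1/\delta_{i+1}\geq 1/\delta_i+1/2$ gives $\delta_i\leq 2/(i+2)$, not $2/(i+1)$. Using your exponent $2/(i+1)$, the raw threshold is $\frac{2(1+\epsilon-\xi)}{\epsilon-\xi}-1=\frac{2}{\epsilon-\xi}+1$, which after $\xi=s\epsilon/(1+\epsilon)$ becomes $\frac{2(1+\epsilon)}{\epsilon(1+\epsilon-s)}+1$; this does \emph{not} collapse to $\frac{2}{(1-s)\epsilon}$ and can strictly exceed it --- for instance $\epsilon=0.1$, $s=0.01$ gives roughly $21.2 > 20.2$. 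With the corrected exponent $2/(i+2)$ the threshold drops to $\frac{2}{\epsilon-\xi}=\frac{2(1+\epsilon)}{\epsilon(1+\epsilon-s)}$, and the verification $\frac{2(1+\epsilon)}{\epsilon(1+\epsilon-s)}\leq\frac{2}{(1-s)\epsilon}$ reduces to $(1+\epsilon)(1-s)\leq 1+\epsilon-s$, i.e., $-s\epsilon\leq 0$, which always holds. So after this off-by-one fix, the argument you outline does prove the theorem --- in fact with a constant slightly better than $z$ --- but the final simplification is not the purely formal step you expected; the base constant of the telescoping matters.
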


We can simply set $s$ to be any constant in $(0,1)$; for instance, if $s=1/3$, the core-set size will be bounded by $z=3/\epsilon$. Since $|T|\leq z$ in each iteration, the total running time is $O\Big(z\big(|P|d+\frac{1}{\xi^2}zd\big)\Big)=O\Big(\frac{1}{\epsilon}\big(|P|+\frac{1}{\epsilon^3}\big)d\Big)$. 


\begin{remark}
\label{rem-newbc}
We also want to emphasize a simple observation on the above core-set construction procedure, which will be used in our algorithms and analyses later on. The algorithm always selects the farthest point to $o_i$ in each iteration. However, this is actually not necessary. As long as the selected point has distance at least $(1+\epsilon)\mathbf{Rad}(P)$, the result presented in Theorem~\ref{the-newbc} is still true.
 If no such a point exists ({\em i.e.,} $P\setminus \mathbb{B}\big(o_i, (1+\epsilon)\mathbf{Rad}(P)\big)=\emptyset$), a $(1+\epsilon)$-radius approximate MEB ({\em i.e.,} the ball $\mathbb{B}\big(o_i, (1+\epsilon)\mathbf{Rad}(P)\big)$) has been already  obtained.  
\end{remark}

\begin{remark}[kernels]
\label{rem-newbc2}
If each point $p\in P$ is mapped to $\psi(p)$ in $\mathbb{R}^D$ by some kernel function ({\em e.g.,} as the CVM~\cite{DBLP:journals/jmlr/TsangKC05}), where $D$ could be $+\infty$, we can still run the core-set algorithm of~\cite{badoiu2003smaller}, since the algorithm only needs to compute the distances and the center $o_i$ is always a convex combination of $T$ in each iteration; instead of returning an explicit center, the algorithm will output the coefficients of the convex combination for the center. And similarly, our Algorithm~\ref{alg-meb3} presented in Section~\ref{sec-secondesa} also works fine for  kernels. \end{remark}


\section{Implication of the Stability Property}
\label{sec-stable}


%

In this section, we show an important implication of the stability property of Definition~\ref{def-stable}.

\begin{theorem}
\label{the-stable}
Assume $\epsilon, \epsilon', \beta_0\in (0,1)$. Let $P$ be an $(\epsilon^2, \beta)$-stable instance of the MEB problem with $\beta> \beta_0$, and $o$ be the center of its MEB. 
Let $\tilde{o}$ be a given point in $\mathbb{R}^d$. Assume the number $r\leq (1+\epsilon'^2)\mathbf{Rad}(P)$. If the ball $\mathbb{B}\big(\tilde{o}, r\big)$ covers at least $(1-\beta_0)n$ points from $P$, the following holds
\begin{eqnarray}
||\tilde{o}-o||&<&  (2\sqrt{2} \epsilon+\sqrt{3}\epsilon') \mathbf{Rad}(P).\label{for-stable}
\end{eqnarray}
\end{theorem}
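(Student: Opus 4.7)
My plan is to use the stability hypothesis to pass from the set $P$ covered by $\mathbb{B}(\tilde{o},r)$ only approximately, to a large subset $P'\subseteq P$ whose MEB radius is almost as large as $\mathbf{Rad}(P)$; then relate the centers $o$ and $\tilde{o}$ to the center of $\mathbf{MEB}(P')$ through a simple Pythagorean-type lemma, and finally combine everything via the triangle inequality.

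\textbf{Step 1 (extracting a stable subset).} Set $P' = P\cap \mathbb{B}(\tilde{o},r)$. By the hypothesis $|P'|\geq (1-\beta_0)n$, so the discarded set $Q=P\setminus P'$ satisfies $|Q|\leq \beta_0 n<\beta n$. Since $P$ is $(\epsilon^2,\beta)$-stable, Definition~\ref{def-stable} yields $\mathbf{Rad}(P')>(1-\epsilon^2)\mathbf{Rad}(P)$. Write $\rho=\mathbf{Rad}(P')$ and let $o'$ be the center of $\mathbf{MEB}(P')$.

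\textbf{Step 2 (a geometric lemma about a containing ball).} I will prove the following claim, which is the main geometric ingredient: \emph{if a point set $S$ is contained in $\mathbb{B}(c,\tau)$ and $\mathbf{MEB}(S)=\mathbb{B}(o^\ast,\rho)$, then $\|o^\ast-c\|\le \sqrt{\tau^2-\rho^2}$.} The proof uses the standard fact that $o^\ast$ lies in the convex hull of a set of at most $d+1$ points $\{p_i\}\subseteq S$ lying on $\partial\mathbb{B}(o^\ast,\rho)$; therefore, in the direction $c-o^\ast$, there must exist some $p_i$ with $(p_i-o^\ast)\cdot(c-o^\ast)\le 0$. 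Expanding $\|p_i-c\|^2=\rho^2-2(p_i-o^\ast)\cdot(c-o^\ast)+\|o^\ast-c\|^2\geq \rho^2+\|o^\ast-c\|^2$, and using $\|p_i-c\|\le \tau$, gives the claim.

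\textbf{Step 3 (applying the lemma twice and combining).} Applying Step~2 with $S=P'$ to the two containing balls $\mathbb{B}(o,\mathbf{Rad}(P))$ and $\mathbb{B}(\tilde{o},r)$ yields
\[
 \|o-o'\|\le \sqrt{\mathbf{Rad}(P)^2-\rho^2},\qquad \|\tilde{o}-o'\|\le \sqrt{r^2-\rho^2}.
\]
From $\rho>(1-\epsilon^2)\mathbf{Rad}(P)$ I get $\mathbf{Rad}(P)^2-\rho^2<2\epsilon^2\mathbf{Rad}(P)^2$, hence $\|o-o'\|<\sqrt{2}\,\epsilon\,\mathbf{Rad}(P)$. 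From $r\le (1+\epsilon'^2)\mathbf{Rad}(P)$ and the same lower bound on $\rho$, a routine expansion gives $r^2-\rho^2\le (2\epsilon^2+2\epsilon'^2+\epsilon'^4)\mathbf{Rad}(P)^2\le (2\epsilon^2+3\epsilon'^2)\mathbf{Rad}(P)^2$, and then $\sqrt{a+b}\le\sqrt{a}+\sqrt{b}$ gives $\|\tilde{o}-o'\|\le (\sqrt{2}\epsilon+\sqrt{3}\epsilon')\mathbf{Rad}(P)$. The triangle inequality $\|\tilde{o}-o\|\le\|o-o'\|+\|\tilde{o}-o'\|$ then produces the claimed bound $(2\sqrt{2}\epsilon+\sqrt{3}\epsilon')\mathbf{Rad}(P)$.

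\textbf{Expected obstacle.} The only non-routine step is Step~2, the Pythagorean-type lemma bounding the distance between the center of a containing ball and the center of the MEB of the contained set. Once this is in hand, the rest is bookkeeping: the stability assumption controls $\mathbf{Rad}(P)^2-\rho^2$, the approximation hypothesis on $r$ controls $r^2-\rho^2$, and a triangle inequality finishes. I do not foresee any difficulty with the strictness ``$<$'' in the final bound, since the stability definition already gives $\mathbf{Rad}(P')>(1-\epsilon^2)\mathbf{Rad}(P)$ strictly when $|Q|<\beta n$.
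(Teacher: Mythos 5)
Your proof is correct, and while the high-level decomposition is the same as the paper's — set $P'=P\cap\mathbb{B}(\tilde{o},r)$, bridge $\tilde{o}$ to $o$ through the center $o'$ of $\mathbf{MEB}(P')$, and bound each leg by a Pythagorean inequality using the stability hypothesis — you unify the two bounds into a single argument where the paper uses two distinct ones. For $\|o'-o\|$ the paper (Lemma~\ref{lem-stable1}) argues with a case split: when $\mathbf{MEB}(P')\subseteq\mathbf{MEB}(P)$ it bounds the centers' distance by the radii difference, and otherwise it picks a point $a$ on the intersection of the two bounding spheres and shows $\angle ao'o\geq\pi/2$ by a contradiction argument about shrinking the enclosing ball. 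For $\|\tilde{o}-o'\|$ the paper (Lemma~\ref{lem-stable2}) invokes the half-space lemma of B\u{a}doiu~et~al.\ (any closed half-space through the MEB center meets a boundary contact point) to produce an obtuse angle. Your Step~2 lemma — if $S\subseteq\mathbb{B}(c,\tau)$ and $\mathbf{MEB}(S)=\mathbb{B}(o^\ast,\rho)$ then $\|o^\ast-c\|\leq\sqrt{\tau^2-\rho^2}$, proven via the convex-hull characterization of the MEB center — is an analytic restatement of the same geometric fact, and it subsumes both of the paper's lemmas: applied to $\mathbb{B}(o,\mathbf{Rad}(P))$ it gives $\|o'-o\|$ without the case split (in the nested case it yields a slightly weaker but still sufficient bound than $\mathbf{Rad}(P)-\rho$), and applied to $\mathbb{B}(\tilde{o},r)$ it recovers exactly the paper's second bound. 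The subsequent arithmetic matches the paper and the constants come out identically, so the resulting proof is a genuine streamlining of the published one.
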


%

Theorem~\ref{the-stable} indicates that if a ball covers a large enough subset of $P$ and its radius is bounded, its center should be close to the center of $\mathbf{MEB}(P)$. 
  Let $P'=\mathbb{B}\big(\tilde{o}, r\big)\cap P$, and assume $o'$ is the center of $\mathbf{MEB}(P')$.
To bound the distance between $\tilde{o}$ and $o$, we bridge them by the point $o'$ (since $||\tilde{o}-o||\leq ||\tilde{o}-o'||+||o'-o||$).  
The following are two key lemmas for proving Theorem~\ref{the-stable}.

\begin{lemma}
\label{lem-stable1}
The distance $||o'-o||\leq \sqrt{2}\epsilon \mathbf{Rad}(P)$.
\end{lemma}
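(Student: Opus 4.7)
The plan is to bound $\|o' - o\|$ via a classical geometric fact about the MEBs of nested point sets: \emph{for any finite sets $T \subseteq S \subset \mathbb{R}^d$ with $\mathbf{MEB}(T) = \mathbb{B}(c_T, r_T)$ and $\mathbf{MEB}(S) = \mathbb{B}(c_S, r_S)$, one has $r_S^2 \geq r_T^2 + \|c_S - c_T\|^2$.} This is essentially the same statement as the second inequality of (\ref{for-bc--2}) (i.e., Lemma~2.1 of~\cite{badoiu2003smaller}), but applied in a single step to the inclusion $T \subseteq S$ rather than iteratively to the addition of one point at a time.

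To establish the claim, I will invoke the optimality characterization of the MEB: the center $c_T$ is a convex combination of those $p \in T$ with $\|p - c_T\| = r_T$. Writing $c_T = \sum_i \lambda_i p_i$ with $\lambda_i \geq 0$, $\sum_i \lambda_i = 1$, and each $\|p_i - c_T\| = r_T$, and taking inner product with $u := c_T - c_S$, we get $0 = \sum_i \lambda_i \langle p_i - c_T, u\rangle$, so at least one $p_i \in T$ satisfies $\langle p_i - c_T, c_T - c_S\rangle \geq 0$. Expanding
\[
\|p_i - c_S\|^2 = \|p_i - c_T\|^2 + 2\langle p_i - c_T, c_T - c_S\rangle + \|c_T - c_S\|^2 \;\geq\; r_T^2 + \|c_S - c_T\|^2,
\]
and using $p_i \in T \subseteq S$ (so $\|p_i - c_S\| \leq r_S$), the claim follows.

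Now I apply the claim with $T = P'$ and $S = P$, obtaining $\|o - o'\|^2 \leq \mathbf{Rad}(P)^2 - \mathbf{Rad}(P')^2$. Since the hypotheses of Theorem~\ref{the-stable} give $|P'| \geq (1 - \beta_0)n$, we have $|P \setminus P'| \leq \beta_0 n < \beta n$, so the $(\epsilon^2, \beta)$-stability of $P$ forces $\mathbf{Rad}(P') > (1 - \epsilon^2)\mathbf{Rad}(P)$. Squaring and simplifying, $\mathbf{Rad}(P)^2 - \mathbf{Rad}(P')^2 < (2\epsilon^2 - \epsilon^4)\mathbf{Rad}(P)^2 \leq 2\epsilon^2 \mathbf{Rad}(P)^2$, and therefore $\|o - o'\| < \sqrt{2}\,\epsilon\,\mathbf{Rad}(P)$.

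The main subtlety I foresee is getting the correct sign of the direction $u$ in the optimality argument: one must pick $u = c_T - c_S$ (pointing from $c_S$ back toward $c_T$), so that the cross term $2\langle p_i - c_T, c_T - c_S\rangle$ comes out non-negative and the expansion of $\|p_i - c_S\|^2$ becomes a \emph{lower} bound rather than an upper bound. Once that sign is set up correctly, the remainder is elementary algebra and the stability hypothesis plugs in directly.
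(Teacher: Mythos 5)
Your proof is correct, and it takes a genuinely different route from the paper's. The paper proves Lemma~\ref{lem-stable1} by a case split on whether $\mathbf{MEB}(P')\subset\mathbf{MEB}(P)$: in the contained case it uses the simple bound $\|o-o'\|\leq\mathbf{Rad}(P)-\mathbf{Rad}(P')\leq\epsilon^2\mathbf{Rad}(P)$, and in the non-contained case it proves a geometric claim ($\angle ao'o\geq\pi/2$ at a point $a$ on the intersection of the two spheres, via a contradiction with the minimality of $\mathbf{MEB}(P')$) and then reads off $\|o-o'\|\leq\sqrt{\mathbf{Rad}(P)^2-\mathbf{Rad}(P')^2}$ from the right angle. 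You instead prove the Pythagorean inequality $\mathbf{Rad}(S)^2\geq\mathbf{Rad}(T)^2+\|c_S-c_T\|^2$ for arbitrary nested $T\subseteq S$ in one shot, using the optimality characterization of the MEB center as a convex combination of the support points and a sign argument on the cross term. This avoids the case analysis entirely: your inequality subsumes both of the paper's cases (in the ``contained'' case the paper's linear bound $\epsilon^2\mathbf{Rad}(P)$ is actually weaker than what your inequality yields there). Your argument also isolates a clean, reusable lemma about nested MEBs — a genuine strengthening of the one-point-at-a-time inequality (\ref{for-bc--2}) from~\cite{badoiu2003smaller}, which does not chain into the general version because the intermediate center shifts only telescope in norm, not in squared norm. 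The paper's proof, in exchange, is more elementary-geometric and does not invoke the Carath\'{e}odory-type characterization of the MEB center. One small observation: the stability hypothesis as stated gives a strict inequality $\mathbf{Rad}(P')>(1-\epsilon^2)\mathbf{Rad}(P)$, so your conclusion $\|o-o'\|<\sqrt{2}\,\epsilon\,\mathbf{Rad}(P)$ is in fact (harmlessly) slightly stronger than the lemma's stated $\leq$.
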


\begin{proof}

We consider two cases: $\mathbf{MEB}(P')$ is totally covered by $\mathbf{MEB}(P)$ and otherwise. 
For the first case (see Figure~\ref{fig-lem1}(a)), it is easy to see that 
\begin{eqnarray}
||o'-o||\leq \mathbf{Rad}(P)-(1-\epsilon^2) \mathbf{Rad}(P)=\epsilon^2 \mathbf{Rad}(P)<\sqrt{2}\epsilon \mathbf{Rad}(P), \label{for-lem-stable1-1}
\end{eqnarray}
where the first inequality comes from the fact that $\mathbf{MEB}(P')$ has radius at least $(1-\epsilon^2) \mathbf{Rad}(P)$ (Definition~\ref{def-stable}). Thus, we can focus on the second case below.

Let $a$ be any point located on the intersection of the two spheres of $\mathbf{MEB}(P')$ and $\mathbf{MEB}(P)$. Then we have the following claim.

\begin{figure}
  \begin{center}
\includegraphics[width=0.9\textwidth]{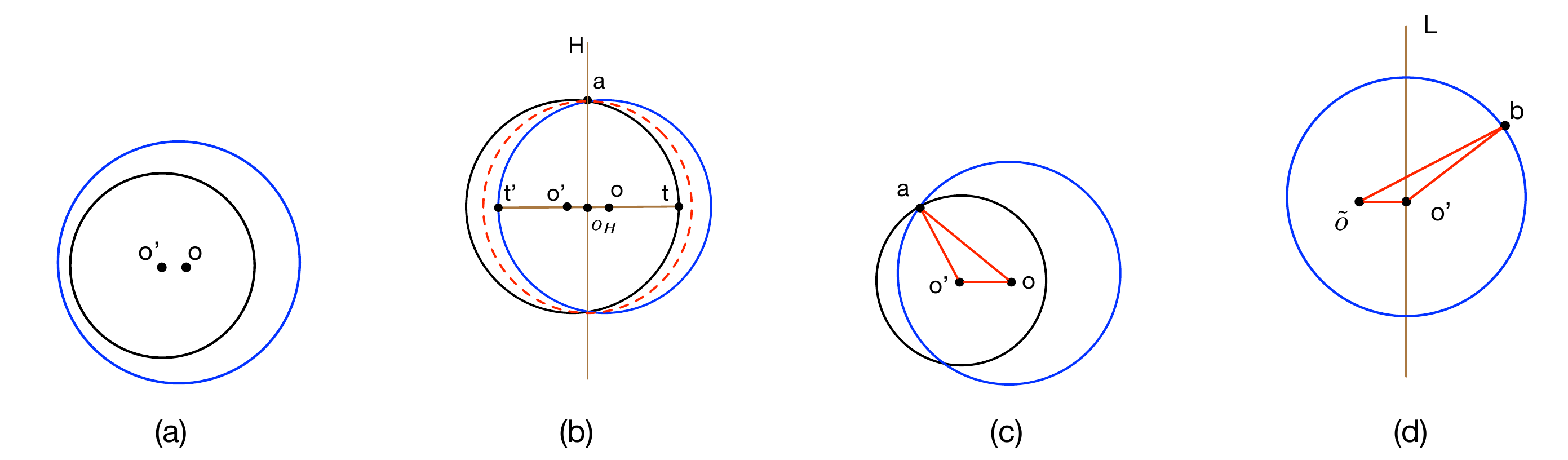}
   \end{center}
   \vspace{-0.1in}
  \caption{(a) The case $MEB(P')\subset MEB(P)$; (b) an illustration under the assumption $\angle ao'o< \pi/2$ in the proof of Claim~\ref{cla-angle}; (c) the angle $\angle ao'o\geq \pi/2$; (d) an illustration of Lemma~\ref{lem-stable2}.}
   \label{fig-lem1}
     \vspace{-0.12in}
\end{figure}

\begin{claim}
\label{cla-angle}
The angle $\angle ao'o\geq \pi/2$.
\end{claim}
\begin{proof}


Suppose  that $\angle ao'o< \pi/2$. Note that $\angle aoo'$ is always smaller than $\pi/2$ since $||o-a||=\mathbf{Rad}(P)\geq \mathbf{Rad}(P')=||o'-a||$. Therefore, $o$ and $o'$ are separated by the hyperplane $H$ that is  orthogonal 
to the segment $\overline{o'o}$ and passes through the point $a$. See Figure~\ref{fig-lem1}(b). 
Now we  show that $P'$ can be covered by a ball smaller than $\mathbf{MEB}(P')$. Let $o_H$ be the point $H\cap \overline{o'o}$, and $t$ ({\em resp.,} $t'$) be the point collinear with $o$ and $o'$ on the right side of the sphere of $\mathbf{MEB}(P')$ ({\em resp.,} left side of the sphere of $\mathbf{MEB}(P)$; see Figure ~\ref{fig-lem1}(b)). Then, we have
\begin{eqnarray}
||t-o_H||+||o_H-o'||&=&||t-o'||=||a-o'||<||o'-o_H||+||o_H-a||\nonumber\\
\Longrightarrow ||t-o_H||&<&||o_H-a||.\label{cla-1}
\end{eqnarray}
Similarly, we have 
$||t'-o_H||<||o_H-a||$. 
Consequently, $\mathbf{MEB}(P)\cap \mathbf{MEB}(P')$ is covered by the ball $\mathbb{B}(o_H, ||o_H-a||)$ (the ``red dotted'' ball in Figure ~\ref{fig-lem1}(b)). Further, because $P'$ is covered by  $\mathbf{MEB}(P)\cap \mathbf{MEB}(P')$ and $||o_H-a||<||o'-a||=\mathbf{Rad}(P')$, $P'$ is covered by the ball $\mathbb{B}(o_H, ||o_H-a||)$ that is smaller than $\mathbf{MEB}(P')$. This  contradicts to the fact that $\mathbf{MEB}(P')$ is the minimum enclosing ball of $P'$. Thus, the claim $\angle ao'o\geq \pi/2$ is true.
 \qed\end{proof}

Given Claim~\ref{cla-angle}, we know that $||o'-o||\leq \sqrt{\big(\mathbf{Rad}(P)\big)^2-\big(\mathbf{Rad}(P')\big)^2}$. See Figure~\ref{fig-lem1}(c). Moreover, Definition~\ref{def-stable} implies that $\mathbf{Rad}(P')\geq (1-\epsilon^2) \mathbf{Rad}(P)$. Therefore, we have 
\begin{eqnarray}
||o'-o|| \leq \sqrt{\big(\mathbf{Rad}(P)\big)^2-\big((1-\epsilon^2) \mathbf{Rad}(P)\big)^2}\leq\sqrt{2}\epsilon \mathbf{Rad}(P).
\end{eqnarray}
 \qed\end{proof}

\begin{lemma}
\label{lem-stable2}
The distance $||\tilde{o}-o'||< (\sqrt{2}\epsilon+\sqrt{3}\epsilon')  \mathbf{Rad}(P)$.
\end{lemma}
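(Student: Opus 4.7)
My plan is to bound $||\tilde{o}-o'||^2$ by $r^2-\rho'^2$, where $\rho':=\mathbf{Rad}(P')$, and then substitute the two radius estimates. First, the stability hypothesis applied with $Q:=P\setminus P'$ (which satisfies $|Q|\leq \beta_0 n<\beta n$) yields $\rho'>(1-\epsilon^2)\mathbf{Rad}(P)$ via Definition~\ref{def-stable}. Second, since $\mathbb{B}(\tilde{o},r)$ is \emph{some} ball covering $P'$ and $\mathbf{MEB}(P')$ is the smallest such ball, I automatically get $\rho'\leq r\leq (1+\epsilon'^2)\mathbf{Rad}(P)$.

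For the geometric inequality $||\tilde{o}-o'||^2\leq r^2-\rho'^2$, I split into cases based on the relative position of $\mathbf{MEB}(P')$ and $\mathbb{B}(\tilde{o},r)$. If $\mathbf{MEB}(P')\subseteq \mathbb{B}(\tilde{o},r)$, containment forces $||\tilde{o}-o'||\leq r-\rho'\leq\sqrt{r^2-\rho'^2}$ (using $\rho'\leq r$). If $\mathbb{B}(\tilde{o},r)\subseteq \mathbf{MEB}(P')$, then $\mathbb{B}(\tilde{o},r)$ is a ball of radius $r\leq\rho'$ covering $P'$, so minimality and uniqueness of the MEB force $\tilde{o}=o'$ and the inequality holds trivially. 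In the remaining case neither ball contains the other and, since $P'\subseteq \mathbf{MEB}(P')\cap\mathbb{B}(\tilde{o},r)\neq\emptyset$, the two spheres must intersect at some point $a$. I claim that $\angle ao'\tilde{o}\geq\pi/2$, which is a direct analogue of Claim~\ref{cla-angle}. Because $||a-o'||=\rho'\leq r=||a-\tilde{o}||$, the angle at $\tilde{o}$ is bounded by the angle at $o'$, so if the latter were strictly acute, both would be, placing $o'$ and $\tilde{o}$ on opposite sides of the hyperplane $H$ through $a$ perpendicular to $\overline{o'\tilde{o}}$. The projection argument used in Lemma~\ref{lem-stable1} then shows $\mathbf{MEB}(P')\cap \mathbb{B}(\tilde{o},r)\subseteq \mathbb{B}(o_H,h)$, where $o_H=H\cap\overline{o'\tilde{o}}$ and $h=||o_H-a||$. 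Since $o'\neq o_H$, the right-triangle identity $h^2=\rho'^2-||o'-o_H||^2$ gives $h<\rho'$, so $\mathbb{B}(o_H,h)$ is a strictly smaller ball covering $P'$, contradicting the minimality of $\mathbf{MEB}(P')$.

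Given the angle bound, the law of cosines in $\triangle a o'\tilde{o}$ yields $r^2\geq\rho'^2+||\tilde{o}-o'||^2$, completing the proof of $||\tilde{o}-o'||^2\leq r^2-\rho'^2$ in all cases. Substituting the radius bounds I obtain $r^2-\rho'^2\leq\bigl((1+\epsilon'^2)^2-(1-\epsilon^2)^2\bigr)(\mathbf{Rad}(P))^2=(2\epsilon^2+2\epsilon'^2+\epsilon'^4-\epsilon^4)(\mathbf{Rad}(P))^2\leq(2\epsilon^2+3\epsilon'^2)(\mathbf{Rad}(P))^2$, where the last step uses $\epsilon'^4\leq\epsilon'^2$ since $\epsilon'\in(0,1)$ and drops the negative $-\epsilon^4$ term. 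Taking square roots and applying the strict inequality $\sqrt{a^2+b^2}<a+b$ for $a,b>0$ with $a=\sqrt{2}\epsilon$ and $b=\sqrt{3}\epsilon'$ delivers the claimed bound $||\tilde{o}-o'||<(\sqrt{2}\epsilon+\sqrt{3}\epsilon')\mathbf{Rad}(P)$. The main obstacle I anticipate is the angle claim in the third sub-case: the argument must go through even though $\mathbb{B}(\tilde{o},r)$ is not itself an MEB, but this is fine because the contradiction only invokes the minimality of $\mathbf{MEB}(P')$, so the geometry of Lemma~\ref{lem-stable1} transfers essentially verbatim.
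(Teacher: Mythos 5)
Your proof is correct, but it takes a genuinely different route to the key geometric estimate $||\tilde{o}-o'||^{2}\leq r^{2}-(\mathbf{Rad}(P'))^{2}$. You re-derive this from scratch by a case analysis on the relative position of $\mathbf{MEB}(P')$ and $\mathbb{B}(\tilde{o},r)$, with the main case handled by transplanting the projection-contradiction argument of Claim~\ref{cla-angle} (showing that an acute angle at $o'$ would let one shrink $\mathbf{MEB}(P')$). The paper instead invokes~\cite[Lemma~2.2]{BHI} (restated in the appendix): every closed half-space containing the center of an MEB contains a point of the underlying set lying exactly on the bounding sphere. Applied to the half-space bounded by the hyperplane $L$ through $o'$ orthogonal to $\overline{\tilde{o}o'}$ on the side away from $\tilde{o}$, this produces a single point $b\in P'$ with $||b-o'||=\mathbf{Rad}(P')$ and $\angle bo'\tilde{o}\geq\pi/2$, and then $||\tilde{o}-o'||^{2}\leq||\tilde{o}-b||^{2}-||b-o'||^{2}\leq r^{2}-(\mathbf{Rad}(P'))^{2}$ drops out of the law of cosines with $||\tilde{o}-b||\leq r$. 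The paper's route is shorter because the BHI lemma packages exactly the MEB extremality property your Case~3 re-proves, and it sidesteps the containment sub-cases entirely; your route is more self-contained and structurally parallel to the proof of Lemma~\ref{lem-stable1}. Both end up bounding $r^{2}-(\mathbf{Rad}(P'))^{2}$ by $\bigl((1+\epsilon'^{2})^{2}-(1-\epsilon^{2})^{2}\bigr)(\mathbf{Rad}(P))^{2}$ and close with the same elementary simplification $\sqrt{2\epsilon^{2}+3\epsilon'^{2}}<\sqrt{2}\,\epsilon+\sqrt{3}\,\epsilon'$.
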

\begin{proof}

Let $L$ be the hyperplane orthogonal to
the segment $\overline{\tilde{o}o'}$ and passing through the center $o'$. Suppose $\tilde{o}$ is located on the left side of $L$. Then, there always exists a point $b\in P'$ located on the right closed semi-sphere of $\mathbf{MEB}(P')$ divided by $L$ (this result is from~\cite[Lemma 2.2]{BHI}; for completeness, we state the lemma in Section~\ref{sec-bhi}). See Figure \ref{fig-lem1}(d). That is, the angle $\angle bo'\tilde{o}\geq \pi/2$. As a consequence, we have 
\begin{eqnarray}
||\tilde{o}-o'||\leq\sqrt{||\tilde{o}-b||^2-||b-o'||^2}. \label{for-lem-stable2-1}
\end{eqnarray}
Moreover, since $||\tilde{o}-b||\leq r\leq (1+\epsilon'^2)\mathbf{Rad}(P)$ and $||b-o'||= \mathbf{Rad}(P')\geq (1-\epsilon^2) \mathbf{Rad}(P)$, (\ref{for-lem-stable2-1}) implies that $||\tilde{o}-o'||\leq \sqrt{(1+\epsilon'^2)^2-(1-\epsilon^2)^2} \mathbf{Rad}(P)$, where this upper bound is equal to 
\begin{eqnarray}
\sqrt{2\epsilon'^2+\epsilon'^4+2\epsilon^2-\epsilon^4} \mathbf{Rad}(P)<\sqrt{3\epsilon'^2+2\epsilon^2} \mathbf{Rad}(P)<(\sqrt{2}\epsilon+\sqrt{3}\epsilon') \mathbf{Rad}(P).
\end{eqnarray} 
 \qed\end{proof}

By triangle inequality, Lemmas~\ref{lem-stable1} and~\ref{lem-stable2}, we immediately have 
\begin{eqnarray}
||\tilde{o}-o||&\leq& ||\tilde{o}-o'||+||o'-o||< (2\sqrt{2} \epsilon+\sqrt{3}\epsilon')\mathbf{Rad}(P).
\end{eqnarray}
This completes the proof of Theorem~\ref{the-stable}.

\section{Sublinear Time Algorithms for MEB under Stability Assumption}
\label{sec-sub}
Suppose $\epsilon\in (0,1)$. We assume that the given instance $P$ is an $(\epsilon^2, \beta)$-stable instance where $\beta$ is larger than a given lower bound $\beta_0$ ({\em i.e.,} $\beta>\beta_0$). Using Theorem~\ref{the-stable}, we present two different sublinear time sampling algorithms for computing MEB. Following most of the articles on sublinear time algorithms ({\em e.g.,}~\cite{meyerson2004k,mishra2001sublinear,czumaj2004sublinear}), in each sampling step of our algorithms, we always take the sample \textbf{independently and uniformly at random}.


 \subsection{The First Algorithm}
 \label{sec-firstesa}
 
  \begin{figure}[h] 
\begin{center}
    \includegraphics[width=0.22\textwidth]{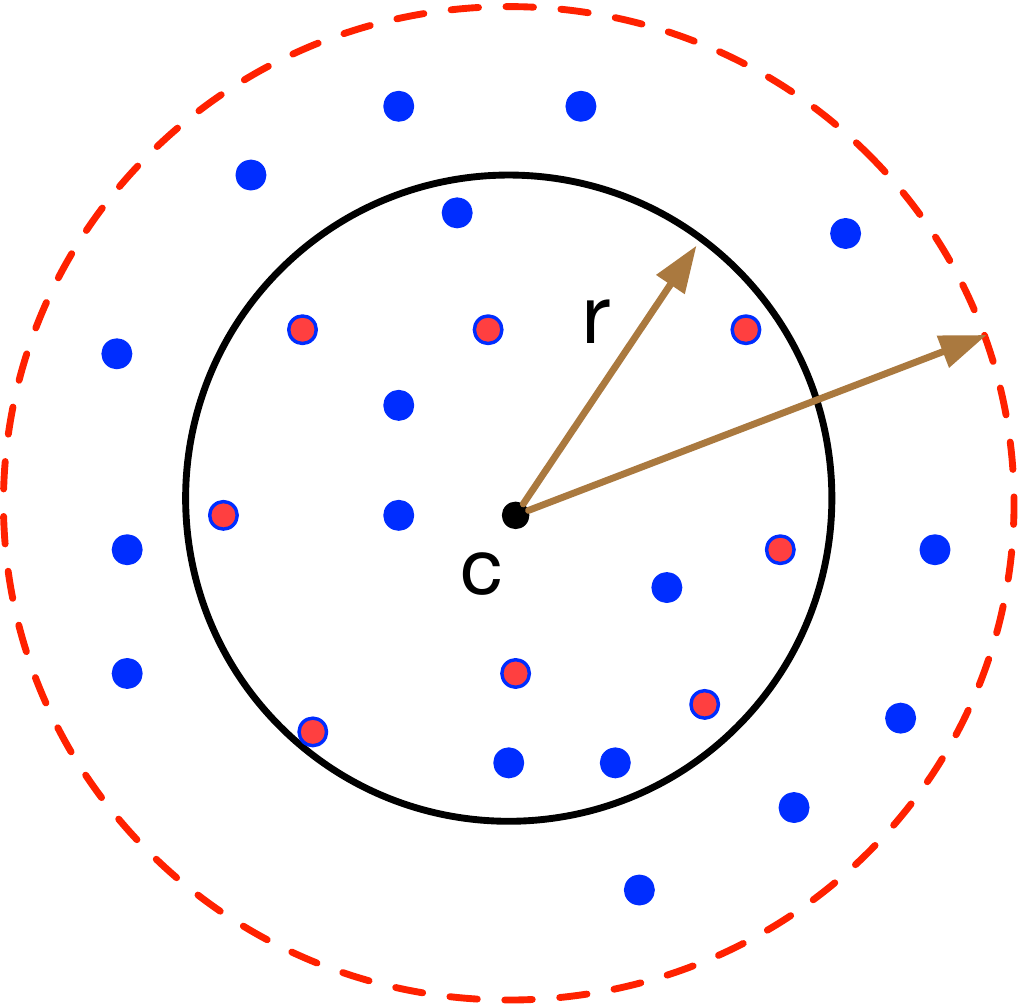}  
    \end{center}
\vspace{-0.13in}
  \caption{An illustration for the first sampling algorithm. The red points are the samples; we expand $\mathbb{B}(c, r)$ slightly and the larger ball is a radius-approximate MEB of the whole input point set.}     
   \label{fig-illustratefirst}
  \vspace{-0.1in}
\end{figure}

The first algorithm is based on the theory of VC dimension and $\epsilon$-nets~\cite{vapnik2015uniform,haussler1987}. 
Roughly speaking, we compute an approximate MEB of a small random sample (say, $\mathbb{B}(c, r)$), and expand the ball slightly; then we prove that this expanded ball is an approximate MEB of the whole data set (see Figure~\ref{fig-illustratefirst}).  Our key idea is to show that $\mathbb{B}(c, r)$ covers at least $(1-\beta_0 )n$ points and therefore $c$ is close to the optimal center by Theorem~\ref{the-stable}. 
As emphasized in Section~\ref{sec-overview}, our result is a single-criterion approximation. If simply applying the uniform sample idea without the stability assumption (as the ideas in~\cite{DBLP:conf/focs/HuangJLW18,DBLP:journals/corr/abs-1901-08219}), it will yield a bi-criteria approximation where the ball has to cover less than $n$ points for achieving the desired bounded radius. 

\renewcommand{\algorithmicrequire}{\textbf{Input:}}
\renewcommand{\algorithmicensure}{\textbf{Output:}}
\begin{algorithm}
   \caption{\textbf{MEB Algorithm \Rmnum{1}}}
   \label{alg-meb1}
\begin{algorithmic}[1]
\REQUIRE Two parameters $0<\epsilon, \eta<1$; an $(\epsilon^2, \beta)$-stable instance $P$ of MEB problem in $\mathbb{R}^d$, where $\beta$ is larger than a given lower bound $\beta_0>0$.
\STATE Sample a set $S$ of $\Theta(\frac{1}{\beta_0}\cdot\max\{\log\frac{1}{\eta},d\log \frac{d}{\beta_0}\})$ points from $P$ uniformly at random.
\STATE Apply any approximate MEB algorithm (such as the core-set based algorithm~\cite{badoiu2003smaller}) to compute a $(1+\epsilon^2)$-radius approximate MEB of $S$, and let the obtained ball be $\mathbb{B}(c, r)$.
\STATE Output the ball $\mathbb{B}\big(c,\frac{1+(2\sqrt{2}+\sqrt{3}) \epsilon}{1-\epsilon^2}r\big)$.
\end{algorithmic}
\end{algorithm}

\begin{theorem}
\label{the-meb1}
With  probability $1-\eta$, Algorithm~\ref{alg-meb1} returns a $\lambda$-radius approximate MEB of $P$, where
\begin{eqnarray}
\lambda=\frac{\big(1+( 2\sqrt{2}+\sqrt{3}) \epsilon\big)(1+\epsilon^2)}{1-\epsilon^2} =1+O( \epsilon).
\end{eqnarray}
\end{theorem}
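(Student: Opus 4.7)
The plan is to combine three ingredients: a standard VC-dimension / $\beta_0$-net guarantee for the sample $S$, the stability property of $P$, and the structural implication proved in Theorem~\ref{the-stable}.

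First, I would set up the sampling step. The range space whose ranges are (complements of) Euclidean balls in $\mathbb{R}^d$ has VC dimension $O(d)$, and the classical $\epsilon$-net theorem of Haussler--Welzl then tells us that a uniform sample of size $\Theta\!\big(\tfrac{1}{\beta_0}\max\{\log\tfrac{1}{\eta},\, d\log\tfrac{d}{\beta_0}\}\big)$ is a $\beta_0$-net for this range space with probability at least $1-\eta$. Condition on this event. Since the approximate MEB $\mathbb{B}(c,r)$ contains $S$ by construction, its complement has empty intersection with $S$; hence by the $\beta_0$-net property, $|P\setminus\mathbb{B}(c,r)|\le\beta_0 n$, i.e., $\mathbb{B}(c,r)$ already covers at least $(1-\beta_0)n$ points of $P$.

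Next, I would estimate $r$ from both sides. Because $S\subseteq P$ and $\mathbb{B}(c,r)$ is a $(1+\epsilon^2)$-radius approximate MEB of $S$, we have $r \le (1+\epsilon^2)\mathbf{Rad}(S)\le(1+\epsilon^2)\mathbf{Rad}(P)$. For the lower bound I would invoke the $(\epsilon^2,\beta)$-stability assumption: letting $Q=P\setminus\mathbb{B}(c,r)$, we have $|Q|\le\beta_0 n<\beta n$, so $\mathbf{Rad}(P\setminus Q)>(1-\epsilon^2)\mathbf{Rad}(P)$; and since $\mathbb{B}(c,r)$ encloses $P\setminus Q$, this yields $r>(1-\epsilon^2)\mathbf{Rad}(P)$. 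Now I can apply Theorem~\ref{the-stable} with $\tilde o=c$ and $\epsilon'=\epsilon$ — both hypotheses (radius bound and covering at least $(1-\beta_0)n$ points) have just been verified — to conclude
\[
\|c-o\| < (2\sqrt2\,\epsilon+\sqrt3\,\epsilon)\,\mathbf{Rad}(P) = (2\sqrt2+\sqrt3)\,\epsilon\,\mathbf{Rad}(P).
\]

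Finally I would check that the returned ball $\mathbb{B}\!\big(c,\tfrac{1+(2\sqrt2+\sqrt3)\epsilon}{1-\epsilon^2}r\big)$ meets both requirements. For every $p\in P$, the triangle inequality gives $\|p-c\|\le\mathbf{Rad}(P)+\|c-o\|<\big(1+(2\sqrt2+\sqrt3)\epsilon\big)\mathbf{Rad}(P)$, and using $r>(1-\epsilon^2)\mathbf{Rad}(P)$ this last quantity is strictly less than the output radius, so $P$ is covered. The output radius is at most $\tfrac{1+(2\sqrt2+\sqrt3)\epsilon}{1-\epsilon^2}\cdot(1+\epsilon^2)\mathbf{Rad}(P)=\lambda\,\mathbf{Rad}(P)$, which gives the claimed bound.

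The main obstacle I anticipate is the lower-bound step $r>(1-\epsilon^2)\mathbf{Rad}(P)$: without it, the chosen expansion factor $1/(1-\epsilon^2)$ would not be large enough to convert the additive displacement $\|c-o\|$ into a multiplicative guarantee relative to $r$. This is precisely where the stability assumption is indispensable — it is what links the $\beta_0$-net covering guarantee back to a quantitative bound on $r$, and thereby ties together the VC-type sampling and the structural Theorem~\ref{the-stable}.
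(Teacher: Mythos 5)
Your proof is correct and follows essentially the same route as the paper: the $\beta_0$-net guarantee to show $\mathbb{B}(c,r)$ covers $(1-\beta_0)n$ points, the two-sided bound $(1-\epsilon^2)\mathbf{Rad}(P) < r \le (1+\epsilon^2)\mathbf{Rad}(P)$ (upper side from $S\subseteq P$, lower side from stability applied to $P\setminus\mathbb{B}(c,r)$), Theorem~\ref{the-stable} with $\epsilon'=\epsilon$ to bound $\|c-o\|$, and the triangle inequality together with the lower bound on $r$ to verify that the expanded ball covers $P$ with the stated radius ratio. You make the lower-bound step slightly more explicit than the paper does, but the argument and its ingredients coincide.
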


Before proving Theorem~\ref{the-meb1}, we prove the following lemma first.
\begin{lemma}
\label{lem-net}
Let $S$ be a set of $\Theta(\frac{1}{\beta_0}\cdot\max\{\log\frac{1}{\eta},d\log \frac{d}{\beta_0}\})$ points sampled randomly and independently from a given point set $P\subset\mathbb{R}^d$, and $B$ be any ball covering $S$. Then, with probability $1-\eta$, $|B\cap P|\geq (1-\beta_0)|P|$.
\end{lemma}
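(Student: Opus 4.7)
The plan is to invoke the classical $\epsilon$-net theorem from VC theory, applied to the range space whose ranges are the \emph{complements} of Euclidean balls.

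First, I would set up the range space $(P,\mathcal{R})$, where
\[
\mathcal{R} \;=\; \bigl\{\, P\setminus B \;:\; B \text{ is a closed ball in } \mathbb{R}^d\,\bigr\}.
\]
It is well known that the family of closed balls in $\mathbb{R}^d$ has VC dimension exactly $d+1$, and taking complements does not change the VC dimension (the ranges of a set system and of its complement system shatter the same subsets). Hence the VC dimension of $(P,\mathcal{R})$ is at most $d+1 = O(d)$.

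Next, I would apply the $\epsilon$-net theorem of Haussler and Welzl~\cite{haussler1987}, with parameter $\epsilon := \beta_0$ and failure probability $\eta$: a uniform independent sample of size
\[
\Theta\!\left(\tfrac{1}{\beta_0}\cdot \max\!\left\{\log\tfrac{1}{\eta},\; d\log\tfrac{d}{\beta_0}\right\}\right)
\]
from $P$ is, with probability at least $1-\eta$, a $\beta_0$-net for $(P,\mathcal{R})$. This sample size matches the one stated in the lemma. By definition, being a $\beta_0$-net means that every range $R\in\mathcal{R}$ with $|R|\geq \beta_0 |P|$ satisfies $R\cap S\neq\emptyset$.

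Finally, I would derive the conclusion by contradiction. Let $B$ be any ball that covers $S$, and suppose for contradiction that $|B\cap P| < (1-\beta_0)|P|$. Then the range $R := P\setminus B \in \mathcal{R}$ satisfies $|R| > \beta_0|P|$, so the net property forces $R\cap S \neq \emptyset$. But $S\subseteq B$ means $S\cap(P\setminus B)=\emptyset$, a contradiction. Hence $|B\cap P|\geq (1-\beta_0)|P|$, as claimed.

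The only nontrivial ingredient is the VC-dimension bound for Euclidean balls (and its preservation under complementation), both of which are standard; the rest is a direct application of the $\epsilon$-net theorem, so I do not expect any substantive obstacle.
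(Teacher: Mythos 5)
Your proposal is correct and follows essentially the same approach as the paper: both define the range space over complements of balls, invoke the Haussler--Welzl $\epsilon$-net theorem with parameter $\beta_0$ to argue $S$ is a $\beta_0$-net with probability $1-\eta$, and then derive the conclusion by contradiction from the net property. You merely make explicit the VC-dimension bound for balls and its invariance under complementation, which the paper leaves implicit.
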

\begin{proof}
Consider the range space $\Sigma=(P, \Phi)$ where each range $\phi\in \Phi$ is the complement of a ball in the space. In a range space, a subset $Y\subset P$ is a $\beta_0$-net if 
\begin{eqnarray}
\text{for any $\phi\in \Phi$, 
$\frac{|P\cap\phi|}{|P|}\geq \beta_0\Longrightarrow Y\cap\phi\neq\emptyset$. }
 \end{eqnarray}
The size $|S|=\Theta(\frac{1}{\beta_0}\cdot\max\{\log\frac{1}{\eta},d\log \frac{d}{\beta_0}\})$, and from~\cite{vapnik2015uniform,haussler1987} we know that $S$ is a $\beta_0$-net of $P$ with probability $1-\eta$. Thus, if $|B\cap P|< (1-\beta_0)|P|$, {\em i.e.,} $|P\setminus B|> \beta_0 |P|$, we have $S\cap \big(P\setminus B\big)\neq\emptyset$. This contradicts to  the fact that $S$ is covered by $B$. Consequently, $|B\cap P|\geq (1-\beta_0)|P|$.
 \qed\end{proof}

\begin{proof}(\textbf{of Theorem~\ref{the-meb1}})
Denote by $o$ the center of $\mathbf{MEB}(P)$. Since $S\subset P$ and $\mathbb{B}(c, r)$ is a $(1+\epsilon^2)$-radius approximate MEB of $S$, we know that $r\leq (1+\epsilon^2)\mathbf{Rad}(P)$. Moreover, Lemma~\ref{lem-net} implies that $|\mathbb{B}(c, r)\cap P|\geq (1-\beta_0)|P|$ with probability $1-\eta$. Suppose it is true and let $P'=\mathbb{B}(c, r)\cap P$. Then, we have the distance
\begin{eqnarray}
||c-o||\leq ( 2\sqrt{2}+\sqrt{3}) \epsilon   \mathbf{Rad}(P) \label{for-the-meb1-1}
\end{eqnarray}
via Theorem~\ref{the-stable} (we set $\epsilon'=\epsilon$). For simplicity, we use $x$ to denote $(2\sqrt{2}+\sqrt{3}) \epsilon $. The inequality (\ref{for-the-meb1-1}) implies that the point set $P$ is covered by the ball $\mathbb{B}(c, (1+x)\mathbf{Rad}(P))$. Note that we cannot directly return $\mathbb{B}(c, (1+x)\mathbf{Rad}(P))$ as the final result, since we do not know the value of $\mathbf{Rad}(P)$. Thus, we have to estimate the radius $(1+x)\mathbf{Rad}(P)$.

Since $P'$ is covered by $\mathbb{B}(c, r)$ and $|P'|\geq (1-\beta_0)|P|$, $r$ should be at least $(1-\epsilon^2)\mathbf{Rad}(P)$ due to Definition~\ref{def-stable}. Hence, we have
\begin{eqnarray}
\frac{1+x}{1-\epsilon^2}r\geq (1+x)\mathbf{Rad}(P).
\end{eqnarray}
That is, $P$ is covered by the ball $\mathbb{B}(c, \frac{1+x}{1-\epsilon^2}r)$. Moreover, the radius 
\begin{eqnarray}
 \frac{1+x}{1-\epsilon^2}r\leq  \frac{1+x}{1-\epsilon^2}(1+\epsilon^2)\mathbf{Rad}(P).  
  \end{eqnarray}
This means the ball $\mathbb{B}(c, \frac{1+x}{1-\epsilon^2}r)$ is a $\lambda$-radius approximate MEB of $P$, where
\begin{eqnarray}
\lambda&=&(1+\epsilon^2) \frac{1+x}{1-\epsilon^2}=\frac{\big(1+(2\sqrt{2}+\sqrt{3}) \epsilon \big)(1+\epsilon^2)}{1-\epsilon^2} 
\end{eqnarray}
and $\lambda=1+O( \epsilon )$ if $\epsilon$ is a fixed small number in $(0,1)$. 
\qed\end{proof}

\textbf{Running time of Algorithm~\ref{alg-meb1}.} For simplicity, we assume $\log\frac{1}{\eta}<d\log \frac{d}{\beta_0}$. If we use the core-set based algorithm~\cite{badoiu2003smaller} to compute $\mathbb{B}(c, r)$ (see Remark~\ref{rem-newbc}), the running time of Algorithm~\ref{alg-meb1} is $O\big(\frac{1}{\epsilon^2}(|S|d+\frac{1}{\epsilon^6}d)\big)=O\big(\frac{d^2}{\epsilon^2\beta_0}\log\frac{d}{\beta_0}+\frac{d}{\epsilon^8}\big)=\tilde{O}(d^2)$ where the hidden factor depends on $\epsilon$ and $\beta_0$.  
\begin{remark}
\label{rem-the-meb1}
 If the dimensionality $d$ is too high, the random projection technique {\em Johnson-Lindenstrauss (JL) transform}~\cite{dasgupta2003elementary} can be used to approximately preserve the radius of enclosing ball~\cite{DBLP:conf/compgeom/AgarwalHY07,DBLP:conf/cccg/KerberR15,DBLP:conf/compgeom/Sheehy14}. However, it is not  useful for reducing the time complexity of Algorithm~\ref{alg-meb1}. If we apply the JL-transform on the sampled $\Theta(\frac{d}{\beta_0}\log \frac{d}{\beta_0})$ points in Step 1, the JL-transform step itself already takes $\Omega(\frac{d^2}{\beta_0}\log \frac{d}{\beta_0})$ time.
\end{remark}

\subsection{The Second Algorithm}
\label{sec-secondesa}

Our first algorithm in Section~\ref{sec-firstesa} is simple, but has a sample size ({\em i.e.,} the number of sampled points) depending on the dimensionality $d$,
while \textbf{the second algorithm has a sample size independent of both $n$ and $d$} (it is particularly important when a kernel function is applied, because the new dimension could be very large or even $+\infty$). 
We briefly overview our idea first.


\vspace{0.05in}
\textbf{High level idea of the second algorithm:}  Recall our Remark~\ref{rem-newbc} (\rmnum{2}). If we know the value of $(1+\epsilon)\mathbf{Rad}(P)$, we can perform almost the same core-set construction procedure described in Theorem~\ref{the-newbc} to achieve an approximate center of $\mathbf{MEB}(P)$, where the only difference is that we add a point with distance at least $(1+\epsilon)\mathbf{Rad}(P)$ to $o_i$ in each iteration. In this way, we avoid selecting the farthest point to $o_i$, since this operation will inevitably have a linear time complexity. To implement our strategy in sublinear time, we need to determine the value of  $(1+\epsilon)\mathbf{Rad}(P)$ first. We propose Lemma~\ref{lem-upper} below to estimate the range of $\mathbf{Rad}(P)$, and then perform a binary search on the range to determine the value of  $(1+\epsilon)\mathbf{Rad}(P)$ approximately. Based on the stability property, we observe that the core-set construction procedure can serve as an ``oracle'' to help us to guess the value of  $(1+\epsilon)\mathbf{Rad}(P)$ (see Algorithm~\ref{alg-meb2}). Let $h>0$ be a candidate.  We add a point with distance at least $h$ to $o_i$ in each iteration. We prove that the procedure cannot continue for more than $z$ iterations if $h\geq (1+\epsilon)\mathbf{Rad}(P)$, and will continue more than $z$ iterations with constant probability if $h<(1-\epsilon)\mathbf{Rad}(P)$, where $z$ is the size of core-set described in Theorem~\ref{the-newbc}. Also, during the core-set construction, we add the points to the core-set via random sampling, rather than a deterministic way.  A minor issue here is that we need to replace $\epsilon$ by $\epsilon^2$ in Theorem~\ref{the-newbc}, so as to achieve the overall $(1+O(\epsilon))$-radius approximation in the following analysis. 
\vspace{0.05in}

\begin{lemma}
\label{lem-upper}
Given a parameter $\eta\in (0,1)$, one selects an arbitrary point $p_1\in P$ and takes a  sample $Q\subset P$ with $|Q|=\frac{1}{\beta_0}\log\frac{1}{\eta}$ uniformly at random. Let $p_2=\arg\max_{p\in Q} ||p-p_1||$. Then, with probability $1-\eta$, 
\begin{eqnarray}
\mathbf{Rad}(P)\in [\frac{1}{2}||p_1-p_2||, \frac{1}{1-\epsilon^2}||p_1-p_2||].
\end{eqnarray}
\end{lemma}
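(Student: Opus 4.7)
My plan is to establish the two endpoints of the claimed interval separately. The lower bound $\mathbf{Rad}(P) \geq \tfrac{1}{2}\|p_1-p_2\|$ is deterministic and immediate: both $p_1$ and $p_2$ lie in $P$ and hence in $\mathbf{MEB}(P)$, so $\|p_1-p_2\|$ is at most the diameter $2\mathbf{Rad}(P)$ of that ball. This side uses neither the stability of $P$ nor the random sample $Q$.

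For the upper bound I will argue by contradiction, leveraging the $(\epsilon^2,\beta)$-stability together with a one-line sampling bound. Assume, toward contradiction, that $\|p_1-p_2\| < (1-\epsilon^2)\mathbf{Rad}(P)$. Partition $P$ into $F := \{p \in P : \|p-p_1\| > \|p_1-p_2\|\}$ and $P_0 := P \setminus F$. Then $P_0$ is enclosed by $\mathbb{B}(p_1, \|p_1-p_2\|)$, so $\mathbf{Rad}(P_0) \leq \|p_1-p_2\| < (1-\epsilon^2)\mathbf{Rad}(P)$. Since $P$ is $(\epsilon^2,\beta)$-stable (Definition~\ref{def-stable}), removing fewer than $\beta n$ points cannot drop the radius below $(1-\epsilon^2)\mathbf{Rad}(P)$, which forces $|F| \geq \beta n > \beta_0 n$. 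On the other hand, the very definition of $p_2$ as the farthest point of $Q$ from $p_1$ implies $Q \cap F = \emptyset$, because any $p \in F \cap Q$ would be strictly farther from $p_1$ than $p_2$ is.

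The final step is to bound the probability of this ``miss'' event. Since $|F|/|P| > \beta_0$ and $Q$ consists of $|Q| = \tfrac{1}{\beta_0}\log\tfrac{1}{\eta}$ independent uniform samples, $\Pr[Q \cap F = \emptyset] \leq (1-\beta_0)^{|Q|} \leq e^{-\beta_0 |Q|} = \eta$. Hence with probability at least $1-\eta$ the assumption fails, giving $\|p_1-p_2\| \geq (1-\epsilon^2)\mathbf{Rad}(P)$, which is the desired upper bound. I do not anticipate a real obstacle here; the only delicate point is making sure the stability inequality is used with the correct strict/non-strict direction so that $|F| < \beta n$ actually contradicts the standing assumption, which Definition~\ref{def-stable} handles cleanly.
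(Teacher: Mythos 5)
Your lower bound is fine and matches the paper. But the upper-bound argument has a genuine conditioning flaw. You define $F := \{p \in P : \|p-p_1\| > \|p_1-p_2\|\}$, which depends on the random sample $Q$ through $p_2$. Because $p_2$ is by definition the farthest point of $Q$ from $p_1$, the inclusion $Q \cap F = \emptyset$ holds \emph{deterministically}, i.e.\ $\Pr[Q\cap F=\emptyset]=1$, not $\leq (1-\beta_0)^{|Q|}$. The bound $\Pr[Q\cap G=\emptyset]\leq(1-\beta_0)^{|Q|}$ is only valid for a set $G$ fixed independently of $Q$, so your final probability estimate does not follow.

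The fix, which is exactly what the paper does, is to anchor the argument to a deterministic set. Let $\hat{F}:=P\setminus \mathbb{B}\big(p_1,(1-\epsilon^2)\mathbf{Rad}(P)\big)$ (or, as the paper phrases it, let $\mathbb{B}(p_1,l)$ be the ball around $p_1$ covering exactly $(1-\beta_0)n$ points, with $\hat{F}$ the $\beta_0 n$ points outside it). Stability gives $|\hat{F}|\geq\beta n>\beta_0 n$ unconditionally, and your failure event $\{\|p_1-p_2\|<(1-\epsilon^2)\mathbf{Rad}(P)\}$ is contained in $\{Q\cap\hat{F}=\emptyset\}$, since every point of $\hat{F}$ is strictly farther from $p_1$ than $p_2$ would be. Now the sampling bound $\Pr[Q\cap\hat{F}=\emptyset]\leq(1-\beta_0)^{|Q|}\leq\eta$ is legitimate, and the rest of your argument goes through unchanged. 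So the overall route is the same as the paper's; the only missing ingredient is that the set you sample against must be chosen before $Q$ is drawn.
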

\begin{proof}

First, the lower bound of $\mathbf{Rad}(P)$ is obvious since $||p_1-p_2||$ is always no larger than $2\mathbf{Rad}(P)$. Then, we consider the upper bound. Let $\mathbb{B}(p_1, l)$ be the ball covering exactly $(1-\beta_0)n$ points of $P$, and thus $l\geq (1-\epsilon^2)\mathbf{Rad}(P)$ according to Definition~\ref{def-stable}. To complete our proof, we also need the following folklore lemma presented in~\cite{DX14}.

%

\begin{lemma} {\cite{DX14}}
\label{cla-sampling}
Let $N$ be a set of elements, and $N'$ be a subset of $N$ with
size $\left|N'\right|=\tau\left|N\right|$ for some $\tau\in(0,1)$. Given $\eta\in (0,1)$, if one randomly samples $\frac{\ln1/\eta}{\ln1/(1-\tau)}\leq\frac{1}{\tau}\ln\frac{1}{\eta}$ elements from $N$, then with probability at least $1-\eta$, the sample contains at least one element of $N'$.
\end{lemma}

  \begin{figure}[h] 
\begin{center}
    \includegraphics[width=0.25\textwidth]{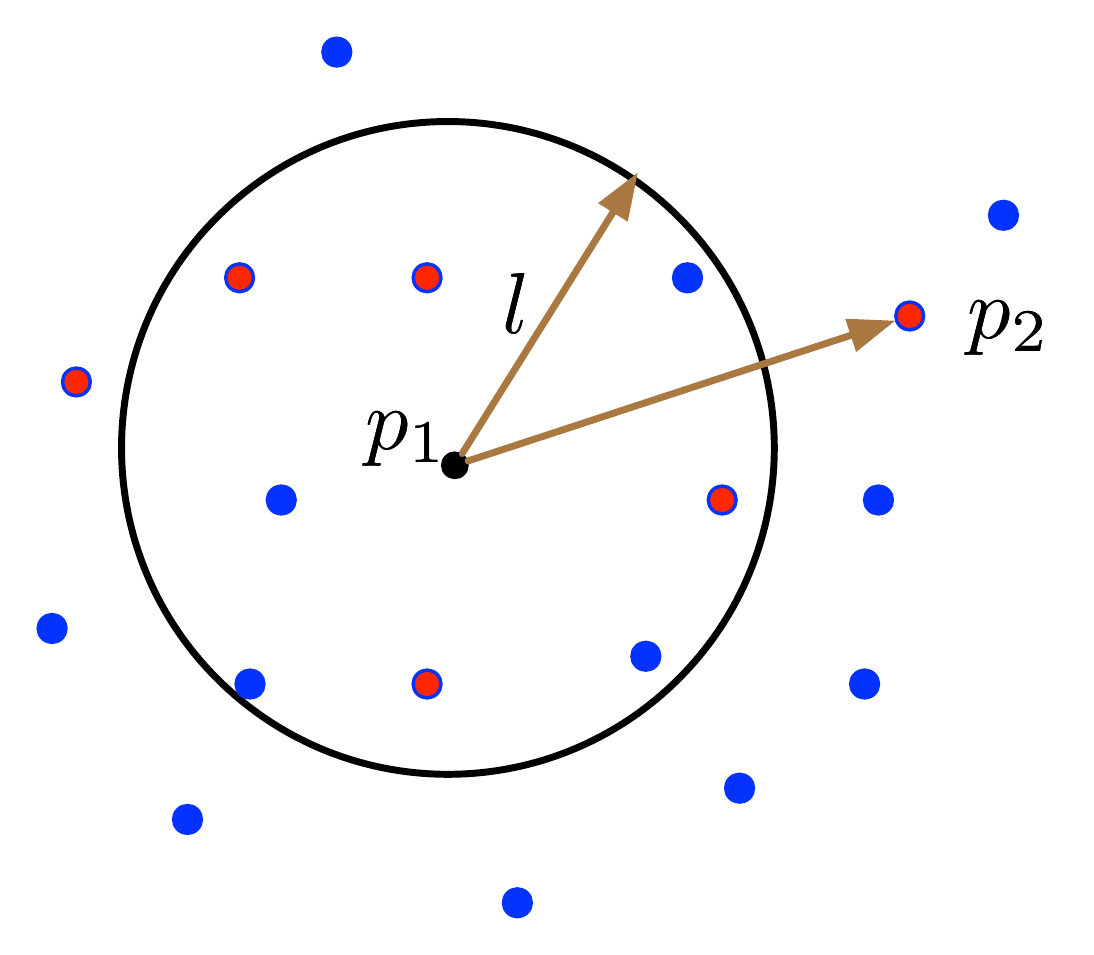}  
    \end{center}
\vspace{-0.13in}
  \caption{An illustration of Lemma~\ref{lem-upper}; the red points are the sampled set $Q$.}     
   \label{fig-illustratefirst-b}
   \vspace{-0.1in}
\end{figure}

In Lemma~\ref{cla-sampling}, let $N$  and $N'$ be the point set $P$ and the subset $P\setminus \mathbb{B}(p_1, l)$, respectively. We know that $Q$ contains at least one point from $N'$ according to Lemma~\ref{cla-sampling} (by setting $\tau=\beta_0$). Namely, $Q$ contains at least one point outside $\mathbb{B}(p_1, l)$. Moreover, because $p_2=\arg\max_{p\in Q} ||p-p_1||$, we have $||p_1-p_2||\geq l\geq (1-\epsilon^2)\mathbf{Rad}(P)$, {\em i.e.}, $\mathbf{Rad}(P)\leq \frac{1}{1-\epsilon^2}||p_1-p_2||$ (see  Figure~\ref{fig-illustratefirst-b} for an illustration).
 \qed\end{proof}

%
%

Algorithm~\ref{alg-meb2} serves as a subroutine in Algorithm~\ref{alg-meb3}. In Algorithm~\ref{alg-meb2}, we simply set $z=\frac{3}{\epsilon^2}$ with $s=1/3$ as described in Theorem~\ref{the-newbc} (as mentioned before, we replace $\epsilon$ by $\epsilon^2$); we compute $o_i$ having distance less than $s\frac{\epsilon^2}{1+\epsilon^2}\mathbf{Rad}(T)$ to the center of $\mathbf{MEB}(T)$ in Step 2(1).

\renewcommand{\algorithmicrequire}{\textbf{Input:}}
\renewcommand{\algorithmicensure}{\textbf{Output:}}

\begin{algorithm}[H]
   \caption{\textbf{MEB Algorithm \Rmnum{2}}}
   \label{alg-meb3}
\begin{algorithmic}[1]
\REQUIRE  Two parameters $0<\epsilon, \eta_0<1$; an $(\epsilon^2, \beta)$-stable instance $P$ of MEB problem in $\mathbb{R}^d$, where $\beta$ is larger than a given lower bound $\beta_0>0$. 
Set  the interval $[a, b]$ for $\mathbf{Rad}(P)$ that is obtained by Lemma~\ref{lem-upper}.
\STATE Among the set $\{(1-\epsilon^2)a, (1+\epsilon^2)(1-\epsilon^2)a, \cdots, (1+\epsilon^2)^w (1-\epsilon^2)a=(1+\epsilon^2)b\}$ where $w=\lceil \log_{1+\epsilon^2}\frac{2}{(1-\epsilon^2)^2}\rceil+1=O(\frac{1}{\epsilon^2})$, perform binary search for the value $h$ by using Algorithm~\ref{alg-meb2} with $z=\frac{3}{\epsilon^2}$ and $\eta=\frac{\eta_0}{2\log w}$. 
\STATE Suppose that Algorithm~\ref{alg-meb2} returns ``no'' when $h=(1+\epsilon^2)^{i_0} (1-\epsilon^2)a$ and returns ``yes'' when $h=(1+\epsilon^2)^{i_0+1} (1-\epsilon^2)a$.
\STATE Run Algorithm~\ref{alg-meb2} again with $h=(1+\epsilon^2)^{i_0+2}a$, $z=\frac{3}{\epsilon^2}$, and $\eta=\eta_0/2$; let $\tilde{o}$ be the obtained ball center of $T$ when the loop stops.
\STATE Return the ball $\mathbb{B}(\tilde{o}, r)$, where $r=\frac{1+(2\sqrt{2}+\frac{2\sqrt{6}}{\sqrt{1-\epsilon^2}})\epsilon}{1+\epsilon^2}h$.
\end{algorithmic}
\end{algorithm}

\vspace{-0.2in}
 \begin{algorithm}[H]
    \caption{\textbf{Oracle for testing $h$}}
   \label{alg-meb2}
\begin{algorithmic}[1]
\REQUIRE  An  instance $P$,  a parameter  $\eta\in (0,1)$, $h>0$, and a positive integer $z$.
\STATE Initially, arbitrarily select a point $p\in P$ and let $T=\{p\}$.
\STATE $i=1$; repeat the following steps:
\begin{enumerate}[(1)]
\item Compute an approximate MEB of $T$ and let the ball center be $o_i$ as described in Theorem~\ref{the-newbc} (replace $\epsilon$ by $\epsilon^2$ and set $s=1/3$).
\item Sample a set $Q\subset P$ with $|Q|=\frac{1}{\beta_0}\log\frac{z}{\eta}$ uniformly at random.
\item Select the point $q\in Q$ that is farthest to $o_i$, and add it to $T$.
\item If $||q-o_i||< h$, stop the loop and output ``yes''.
\item $i=i+1$; if $i>z$, stop the loop and output ``no''.
\end{enumerate}
\end{algorithmic}
\end{algorithm}
\begin{theorem}
\label{the-sample2}
With probability $1-\eta_0$, Algorithm~\ref{alg-meb3} returns a $\lambda$-radius approximate MEB of $P$, where 
\begin{eqnarray}
\lambda=\frac{(1+x_1)(1+x_2)}{1+\epsilon^2}=1+O( \epsilon) \text{\hspace{0.1in} with \hspace{0.1in}} x_1=O\big(\frac{\epsilon^2}{1-\epsilon^2}\big), x_2=O\big(\frac{\epsilon}{\sqrt{1-\epsilon^2}}\big).
\end{eqnarray}
 The running time is $\tilde{O}\big((\frac{1}{\epsilon^2\beta_0}+\frac{1}{\epsilon^8})d\big)$, where  $\tilde{O}(f)=O(f\cdot \mathtt{polylog}(\frac{1}{\epsilon}, \frac{1}{\eta_0}))$. 
\end{theorem}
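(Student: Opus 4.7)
The plan is to prove Theorem~\ref{the-sample2} by decomposing it into three pieces: (a) a two-sided analysis of the oracle subroutine (Algorithm~\ref{alg-meb2}) as a function of the threshold $h$; (b) an argument that the binary search in Step~1 sandwiches $\mathbf{Rad}(P)$ inside a narrow window; and (c) an application of Theorem~\ref{the-stable} to the approximate center $\tilde{o}$ returned in Step~3. The central claims about the oracle are: (i) if $h\geq (1+\epsilon^2)\mathbf{Rad}(P)$, then the oracle \emph{deterministically} outputs ``yes'' within $z=3/\epsilon^2$ iterations, because Theorem~\ref{the-newbc} combined with Remark~\ref{rem-newbc} (applied with $\epsilon$ replaced by $\epsilon^2$ and $s=1/3$) guarantees that we cannot add more than $z$ points whose distance to the current approximate center is $\geq(1+\epsilon^2)\mathbf{Rad}(P)$; (ii) if $h<(1-\epsilon^2)\mathbf{Rad}(P)$, then the $(\epsilon^2,\beta)$-stability of $P$ together with $\beta>\beta_0$ implies that any ball of radius $h$ misses at least $\beta_0 n$ points of $P$, so by Lemma~\ref{cla-sampling} a sample of size $|Q|=\frac{1}{\beta_0}\log\frac{z}{\eta}$ contains such a missed point with probability $\geq 1-\eta/z$ at each iteration, and a union bound over the $z$ iterations shows the oracle outputs ``no'' with probability $\geq 1-\eta$.

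Using (i) and (ii), I would argue that the binary search in Step~1 correctly identifies an index $i_0$ with $(1+\epsilon^2)^{i_0}(1-\epsilon^2)a<(1+\epsilon^2)\mathbf{Rad}(P)$ (from the ``no'' answer) and $(1+\epsilon^2)^{i_0+1}(1-\epsilon^2)a\geq(1-\epsilon^2)\mathbf{Rad}(P)$ (from the ``yes'' answer); the binary search makes $O(\log w)=O(\log(1/\epsilon))$ oracle calls, so with failure probability $\eta=\eta_0/(2\log w)$ per call, the overall failure probability of Step~1 is at most $\eta_0/2$. Substituting $h=(1+\epsilon^2)^{i_0+2}a$ in Step~3 then yields
\begin{equation*}
(1+\epsilon^2)\mathbf{Rad}(P)\leq h<\frac{(1+\epsilon^2)^3}{1-\epsilon^2}\mathbf{Rad}(P)=(1+x_1)\mathbf{Rad}(P),
\end{equation*}
with $x_1=O(\epsilon^2/(1-\epsilon^2))$. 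Since $h\geq(1+\epsilon^2)\mathbf{Rad}(P)$, the oracle terminates by claim~(i) and the terminating sample $Q$ lies entirely in $\mathbb{B}(\tilde{o},h)$; by the $\beta_0$-net argument used in the proof of Lemma~\ref{lem-net} (or a direct application of Lemma~\ref{cla-sampling} with failure probability $\eta_0/2$), $\mathbb{B}(\tilde{o},h)$ covers at least $(1-\beta_0)n$ points of $P$.

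I would then feed this into Theorem~\ref{the-stable} with $\epsilon'^2=x_1$, giving $\|\tilde{o}-o\|<(2\sqrt{2}\epsilon+\sqrt{3}\epsilon')\mathbf{Rad}(P)\leq x_2\mathbf{Rad}(P)$ for $x_2=(2\sqrt{2}+\frac{2\sqrt{6}}{\sqrt{1-\epsilon^2}})\epsilon$ (the constant comes from bounding $\epsilon'^2=((1+\epsilon^2)^3-(1-\epsilon^2))/(1-\epsilon^2)\leq 8\epsilon^2/(1-\epsilon^2)$ for small $\epsilon$). Then by the triangle inequality $P\subseteq\mathbb{B}(\tilde{o},(1+x_2)\mathbf{Rad}(P))$, and since $\mathbf{Rad}(P)\leq h/(1+\epsilon^2)$, the output radius $r=\frac{1+x_2}{1+\epsilon^2}h$ satisfies $r\geq(1+x_2)\mathbf{Rad}(P)$, so $\mathbb{B}(\tilde{o},r)$ covers $P$; simultaneously $r\leq\frac{(1+x_1)(1+x_2)}{1+\epsilon^2}\mathbf{Rad}(P)=\lambda\mathbf{Rad}(P)$, which is the promised bound. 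A final union bound over all failure events gives total success probability $\geq 1-\eta_0$. For the running time, each oracle call runs $z=O(1/\epsilon^2)$ iterations, each costing $O((|T|/\epsilon^4)d)=O(zd/\epsilon^4)$ to refresh $o_i$ via Theorem~\ref{the-newbc} plus $O(|Q|d)=\tilde{O}(d/\beta_0)$ for the sampling step; summing over the $O(\log(1/\epsilon))$ binary-search calls plus the final run yields $\tilde{O}((z^2/\epsilon^4+z/\beta_0)d)=\tilde{O}((1/\epsilon^8+1/(\epsilon^2\beta_0))d)$.

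The main obstacle is the tight two-sided characterization of the oracle in paragraph~1: the ``yes'' side must be deterministic (so binary search is meaningful regardless of sample randomness there), while the ``no'' side crucially exploits stability to lower-bound, for \emph{every} potential center $o_i$ visited by the algorithm, the fraction of points outside the ball of radius $h$; threading this pointwise guarantee through the inductive loop and the subsequent binary search is what makes Theorem~\ref{the-stable} applicable to $\tilde{o}$ with the precise $\epsilon'$ that produces the claimed $\lambda=1+O(\epsilon)$.
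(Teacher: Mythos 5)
Your proposal follows essentially the same decomposition the paper uses: (a) the two-sided oracle characterization is exactly Lemma~\ref{lem-sample2}, (b) the sandwiching of $\mathbf{Rad}(P)$ by the binary search endpoints reproduces inequalities (\ref{for-the-sample2-2})--(\ref{for-the-sample2-3}), and (c) the final step invokes Theorem~\ref{the-stable} with $\epsilon'^2\leq 8\epsilon^2/(1-\epsilon^2)$, giving the same $x_1,x_2$, the same $\lambda$, and the same $\tilde{O}((\epsilon^{-2}\beta_0^{-1}+\epsilon^{-8})d)$ running time. One small imprecision: in justifying that $\mathbb{B}(\tilde{o},h)$ covers $(1-\beta_0)n$ points after the final ``yes'', you cite ``the $\beta_0$-net argument used in the proof of Lemma~\ref{lem-net}'' as the primary route and Lemma~\ref{cla-sampling} only as an alternative — but a $\beta_0$-net requires a $\Theta(\frac{1}{\beta_0}\max\{\log\frac{1}{\eta},d\log\frac{d}{\beta_0}\})$-size sample (the basis of Algorithm~\ref{alg-meb1}), whereas the oracle only draws $|Q|=\frac{1}{\beta_0}\log\frac{z}{\eta}$ points, so the contrapositive of Lemma~\ref{cla-sampling} (which the paper uses) is the correct mechanism here; your parenthetical alternative is the right one.
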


Before proving Theorem~\ref{the-sample2}, we provide Lemma~\ref{lem-sample2} first. 

\begin{lemma}
\label{lem-sample2}
If $h\geq (1+\epsilon^2)\mathbf{Rad}(P)$, Algorithm~\ref{alg-meb2} returns ``yes''; else if $h<(1-\epsilon^2)\mathbf{Rad}(P)$, Algorithm~\ref{alg-meb2} returns ``no'' with probability at least $1-\eta$.
\end{lemma}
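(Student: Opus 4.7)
\medskip

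\noindent\textbf{Proof proposal for Lemma~\ref{lem-sample2}.} The plan is to handle the two cases separately, leveraging Theorem~\ref{the-newbc} (with Remark~\ref{rem-newbc}) in one direction and the $(\epsilon^2,\beta)$-stability of $P$ together with Lemma~\ref{cla-sampling} in the other.

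\medskip

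\noindent\emph{Case 1: $h\geq (1+\epsilon^2)\mathbf{Rad}(P)$.} I will argue by contradiction that Algorithm~\ref{alg-meb2} must return ``yes'' within the $z=3/\epsilon^2$ iterations. Suppose not; then for each $i=1,\dots,z$ the point $q$ added to $T$ satisfied $\|q-o_i\|\geq h\geq (1+\epsilon^2)\mathbf{Rad}(P)$. But this is exactly the situation described in Remark~\ref{rem-newbc} applied to Theorem~\ref{the-newbc} with $\epsilon$ replaced by $\epsilon^2$ and $s=1/3$: after at most $\frac{2}{(1-s)\epsilon^2}=3/\epsilon^2=z$ such insertions, no further point of $P$ can lie outside $\mathbb{B}(o_i,(1+\epsilon^2)\mathbf{Rad}(P))$, so in particular no point of the sampled $Q\subset P$ can, contradicting the failure to output ``yes.'' Hence ``yes'' is returned deterministically. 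The only thing I need to double-check is the compatibility of the ``$<h$'' vs.~``$\leq h$'' branching and the inner approximation slack $\xi r_i$ in Step~2(1), but these are already absorbed into the constants of Theorem~\ref{the-newbc}.

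\medskip

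\noindent\emph{Case 2: $h<(1-\epsilon^2)\mathbf{Rad}(P)$.} Here I will use the $(\epsilon^2,\beta)$-stability to show that in each fixed iteration the random sample $Q$ fails to trigger ``yes'' with probability at least $1-\eta/z$, then union-bound over the $z$ iterations. Fix any iteration $i$, and consider the (deterministic, given the history) center $o_i$ and the ball $B=\mathbb{B}(o_i,h)$. Since the radius of $B$ is strictly less than $(1-\epsilon^2)\mathbf{Rad}(P)$, Definition~\ref{def-stable} implies $|P\setminus B|\geq \beta n>\beta_0 n$: otherwise $P\cap B$ would be a subset of $P$ obtained by removing fewer than $\beta n$ points yet having radius $\leq h<(1-\epsilon^2)\mathbf{Rad}(P)$, contradicting the first condition of stability. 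Applying Lemma~\ref{cla-sampling} with $N=P$, $N'=P\setminus B$ (so $\tau\geq\beta_0$) and failure probability $\eta/z$, the sample $Q$ of size $\frac{1}{\beta_0}\log\frac{z}{\eta}$ contains a point outside $B$ with probability at least $1-\eta/z$, which forces $\|q-o_i\|\geq h$ and prevents ``yes'' from being output at iteration $i$.

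\medskip

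\noindent\emph{Combining the iterations and the main subtlety.} A union bound over the $z$ iterations yields ``no'' with probability at least $1-\eta$, as claimed. The one point that deserves care—and which I view as the main obstacle—is that $o_i$ in iteration $i$ is itself random (it depends on the earlier samples), so Lemma~\ref{cla-sampling} must be invoked conditionally on the realization of the previous iterations. This is fine because the stability lower bound $|P\setminus \mathbb{B}(o_i,h)|>\beta_0 n$ holds for \emph{every} center in $\mathbb{R}^d$, so the conditional failure probability at iteration $i$ is at most $\eta/z$ regardless of the history, and the union bound goes through. This uniformity over centers is exactly what the stability assumption buys us, and it is the hinge of the whole argument. \qed
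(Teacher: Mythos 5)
Your proof is correct and follows essentially the same two-case approach as the paper: Remark~\ref{rem-newbc} with Theorem~\ref{the-newbc} for $h\geq(1+\epsilon^2)\mathbf{Rad}(P)$, and stability plus Lemma~\ref{cla-sampling} with a per-iteration failure probability $\eta/z$ for $h<(1-\epsilon^2)\mathbf{Rad}(P)$. The only differences are cosmetic (you bound $|P\setminus\mathbb{B}(o_i,h)|\geq\beta n$ directly instead of introducing the intermediate radius $l$ with $|P\setminus\mathbb{B}(o_i,l)|=\beta_0 n$ as the paper does) and your explicit remark on conditioning over the random history of $o_i$, which the paper leaves implicit but which is a worthwhile clarification.
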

\begin{proof}
First, we assume that $h\geq (1+\epsilon^2)\mathbf{Rad}(P)$. 
Recall the remark following Theorem~\ref{the-newbc}. 
If we always add a point $q$ with distance at least $h\geq (1+\epsilon^2)\mathbf{Rad}(P)$ to $o_i$, the loop 2(1)-(5) cannot continue more than $z$ iterations, {\em i.e.}, Algorithm~\ref{alg-meb2} will return ``yes''.

Now, we consider the case $h<(1-\epsilon^2)\mathbf{Rad}(P)$. Similar to the proof of Lemma~\ref{lem-upper}, we consider the ball $\mathbb{B}(o_i, l)$ covering exactly $(1-\beta_0)n$ points of $P$. According to Definition~\ref{def-stable}, we know that $l\geq (1-\epsilon^2)\mathbf{Rad}(P)>h$. Also, with probability $1-\eta/z$, the sample $Q$ contains at least one point outside $\mathbb{B}(o_i, l)$ due to Lemma~\ref{cla-sampling}. By taking the union bound, with probability $(1-\eta/z)^z\geq 1-\eta$, $||q-o_i||$ is always larger than $h$ and eventually Algorithm~\ref{alg-meb2} will return ``no''.
 \qed\end{proof}

\begin{proof}\textbf{(of Theorem~\ref{the-sample2})}
Since Algorithm~\ref{alg-meb2} returns ``no'' when $h=(1+\epsilon^2)^{i_0} (1-\epsilon^2)a$ and returns ``yes'' when $h=(1+\epsilon^2)^{i_0+1} (1-\epsilon^2)a$, from Lemma~\ref{lem-sample2} we know that 
\begin{eqnarray}
(1+\epsilon^2)^{i_0} (1-\epsilon^2)a&<&(1+\epsilon^2)\mathbf{Rad}(P); \label{for-the-sample2-2}\\
(1+\epsilon^2)^{i_0+1} (1-\epsilon^2)a&\geq& (1-\epsilon^2)\mathbf{Rad}(P).\label{for-the-sample2-1}
\end{eqnarray}
The above inequalities together imply that
\begin{eqnarray}
\frac{(1+\epsilon^2)^3}{1-\epsilon^2}\mathbf{Rad}(P)>(1+\epsilon^2)^{i_0+2}a\geq (1+\epsilon^2)\mathbf{Rad}(P).\label{for-the-sample2-3}
\end{eqnarray}
Thus, when running Algorithm~\ref{alg-meb2} with $h=(1+\epsilon^2)^{i_0+2}a$ in Step 3, the algorithm returns ``yes'' (by the right hand-side of (\ref{for-the-sample2-3})). Then, consider the ball $\mathbb{B}(\tilde{o}, h)$. We claim that $|P\setminus\mathbb{B}(\tilde{o}, h)|<\beta_0 n$. Otherwise, the sample $Q$ contains at least one point outside $\mathbb{B}(\tilde{o}, h)$ with probability $1-\eta/z$ in Step 2(2) of Algorithm~\ref{alg-meb2}, {\em i.e.,} the loop will continue. Thus, it  contradicts to the fact that the algorithm returns ``yes''. Let 
$P'=P\cap\mathbb{B}(\tilde{o}, h)$, and then $|P'|\geq(1-\beta_0)n$. Moreover, the left hand-side of (\ref{for-the-sample2-3}) indicates that
\begin{eqnarray}
h=(1+\epsilon^2)^{i_0+2}a<(1+\frac{8\epsilon^2}{1-\epsilon^2})\mathbf{Rad}(P).\label{for-the-sample2-4}
\end{eqnarray}
Now, we can apply Theorem~\ref{the-stable}, where we set ``$\epsilon'$'' to be ``$\sqrt{\frac{8\epsilon^2}{1-\epsilon^2}}$'' in the theorem. Let $o$ be the center of $\mathbf{MEB}(P)$. Consequently, we have
\begin{eqnarray}
||\tilde{o}-o||<(2\sqrt{2}+ 2\sqrt{6}/\sqrt{1-\epsilon^2})\epsilon\cdot \mathbf{Rad}(P).\label{for-the-sample2-5}
\end{eqnarray}

For simplicity, we let $x_1=\frac{8\epsilon^2}{1-\epsilon^2}$ and $x_2=(2\sqrt{2}+ 2\sqrt{6}/\sqrt{1-\epsilon^2})\epsilon $. Hence, $h\leq (1+x_1)\mathbf{Rad}(P)$ and $||\tilde{o}-o||\leq x_2\mathbf{Rad}(P)$ in (\ref{for-the-sample2-4}) and (\ref{for-the-sample2-5}). From (\ref{for-the-sample2-5}), we know that $P\subset\mathbb{B}(\tilde{o}, (1+x_2)\mathbf{Rad}(P))$.  From the right hand-side of (\ref{for-the-sample2-3}), we know that $(1+x_2)\mathbf{Rad}(P)\leq\frac{1+x_2}{1+\epsilon^2}h$. Thus, we have 
$P\subset\mathbb{B}\Big(\tilde{o}, \frac{1+x_2}{1+\epsilon^2}h\Big)$  
where $\frac{1+x_2}{1+\epsilon^2}h=\frac{1+(2\sqrt{2}+\frac{2\sqrt{6}}{\sqrt{1-\epsilon^2}})\epsilon}{1+\epsilon^2}h$. Also, the radius
\begin{eqnarray}
\frac{1+x_2}{1+\epsilon^2}h&\underbrace{\leq}_{\text{by (\ref{for-the-sample2-4})}}&\frac{(1+x_2)(1+x_1)}{1+\epsilon^2}\mathbf{Rad}(P)=\lambda \cdot\mathbf{Rad}(P).
\end{eqnarray}
Thus  $\mathbb{B}\Big(\tilde{o}, \frac{1+x_2}{1+\epsilon^2}h\Big)$ is a $\lambda$-radius approximate MEB of $P$, and $\lambda=1+O( \epsilon )$ if $\epsilon$ is a fixed small number in $(0,1)$. 

\vspace{0.05in}
\textbf{Success probability.} The success probability of Algorithm~\ref{alg-meb2} is $1-\eta$. In Algorithm~\ref{alg-meb3}, we set $\eta=\frac{\eta_0}{2\log w}$ in Step 1 and $\eta=\eta_0/2$ in Step 3, respectively. We take the union bound and the success probability of Algorithm~\ref{alg-meb3} is $(1-\frac{\eta_0}{2\log w})^{\log w} (1-\eta_0/2)>1-\eta_0$.

\vspace{0.05in}
\textbf{Running time.} As the subroutine, Algorithm~\ref{alg-meb2} runs in $O(z(\frac{1}{\beta_0}(\log\frac{z}{\eta}) d+\frac{1}{\epsilon^6}d))$ time; Algorithm~\ref{alg-meb3} calls the subroutine $O\big(\log(\frac{1}{\epsilon^2})\big)$ times. Note that $z=O(\frac{1}{\epsilon^2})$. Thus, the total running time  is $\tilde{O}\big((\frac{1}{\epsilon^2\beta_0}+\frac{1}{\epsilon^8})d\big)$. 
\qed\end{proof}

\section{Sublinear Time Algorithm for General MEB}
\label{sec-final-extent}

%
In Section~\ref{sec-sub}, we propose the sublinear time algorithms under the stability assumption. Specifically, we assume that the given instance is $(\epsilon^2, \beta)$-stable and $\beta$ is  larger than a  reasonable known lower bound $\beta_0$.  However, when $\beta_0$'s value is unknown, we cannot not determine the sample size for the algorithm; or we may only know a trivial lower bound, {\em e.g.,} $\frac{1}{n}$, and then the sample size could be too large. So 
in this section we  consider the general case without the stability assumption.

\textbf{High-level idea.} An interesting observation is that the ideas developed for stable instance can even help us to develop a hybrid approach for MEB  when the stability assumption does not hold. First, we ``suppose'' the input instance is $(\alpha, \beta)$-stable where ``$\alpha$'' and ``$\beta$'' are  designed based on the pre-specified radius error bound $\epsilon$ and covering error bound $\delta$, and compute a ``potential'' $(1+\epsilon)$-radius approximation (say a ball $B_1$); then we  compute a ``potential'' $(1-\delta)$-covering approximation (say a ball $B_2$), where the  definition of ``covering approximation'' is given in Definition~\ref{def-app}; finally, we determine the final output based on the ratio of their radii. Specifically, we set  a threshold $\tau$ that is determined by the given radius error bound $\epsilon$.  If the ratio is no larger than  $\tau$, we can infer that $B_1$ is a ``true''  $(1+\epsilon)$-radius approximation and return it; otherwise, we return $B_2$ that is a ``true''  $(1-\delta)$-covering approximation.  Moreover, for the latter case ({\em i.e.,} returning a $(1-\delta)$-covering approximation),  we will show that our  proposed algorithm yields a radius not only being strictly smaller than $\mathbf{Rad}(P)$, but also having a gap of $\Theta(\epsilon^2)\cdot\mathbf{Rad}(P)$ to $\mathbf{Rad}(P )$ ({\em i.e.,} the returned radius is at most $\big(1-\Theta(\epsilon^2)\big)\cdot\mathbf{Rad}(P)$). 
%
%
Our algorithm  only needs  uniform sampling and a single pass over the input data, where the space complexity in memory is $O(d)$ (the hidden factor depends on $\epsilon$ and $\delta$); if the input data matrix is sparse ({\em i.e.,} $M=o(nd)$), the time complexity is sublinear.

%

Before presenting our algorithms, we need to show the formal definitions for the problem of MEB with outliers first, since it will be used for computing the $(1-\delta)$-covering approximation. 

\begin{definition}[MEB with Outliers]
\label{def-outlier}
Given a set $P$ of $n$ points in $\mathbb{R}^d$ and a small parameter $\gamma\in [0,1)$, the MEB with outliers problem is to find the smallest ball that covers $(1-\gamma)n$ points. Namely, the task is to find a subset of $P$ with size $(1-\gamma)n$ such that the resulting MEB is the smallest among all possible choices of the subset. The obtained ball is denoted by $\mathbf{MEB}(P, \gamma)$.
\end{definition}

For convenience, we  use $P_{\textnormal{opt}}$ to denote the optimal subset of $P$ with respect to $\mathbf{MEB}(P, \gamma)$. Namely, 
$P_{\textnormal{opt}}=\arg_{Q}\min\Big\{\mathbf{Rad}(Q)\mid Q\subset P, \left|Q\right|= (1-\gamma)n\Big\}$. 
From Definition~\ref{def-outlier}, we can see that the main challenge is to determine the subset of $P$. Similar to Definition~\ref{def-app}, we also define the radius approximation and covering approximation for MEB with outliers. 

\begin{definition}[Radius Approximation and Covering Approximation]
\label{def-app-outlier}
Let $0<\epsilon, \delta<1$. A ball $\mathbb{B}(c, r)$ is called a $(1+\epsilon)$-radius approximation of $\mathbf{MEB}(P,\gamma)$, if the ball covers $(1-\gamma)n$ points of $P$ and has radius $r\leq (1+\epsilon) \mathbf{Rad}(P_{\textnormal{opt}})$.  On the other hand, the ball is called a $(1-\delta)$-covering approximation of $\mathbf{MEB}(P,\gamma)$, if it covers at least $(1-\delta-\gamma)n$ points in $P$ and has radius $r\leq \mathbf{Rad}(P_{\textnormal{opt}})$.

A bi-criteria $(1+\epsilon, 1-\delta)$-approximation  is a ball that covers at least $\big(1-\delta-\gamma\big)n$ points and has radius at most $(1+\epsilon)\mathbf{Rad}(P_{\textnormal{opt}})$.

\end{definition}

%
%

\textbf{Roadmap.} First, we introduce two random sampling techniques in Section~\ref{sec-twolemma}, which are the keys for designing the sublinear bi-criteria approximation algorithm for MEB with outliers in Section~\ref{sec-outlier-general}. Based on the bi-criteria approximation of Section~\ref{sec-outlier-general}, we can solve the general MEB problem in Section~\ref{sec-unknown2}.


\subsection{Two Key Lemmas for Handling Outliers}
\label{sec-twolemma}


To shed some light on our ideas, consider using the core-set construction method in Section~\ref{sec-newanalysis} to compute a bi-criteria $(1+\epsilon,1-\delta)$-approximation for an instance $(P, \gamma)$ of MEB with outliers. \textbf{Let $o_i$ be the obtained ball center in the current iteration, and $Q$ be the set of $(\delta+\gamma) n$ farthest points to $o_i$ from $P$.} A key step for updating $o_i$ is finding a point in the set $P_{\text{opt}}\cap Q$ (the formal analysis is given in Section~\ref{sec-outlier-general}). Actually, this can be done by performing a random sampling from $Q$. However, it requires to compute the set $Q$ in advance, which takes an $\Omega(nd)$ time complexity. To keep the running time to be sublinear, we need to find a point from $P_{\text{opt}}\cap Q$ by a more sophisticated way. 
Since $P_{\text{opt}}$ is mixed with outliers in the set $Q$, simple uniform sampling cannot realize our goal.
To solve this issue, we propose  a ``two level'' sampling procedure
 which is called ``\textbf{Uniform-Adaptive Sampling}''. Roughly speaking, we take a random sample $A$ of size $n'$ first ({\em i.e.,} the uniform sampling step), and then randomly select a point from $Q'$, the set of the farthest $\frac{3}{2}(\delta+\gamma) n'$  points from $A$ to $o_i$ ({\em i.e.,} the adaptive sampling step). According to Lemma~\ref{lem-outlier-general1}, with probability at least $(1-\eta_1) \frac{\delta}{3(\delta+\gamma)}$, the selected point belongs to $P_{\text{opt}}\cap Q$; more importantly, the sample size $n'$ is independent of $n$ and $d$.  The key to prove Lemma~\ref{lem-outlier-general1} is to show that the size of the intersection $Q'\cap\big(P_{\text{opt}}\cap Q\big)$ is large enough. By setting an appropriate value for $n'$, we can prove a lower bound of $|Q'\cap\big(P_{\text{opt}}\cap Q\big)|$. 

\begin{lemma}[Uniform-Adaptive Sampling]
\label{lem-outlier-general1}
Let $\eta_1\in(0,1)$.  If we sample $n'=O(\frac{1}{\delta}\log\frac{1}{\eta_1})$ points independently and uniformly at random from $P$ and let $Q'$ be the set of farthest $\frac{3}{2}(\delta+\gamma)n'$ points to $o_i$ from the sample, then, with probability at least $1-\eta_1$, the following holds 
\begin{eqnarray}
\frac{\Big|Q'\cap\big(P_{\text{opt}}\cap Q\big)\Big|}{|Q'|}\geq \frac{\delta}{3(\delta+\gamma)}.
\end{eqnarray}
\end{lemma}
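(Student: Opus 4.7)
The plan is to prove the claim via two standard Chernoff concentration estimates on the uniform sample $A$ of size $n'$, combined with a simple ``containment'' observation about the top-ranked points in $A$. Specifically, I will argue that two good events hold simultaneously with probability at least $1-\eta_1$: (i) the number of points of $A$ that actually fall inside $Q$ does not exceed $\tfrac{3}{2}(\delta+\gamma)n'$, and (ii) the number of sampled points that additionally lie in $P_{\textnormal{opt}}$ is at least $\delta n'/2$. Once these are in place, the ratio bound will follow by simple counting and the choice $|Q'|=\tfrac{3}{2}(\delta+\gamma)n'$.

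First I would exploit the defining property of $Q$: since $Q$ consists of the $(\delta+\gamma)n$ farthest points of $P$ to $o_i$, every point of $Q$ is at least as far from $o_i$ as every point of $P\setminus Q$. Hence in the sample, every element of $A\cap Q$ ranks above every element of $A\setminus Q$ in the sorted order by distance to $o_i$, and consequently $Q'\supseteq A\cap Q$ whenever $|A\cap Q|\le \tfrac{3}{2}(\delta+\gamma)n'$. Because $\mathbb{E}[|A\cap Q|]=(\delta+\gamma)n'$, a multiplicative Chernoff upper-tail bound gives $|A\cap Q|\le \tfrac{3}{2}(\delta+\gamma)n'$ except with probability at most $\exp\!\bigl(-(\delta+\gamma)n'/12\bigr)$, which is at most $\eta_1/2$ provided $n'=\Omega\!\bigl(\tfrac{1}{\delta+\gamma}\log\tfrac{1}{\eta_1}\bigr)$; this is implied by the stated bound $n'=O(\tfrac{1}{\delta}\log\tfrac{1}{\eta_1})$.

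Next I would lower bound the ``good'' population $P_{\textnormal{opt}}\cap Q$. Since $|P\setminus P_{\textnormal{opt}}|=\gamma n$ and $|Q|=(\delta+\gamma)n$, a trivial counting argument gives $|P_{\textnormal{opt}}\cap Q|\ge \delta n$. Therefore $\mathbb{E}\bigl[|A\cap (P_{\textnormal{opt}}\cap Q)|\bigr]\ge \delta n'$, and a multiplicative Chernoff lower-tail bound yields $|A\cap (P_{\textnormal{opt}}\cap Q)|\ge \delta n'/2$ except with probability at most $\exp(-\delta n'/8)\le \eta_1/2$, again under $n'=\Omega(\tfrac{1}{\delta}\log\tfrac{1}{\eta_1})$.

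Finally I would combine the two events by a union bound. Using $P_{\textnormal{opt}}\cap Q\subseteq Q$ together with $Q'\supseteq A\cap Q$, I obtain
\[
\bigl|Q'\cap (P_{\textnormal{opt}}\cap Q)\bigr|\;\ge\;\bigl|A\cap (P_{\textnormal{opt}}\cap Q)\bigr|\;\ge\;\tfrac{\delta n'}{2},
\]
and dividing by $|Q'|=\tfrac{3}{2}(\delta+\gamma)n'$ gives the stated ratio $\tfrac{\delta}{3(\delta+\gamma)}$. The one subtle point I anticipate is the containment step $Q'\supseteq A\cap Q$: potential ties in distance at the boundary of $Q$ must be broken by a fixed rule (e.g., a consistent lexicographic order) so that the farthest-first ranking on $P$ and on $A$ agree. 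The $\tfrac{3}{2}$ slack in the definition of $Q'$ is exactly what makes the Chernoff oversampling argument go through, so this constant is not incidental.
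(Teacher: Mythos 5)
Your proof is correct and follows essentially the same route as the paper's: two multiplicative Chernoff bounds on $|A\cap Q|$ and $|A\cap(P_{\textnormal{opt}}\cap Q)|$, the observation that $A\cap Q$ and $Q'$ are both distance-threshold prefixes of $A$ so the size bound forces $A\cap Q\subseteq Q'$, and then elementary counting. The only (immaterial) difference is that the paper goes on to prove the equality $Q'\cap(P_{\textnormal{opt}}\cap Q)=A\cap(P_{\textnormal{opt}}\cap Q)$, whereas you correctly observe that the one inclusion you establish already suffices for the lower bound.
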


\begin{proof}
Let $A$ denote  the set of sampled $n'$ points from $P$. First, we know $|Q|=(\delta+\gamma) n$ and $|P_{\text{opt}}\cap Q|\geq \delta n$ (since there are at most $\gamma n$ outliers in $Q$). For ease of presentation, let $\lambda=\frac{|P_{\text{opt}}\cap Q|}{n}\geq \delta$. Let $\{x_i\mid 1\leq i\leq n'\}$ be $n'$ independent random variables with $x_i=1$ if the $i$-th sampled point of $A$ belongs to $P_{\text{opt}}\cap Q$, and $x_i=0$ otherwise. Thus, $E[x_i]= \lambda$ for each $i$. Let $\sigma$ be a small parameter in $(0,1)$. By using the Chernoff bound, we have  $\textbf{Pr}\Big(\sum^{n'}_{i=1}x_i\notin (1\pm\sigma)\lambda n'\Big)\leq e^{-O(\sigma^2 \lambda n')}$. That is,
\begin{eqnarray}
\textbf{Pr}\Big(|A\cap\big(P_{\text{opt}}\cap Q\big)|\in (1\pm\sigma)\lambda n'\Big)\geq 1-e^{-O(\sigma^2 \lambda n')}. \label{for-outlier-e3-1}
\end{eqnarray}
Similarly, we have
\begin{eqnarray}
\textbf{Pr}\Big(|A\cap  Q|\in (1\pm\sigma)(\delta+\gamma) n'\Big)\geq 1-e^{-O(\sigma^2 (\delta+\gamma) n')}. \label{for-outlier-e4-1}
\end{eqnarray}
Note that $n'=O(\frac{1}{\delta}\log\frac{1}{\eta_1})$. By setting $\sigma<1/2$ for (\ref{for-outlier-e3-1}) and (\ref{for-outlier-e4-1}), we have
\begin{eqnarray}
\Big|A\cap\big(P_{\text{opt}}\cap Q\big)\Big|> \frac{1}{2}\delta n' \text{\hspace{0.2in} and \hspace{0.2in}} \Big|A\cap  Q\Big|< \frac{3}{2}(\delta+\gamma) n' \label{for-outlier-general1-1}
\end{eqnarray}
with probability $1-\eta_1$. 
Note that $Q$ contains all the farthest $(\delta+\gamma) n$ points to $o_i$. Denote by $l_i$ the $\big((\delta+\gamma) n+1\big)$-th largest distance from $P$ to $o_i$. Then we have  
\begin{eqnarray}
A\cap Q=\{p\in A\mid ||p-o_i||>l_i\}.\label{for-aq}
\end{eqnarray}
Also, since $Q'$ is the set of the farthest $\frac{3}{2}(\delta+\gamma) n'$ points to $o_i$ from $A$, there exists some $l'_i>0$ such that
\begin{eqnarray}
 Q'=\{p\in A\mid ||p-o_i||>l'_i\}.\label{for-qprime}
\end{eqnarray}
(\ref{for-aq}) and (\ref{for-qprime}) together imply that either $(A\cap Q)\subseteq Q'$ or $Q'\subseteq (A\cap Q)$. Since $\big|A\cap  Q\big|< \frac{3}{2}(\delta+\gamma) n'$ and $|Q'|=\frac{3}{2}(\delta+\gamma) n'$, we know $\Big(A\cap  Q\Big)\subseteq Q'$. Therefore, 
\begin{eqnarray}
\Big(A\cap\big(P_{\text{opt}}\cap Q\big)\Big)=\Big(P_{\text{opt}}\cap \big(A\cap  Q\big)\Big)\subseteq Q'.\label{for-v9-1} 
\end{eqnarray}
Also, it is obvious that 
\begin{eqnarray}
\Big(A\cap\big(P_{\text{opt}}\cap Q\big)\Big)\subseteq  \big(P_{\text{opt}}\cap Q\big).\label{for-v9-4} 
\end{eqnarray}
The above (\ref{for-v9-1}) and (\ref{for-v9-4}) together imply  
\begin{eqnarray}
\Big(A\cap\big(P_{\text{opt}}\cap Q\big)\Big)\subseteq \Big(Q'\cap\big(P_{\text{opt}}\cap Q\big)\Big).\label{for-v9-2} 
\end{eqnarray}
Moreover, since $Q'\subseteq A$, we have 
\begin{eqnarray}
\Big(Q'\cap\big(P_{\text{opt}}\cap Q\big)\Big)\subseteq \Big(A\cap\big(P_{\text{opt}}\cap Q\big)\Big).\label{for-v9-3} 
\end{eqnarray}
Consequently, (\ref{for-v9-2}) and (\ref{for-v9-3}) together imply $Q'\cap\big(P_{\text{opt}}\cap Q\big)=A\cap\big(P_{\text{opt}}\cap Q\big)$ and hence 
\begin{eqnarray}
\frac{\Big|Q'\cap\big(P_{\text{opt}}\cap Q\big)\Big|}{|Q'|}&=&\frac{\Big|A\cap\big(P_{\text{opt}}\cap Q\big)\Big|}{|Q'|}\geq \frac{\delta}{3(\delta+\gamma)},
\end{eqnarray}
where the  inequality comes from the first inequality of (\ref{for-outlier-general1-1}) and the fact $|Q'|=\frac{3}{2}(\delta+\gamma) n'$.
\qed\end{proof}

The random sampling method is not always guaranteed to succeed. To boost the overall success probability, we have to repeatedly run the algorithm multiple times and each time the algorithm will generate a candidate solution ({\em i.e.,} the ball center). Consequently we have to select the best one as our final solution. With a slight abuse of notation, we still use $o_i$ to denote a candidate ball center; since our goal is to  achieve a  $(1+\epsilon, 1-\delta)$-approximation, we need to compute  the $\big((\delta+\gamma) n+1\big)$-th largest distance from $P$ to $o_i$, which is denoted as $l_i$. A straightforward way is to compute the value ``$l_i$'' in linear time for each candidate and return the one having the smallest $l_i$. In this section, we propose the ``\textbf{Sandwich Lemma}'' to estimate $l_i$ in sublinear time. Let $B$ be the set of $n''$ sampled points from $P$ in Lemma~\ref{lem-outlier-general2}, and $\tilde{l}_i$ be the $\big((1+\delta/\gamma)^2\gamma n''+1\big)$-th largest distance from  $B$ to $o_i$. If we can prove the inequalities (\ref{for-outlier-general2-1}) and (\ref{for-outlier-general2-2}) of Lemma~\ref{lem-outlier-general2}, then they can imply that  $\tilde{l}_i$ is a qualified estimation of $l_i$: if $\mathbb{B}(o_i, l_i)$ is a $(1+\epsilon, 1-\delta)$-approximation, the ball  $\mathbb{B}(o_i, \tilde{l}_i)$  should be a $(1+\epsilon, 1-O(\delta))$-approximation. 
The key idea is to prove that the ball $\mathbb{B}(o_i, \tilde{l}_i)$ is ``sandwiched'' by two balls $\mathbb{B}(o_i, \tilde{l}'_i)$ and $\mathbb{B}(o_i, l_i)$, where $\tilde{l}'_i$ is a carefully designed value satisfying 
\begin{eqnarray}
\text{(\rmnum{1}) $\tilde{l}'_i\leq \tilde{l}_i\leq l_i$ and (\rmnum{2}) $\Big|P\setminus \mathbb{B}(o_i, \tilde{l}'_i)\Big|\leq (\gamma+O(\delta)) n$.}
\end{eqnarray}
See Figure~\ref{fig-sub-outlier} for an illustration.  These two conditions of $\tilde{l}'_i$ can imply the inequalities (\ref{for-outlier-general2-1}) and (\ref{for-outlier-general2-2}) of Lemma~\ref{lem-outlier-general2}. 
Similar to Lemma~\ref{lem-outlier-general1}, the sample size $n''$ is also independent of $n$ and $d$.

\begin{figure}
\begin{center}
    \includegraphics[width=0.27\textwidth]{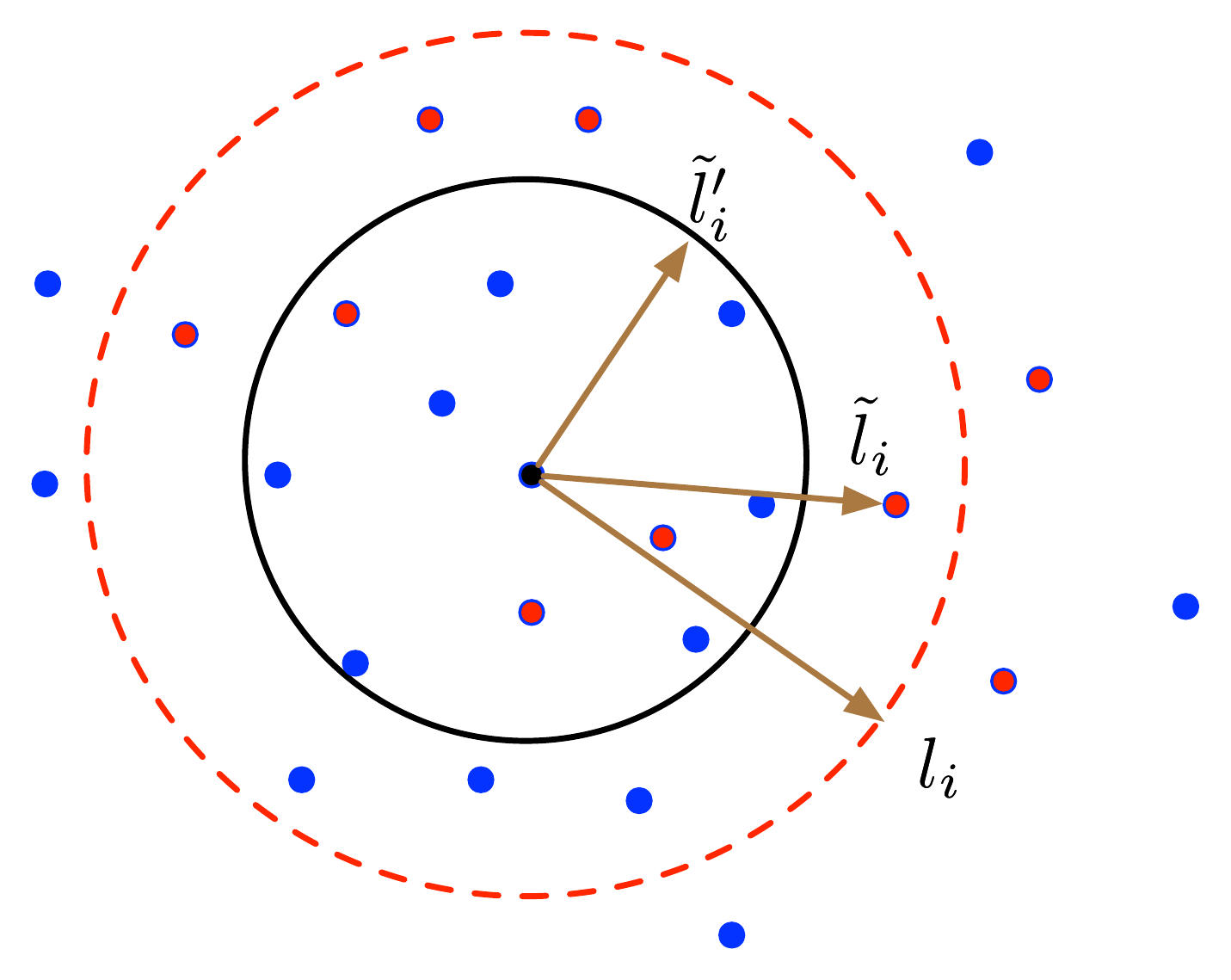}  
    \end{center}
  \caption{The red points are the sampled $n''$ points in Lemma~\ref{lem-outlier-general2}, and the $\big((1+\delta/\gamma)^2\gamma n''+1\big)$-th farthest point is in the ring bounded by the spheres $\mathbb{B}(o_i, \tilde{l}'_i)$ and $\mathbb{B}(o_i, l_i)$.}     
   \label{fig-sub-outlier}
\end{figure}

\begin{lemma} [Sandwich Lemma]
\label{lem-outlier-general2}
Let $\eta_2\in(0,1)$ and assume $\delta<\gamma/3$. If we sample $n''=O\big(\frac{\gamma}{\delta^2}\log\frac{1}{\eta_2}\big)$ points independently and uniformly at random from $P$ and let $\tilde{l}_i$ be the $\big((1+\delta/\gamma)^2\gamma n''+1\big)$-th largest distance from the sample to $o_i$, then, with probability $1-\eta_2$, the following holds
\begin{eqnarray}
\tilde{l}_i&\leq& l_i\label{for-outlier-general2-1};\\
\Big|P\setminus \mathbb{B}(o_i, \tilde{l}_i)\Big|&\leq& (\gamma+5\delta) n.\label{for-outlier-general2-2}
\end{eqnarray}
\end{lemma}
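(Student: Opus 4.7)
The plan is to prove the two inequalities separately via Chernoff concentration, using an auxiliary radius to bridge $\tilde{l}_i$ and $l_i$. The key algebraic observation driving everything is the identity $(1+\delta/\gamma)^2\gamma = \gamma + 2\delta + \delta^2/\gamma$, which lies strictly between $\gamma+\delta$ (the true fraction of points of $P$ outside $\mathbb{B}(o_i,l_i)$) and $\gamma+5\delta$ (a slightly looser outside-fraction we are willing to accept). I therefore introduce $\tilde{l}'_i$, the $((\gamma+5\delta)n+1)$-th largest distance from $P$ to $o_i$, so that by construction $|P \setminus \mathbb{B}(o_i,\tilde{l}'_i)| \leq (\gamma+5\delta)n$. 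It then suffices to prove the sandwich $\tilde{l}'_i \leq \tilde{l}_i \leq l_i$ with probability $1-\eta_2$: the upper half yields (\ref{for-outlier-general2-1}) directly, and the lower half combined with the monotonicity of $|P\setminus \mathbb{B}(o_i,\cdot)|$ in the radius yields (\ref{for-outlier-general2-2}).

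For the upper bound $\tilde{l}_i \leq l_i$, let $X$ count the sample points whose distance to $o_i$ strictly exceeds $l_i$; then $E[X] = (\gamma+\delta)n''$, and the event $\tilde{l}_i \leq l_i$ is equivalent to $X \leq (1+\delta/\gamma)^2\gamma n''$. Noting that $(1+\delta/\gamma)^2\gamma n'' = (1+\delta/\gamma)\cdot E[X]$, a standard upper-tail Chernoff bound with deviation parameter $\sigma = \delta/\gamma$ gives a failure probability at most $\exp\!\bigl(-\Omega(\delta^2 n''/\gamma)\bigr)$. For the lower bound $\tilde{l}_i \geq \tilde{l}'_i$, let $Y$ count the sample points at distance at least $\tilde{l}'_i$; then $E[Y] \geq (\gamma+5\delta)n''$, and the desired event is $Y \geq (1+\delta/\gamma)^2\gamma n''+1$. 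Using the hypothesis $\delta < \gamma/3$, a short calculation shows $(1-\delta/\gamma)(\gamma+5\delta) \geq (1+\delta/\gamma)^2\gamma$, i.e.\ there is a $\Theta(\delta/\gamma)$ multiplicative slack between $E[Y]$ and the threshold, so a lower-tail Chernoff bound with the same deviation $\sigma = \delta/\gamma$ yields a matching failure probability $\exp\!\bigl(-\Omega(\delta^2 n''/\gamma)\bigr)$.

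Choosing $n'' = \Theta\!\bigl(\tfrac{\gamma}{\delta^2}\log\tfrac{1}{\eta_2}\bigr)$ drives each tail bound below $\eta_2/2$, so a union bound finishes the argument. I expect the main difficulty to lie not in the Chernoff steps themselves but in calibrating the auxiliary radius $\tilde{l}'_i$: it has to be small enough that $E[Y]$ clears the threshold $(1+\delta/\gamma)^2\gamma n''$ by a full $\Theta(\delta/\gamma)$ factor (so that Chernoff succeeds with a sample of size only $\tilde{O}(\gamma/\delta^2)$), yet large enough that $|P\setminus \mathbb{B}(o_i,\tilde{l}'_i)|$ stays of the form $(\gamma+O(\delta))n$ demanded by (\ref{for-outlier-general2-2}). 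The assumption $\delta < \gamma/3$ is exactly what makes the choice $\gamma + 5\delta$ satisfy both constraints simultaneously.
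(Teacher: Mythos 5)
Your proof is correct and follows essentially the same strategy as the paper's: introduce an auxiliary radius $\tilde{l}'_i$, establish the sandwich $\tilde{l}'_i \leq \tilde{l}_i \leq l_i$ via two Chernoff bounds with $\Theta(\delta/\gamma)$ multiplicative slack, and conclude. The only cosmetic difference is the calibration of $\tilde{l}'_i$ — the paper sets $|P\setminus\mathbb{B}(o_i,\tilde{l}'_i)|=\frac{(\gamma+\delta)^2}{\gamma-\delta}n$ and then checks $\frac{(\gamma+\delta)^2}{\gamma-\delta}<\gamma+5\delta$ under $\delta<\gamma/3$, while you take $(\gamma+5\delta)n$ directly and verify the Chernoff slack $(1-\delta/\gamma)(\gamma+5\delta)\geq(1+\delta/\gamma)^2\gamma$ — both are algebraically equivalent uses of the same hypothesis.
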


\begin{proof}
Let $B$ denote  the set of sampled $n''$ points from $P$. For simplicity, let $t=(\delta+\gamma) n$. Assume $\tilde{l}'_i>0$ is the value such that $\Big|P\setminus \mathbb{B}(o_i, \tilde{l}'_i)\Big|=\frac{(\gamma+\delta)^2}{\gamma-\delta} n$. Recall that $l_i$ is the $\big(t+1\big)$-th largest distance from $P$ to $o_i$. Since $(\delta+\gamma) n<\frac{(\gamma+\delta)^2}{\gamma-\delta} n$, it is easy to know $\tilde{l}'_i\leq l_i$. Below, we aim to prove that the $\big((1+\delta/\gamma)^2\gamma n''+1\big)$-th farthest point from $B$ is in the ring bounded by the spheres $\mathbb{B}(o_i, \tilde{l}'_i)$ and $\mathbb{B}(o_i, l_i)$ (see Figure~\ref{fig-sub-outlier}).

Note the size $|B|=n''=O\big(\frac{\gamma}{\delta^2}\log\frac{1}{\eta_2}\big)$. Again, using the Chernoff bound  (let $\sigma=\delta/2$) and the same idea for proving (\ref{for-outlier-general1-1}),   we have 
\begin{eqnarray}
\Big|B\setminus \mathbb{B}(o_i, \tilde{l}'_i)\Big|&\geq& (1-\frac{\delta}{2\gamma})\frac{(\gamma+\delta)^2}{\gamma-\delta} n''>(1-\frac{\delta}{\gamma})\frac{(\gamma+\delta)^2}{\gamma-\delta}n''= (1+\delta/\gamma)^2 \gamma n'' ;\label{for-outlier-general2-3}\\
\Big|B\cap Q\big|&\leq& (1+\frac{\delta}{2\gamma})\frac{t}{n}n''< (1+\delta/\gamma)\frac{t}{n}n''=(1+\delta/\gamma)^2 \gamma n'', \label{for-outlier-general2-4}
\end{eqnarray}
with probability $1-\eta_2$. Suppose that (\ref{for-outlier-general2-3}) and (\ref{for-outlier-general2-4}) both hold. Recall that $\tilde{l}_i$ is the $\big((1+\delta/\gamma)^2\gamma n''+1\big)$-th largest distance from the sampled points $B$ to $o_i$, so $\Big|B\setminus \mathbb{B}(o_i, \tilde{l}_i)\Big|= (1+\delta/\gamma)^2 \gamma n'' $. Together with (\ref{for-outlier-general2-3}), we have $\Big|B\setminus \mathbb{B}(o_i, \tilde{l}_i)\Big|\leq \Big|B\setminus \mathbb{B}(o_i, \tilde{l}'_i)\Big|$, {\em i.e.,}
\begin{eqnarray}
 \tilde{l}_i\geq \tilde{l}'_i.\label{for-v4-b1}
 \end{eqnarray} 

The inequality (\ref{for-outlier-general2-4}) implies that the $\big((1+\delta/\gamma)^2\gamma n''+1\big)$-th farthest point (say $q_x$) from $B$ to $o_i$ is not in $Q$. Then, we claim that $ \mathbb{B}(o_i, \tilde{l}_i)\cap Q=\emptyset$. Otherwise, let $q_y\in  \mathbb{B}(o_i, \tilde{l}_i)\cap Q$. 
Then we have 
\begin{eqnarray}
||q_y-o_i||\leq\tilde{l}_i=||q_x-o_i||.\label{for-v9-5}
\end{eqnarray} 
Note that $Q$ is the set of farthest $t$ points to $o_i$ of $P$. So $q_x\notin Q$ implies 
\begin{eqnarray}
||q_x-o_i||<\min_{q\in Q}||q-o_i|| \leq ||q_y-o_i||
\end{eqnarray}
which is in contradiction to (\ref{for-v9-5}). Therefore, $ \mathbb{B}(o_i, \tilde{l}_i)\cap Q=\emptyset$. 
Further, since  $\mathbb{B}(o_i, l_i)$ excludes exactly the farthest $t$ points ({\em i.e.}, $Q$), ``$ \mathbb{B}(o_i, \tilde{l}_i)\cap Q=\emptyset$'' implies 
\begin{eqnarray}
 \tilde{l}_i\leq l_i.\label{for-v4-b2}
  \end{eqnarray}

Overall, we have $\tilde{l}_i\in[\tilde{l}'_i,l_i]$ from (\ref{for-v4-b1}) and (\ref{for-v4-b2}), {\em i.e.,} the $\big((1+\delta/\gamma)^2\gamma n''+1\big)$-th farthest point from $B$ locates in the ring bounded by the spheres $\mathbb{B}(o_i, \tilde{l}'_i)$ and $\mathbb{B}(o_i, l_i)$ as shown in Figure~\ref{fig-sub-outlier}. Also, $\tilde{l}_i\geq \tilde{l}'_i$ implies 
\begin{eqnarray}
\Big|P\setminus \mathbb{B}(o_i, \tilde{l}_i)\Big|&\leq& \Big|P\setminus \mathbb{B}(o_i, \tilde{l}'_i)\Big|=\frac{(\gamma+\delta)^2}{\gamma-\delta} n<(\gamma+5\delta) n,
\end{eqnarray}
where the last equality comes from the assumption $\delta<\gamma/3$. So (\ref{for-outlier-general2-1}) and (\ref{for-outlier-general2-2}) are true in Lemma~\ref{lem-outlier-general2}. 
\qed
\end{proof}

\begin{remark}
Actually our proposed Uniform-Adaptive Sampling method and Sandwich lemma are quite generic, and we will show that they can be generalized to solve a broader range of enclosing with outliers problems in Section~\ref{sec-ext}.
\end{remark}

\subsection{Sublinear Time Algorithm for Bi-criteria Approximation}
\label{sec-outlier-general}
%
%
%
%
%

In this section, we propose a sublinear time algorithm for computing a bi-criteria $(1+\epsilon, 1-\delta)$-approximation for the input instance $(P, \gamma)$; that is, the returned ball covers at least $\big(1-\delta-\gamma\big)n$ points and has radius at most $(1+\epsilon)\mathbf{Rad}(P_{\textnormal{opt}})$. 


Recall the remark following Theorem~\ref{the-newbc}. 
%
As long as the selected point has a distance to the center of $\textbf{MEB}(T)$ larger than $(1+\epsilon)$ times the optimal radius, the expected 
improvement will always be guaranteed. Following this observation, we investigate the following approach. 
%
%
%
%
%
%
%
%
%
%
%
Suppose we run the core-set construction procedure decribed in Theorem~\ref{the-newbc} (we should replace $P$ by $P_{\text{opt}}$ in our following analysis). 
In the $i$-th step, we add an arbitrary point from $P_{\textnormal{opt}}\setminus \mathbb{B}(o_i, (1+\epsilon)\textbf{Rad}(P_{\text{opt}}))$ to $T$. We know that a $(1+\epsilon)$-approximation is obtained after at most $ \frac{2}{(1-s)\epsilon} $ steps, that is, $P_{\text{opt}}\subset \mathbb{B}\big(o_i, (1+\epsilon)\textbf{Rad}(P_{\text{opt}})\big)$ for some $i\leq   \frac{2}{(1-s)\epsilon}  $. 


However, we need to solve two key issues for realizing 
the above approach: \textbf{(\rmnum{1})} how to determine the value of $\textbf{Rad}(P_{\text{opt}})$ and \textbf{(\rmnum{2})} how to correctly select a point from $P_{\textnormal{opt}}\setminus\mathbb{B}(o_i, (1+\epsilon)\textbf{Rad}(P_{\text{opt}}))$. Actually, we can implicitly avoid the first issue via replacing $(1+\epsilon)\textbf{Rad}(P_{\text{opt}})$ by the $t$-th largest distance from the points of $P$ to $o_i$, where we set $t=(\delta+\gamma) n$ for guaranteeing a $(1+\epsilon, 1-\delta)$-approximation. For the second issue, we randomly select one point from the farthest $t$ points of $P$ to $o_i$, and show that it belongs to $P_{\textnormal{opt}}\setminus\mathbb{B}(o_i, (1+\epsilon)\textbf{Rad}(P_{\text{opt}}))$ with a certain probability.



Based on the above idea, we present a sublinear time  $(1+\epsilon, 1-\delta)$-approximation algorithm in this section. To better  understand the algorithm, we show a linear time algorithm first (Algorithm~\ref{alg-outlier} in Sections~\ref{sec-quality}). Note that B\u{a}doiu {\em et al.}~\cite{BHI} also presented a $(1+\epsilon, 1-\delta)$-approximation algorithm but with a higher complexity, and please see our detailed analysis on the running time at the end of Sections~\ref{sec-quality}. More importantly, we can improve the running time of Algorithm~\ref{alg-outlier} to be sublinear. 
%
%
%
For this purpose, we need to avoid computing the farthest $t$ points to $o_i$, since this operation will take 
linear time. 
Also, Algorithm~\ref{alg-outlier} generates a set of candidates for the solution and we need to select the best one. This process also costs linear time.  By using the techniques proposed in Section~\ref{sec-twolemma}, we can solve these issues and develop 
%
a sublinear time algorithm that has the sample complexity independent of $n$ and $d$, in Section~\ref{sec-oulier-improve}.



  \subsubsection{A Linear Time Algorithm}
\label{sec-quality}

In this section, we present our linear time $(1+\epsilon,1-\delta)$-approximation algorithm for MEB with outliers.


\renewcommand{\algorithmicrequire}{\textbf{Input:}}
\renewcommand{\algorithmicensure}{\textbf{Output:}}
\begin{algorithm}
   \caption{$(1+\epsilon,1-\delta)$-approximation Algorithm for MEB with Outliers}
   \label{alg-outlier}
\begin{algorithmic}[1]
\REQUIRE A point set $P$ with $n$ points in $\mathbb{R}^{d}$, the fraction of outliers $\gamma\in(0,1)$, and the parameters $0<\epsilon,\delta<1$, $z\in\mathbb{Z}^{+}$.
\STATE Let $t=(\delta+\gamma) n$.
\STATE Initially, randomly select a point $p\in P$ and let $T=\{p\}$. 
\STATE $i=1$; repeat the following steps until $i>z$:
\begin{enumerate}[(1)]
\item  Denote by $c_i$   the exact center   of $\textbf{MEB}(T)$.
Compute the approximate center $o_i$ with a distance to $c_i$ of less than $\xi \textbf{Rad}(T)=s\frac{\epsilon}{1+\epsilon}\textbf{Rad}(T)$  as described in Theorem~\ref{the-newbc}, where $s$ is set to be $\frac{\epsilon}{2+\epsilon}$.

\item Let $Q$ be the set of farthest $t$ points from $P$ to $o_i$; denote by $l_i$ the $(t+1)$-th largest distance from $P$ to $o_i$.
\item Randomly select a point $q\in Q$, and add it to $T$. 
\item $i=i+1$.
\end{enumerate}
\STATE Output the ball $\mathbb{B}(o_{\hat{i}}, l_{\hat{i}})$ where $\hat{i}=\arg_{i}\min\{l_i\mid 1\leq i\leq z\}$.
\end{algorithmic}
\end{algorithm}

\begin{theorem}
\label{the-outlier}
If the input parameter $z= \frac{2}{(1-s)\epsilon}$ (we assume it is an integer for convenience), then with probability $(1-\gamma)(\frac{\delta}{\gamma+\delta})^{z}$, Algorithm~\ref{alg-outlier} outputs a $(1+\epsilon,1-\delta)$-approximation for the MEB with outliers problem.
\end{theorem}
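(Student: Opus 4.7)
The plan is to reduce the analysis, conditional on a favorable random event, to a direct application of the core-set analysis in Theorem~\ref{the-newbc} and Remark~\ref{rem-newbc} applied to $P_{\text{opt}}$. Informally, I would view the loop of Algorithm~\ref{alg-outlier} as attempting to simulate the BC core-set construction on $P_{\text{opt}}$ even though the algorithm only has access to $P$; the random choice from $Q$ serves as a substitute for the exact ``farthest point of $P_{\text{opt}}$'' step, and the analysis goes through as long as every sampled point turns out to lie in $P_{\text{opt}}$.

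First I would estimate the probability of the good event $E$ = ``the initial seed $p$ lies in $P_{\text{opt}}$ and every one of the $z$ sampled points $q$ in the loop lies in $P_{\text{opt}}$''. Since $|P_{\text{opt}}| = (1-\gamma)n$, the initial pick satisfies $p \in P_{\text{opt}}$ with probability at least $1-\gamma$. In each iteration, $Q$ consists of the $t=(\delta+\gamma)n$ farthest points of $P$ from $o_i$; because there are only $\gamma n$ outliers in total, at most $\gamma n$ of them can sit in $Q$, so $|P_{\text{opt}}\cap Q|\geq \delta n$. Hence the uniform sample $q\in Q$ lands in $P_{\text{opt}}$ with probability at least $\delta/(\delta+\gamma)$. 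Multiplying gives $\Pr[E]\geq (1-\gamma)\bigl(\delta/(\delta+\gamma)\bigr)^{z}$, matching the bound in the statement.

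The core step is then to show that, conditional on $E$, there must exist some iteration $i^*\leq z$ for which $l_{i^*}\leq (1+\epsilon)\mathbf{Rad}(P_{\text{opt}})$. I would argue by contradiction: assume $l_i > (1+\epsilon)\mathbf{Rad}(P_{\text{opt}})$ for every $i\in\{1,\dots,z\}$. In iteration $i$ the sampled $q\in Q$ satisfies $\|q-o_i\|\geq l_i > (1+\epsilon)\mathbf{Rad}(P_{\text{opt}})$, and by event $E$ also $q\in P_{\text{opt}}$. Thus the loop is producing exactly the kind of sequence of points whose existence is ruled out by Theorem~\ref{the-newbc} and Remark~\ref{rem-newbc} applied to the instance $P_{\text{opt}}$: $o_i$ is computed with error bounded by $\xi r_i = s\tfrac{\epsilon}{1+\epsilon}r_i$ using $s=\epsilon/(2+\epsilon)$, and in each step we add to $T$ a point of $P_{\text{opt}}$ at distance exceeding $(1+\epsilon)\mathbf{Rad}(P_{\text{opt}})$ from $o_i$. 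Theorem~\ref{the-newbc} with $z = 2/((1-s)\epsilon)$ would then force $\mathbf{Rad}(T) > \mathbf{Rad}(P_{\text{opt}})$ after $z$ steps, which is impossible because $T\subseteq P_{\text{opt}}$ implies $\mathbf{Rad}(T)\leq \mathbf{Rad}(P_{\text{opt}})$.

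Once such an $i^*$ exists, the output ball $\mathbb{B}(o_{\hat{i}}, l_{\hat{i}})$ with $\hat{i}=\arg\min_i l_i$ satisfies $l_{\hat{i}}\leq l_{i^*}\leq (1+\epsilon)\mathbf{Rad}(P_{\text{opt}})$, and by definition of $l_{\hat{i}}$ as the $(t+1)$-th largest distance from $P$ to $o_{\hat{i}}$ it covers exactly $n-t = (1-\delta-\gamma)n$ points of $P$; this is precisely the bi-criteria $(1+\epsilon,1-\delta)$-guarantee in Definition~\ref{def-app-outlier}. The main obstacle I expect is the contradiction step: one has to verify cleanly that the hypotheses of Theorem~\ref{the-newbc} are actually met by the execution trace of Algorithm~\ref{alg-outlier} on the hidden subset $P_{\text{opt}}$ -- in particular that ``sampling $q$ uniformly from $Q$'' plays the same role as ``selecting a point of $P_{\text{opt}}$ whose distance from $o_i$ is at least $(1+\epsilon)\mathbf{Rad}(P_{\text{opt}})$'', using the observation in Remark~\ref{rem-newbc} that the chosen point need not be the exact farthest one.
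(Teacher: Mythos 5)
Your proposal is correct and follows essentially the same route as the paper. The paper splits the argument into Lemma~\ref{lem-outlier-1} (the probability that $T\subset P_{\text{opt}}$, which is exactly your event $E$) and Lemma~\ref{lem-outlier-2} (the dichotomy that in each round either $o_i$ already yields a $(1+\epsilon,1-\delta)$-approximation or the radius recurrence $r_{i+1}>(1+\epsilon)\mathbf{Rad}(P_{\text{opt}})-\|c_i-c_{i+1}\|-\xi r_i$ holds), and then derives the contradiction via the Theorem~\ref{the-newbc} / Remark~\ref{rem-newbc} analysis applied to $P_{\text{opt}}$ — precisely your argument that assuming $l_i>(1+\epsilon)\mathbf{Rad}(P_{\text{opt}})$ for all $i\le z$ forces $\mathbf{Rad}(T)>\mathbf{Rad}(P_{\text{opt}})$, impossible since $T\subseteq P_{\text{opt}}$.
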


Before proving Theorem~\ref{the-outlier}, we present the following two lemmas first.

\begin{lemma}
\label{lem-outlier-1}
With probability $(1-\gamma)(\frac{\delta}{\gamma+\delta})^{z}$, after running $z$ rounds in Step 3 of Algorithm~\ref{alg-outlier}, the obtained set $T\subset P_{\text{opt}}$.
\end{lemma}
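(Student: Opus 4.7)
The plan is to bound the probability that every point added to $T$ during the $z+1$ selections (one initialization plus $z$ loop iterations) lies in $P_{\text{opt}}$, and then conclude via the union of independent events.

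First, consider the initial selection in Step 2 of Algorithm~\ref{alg-outlier}. The first point $p$ is chosen uniformly at random from $P$, and since $|P_{\text{opt}}| = (1-\gamma)n$, we have $p \in P_{\text{opt}}$ with probability exactly $1-\gamma$. Condition on this event for the remainder of the argument.

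Next, I would analyze a single iteration of Step 3. The set $Q$ consists of the $t = (\delta+\gamma)n$ farthest points in $P$ from the current center $o_i$. The crucial observation is that the total number of outliers in $P$ is $|P \setminus P_{\text{opt}}| = \gamma n$, so at most $\gamma n$ of the points in $Q$ can be outliers. This yields the deterministic lower bound
\begin{equation*}
|Q \cap P_{\text{opt}}| \geq (\delta + \gamma)n - \gamma n = \delta n.
\end{equation*}
Since $q$ is selected uniformly at random from $Q$, the probability that $q \in P_{\text{opt}}$ is at least $\frac{\delta n}{(\delta+\gamma) n} = \frac{\delta}{\gamma+\delta}$, regardless of the history of previous selections (because the bound depends only on global cardinalities of $P$ and $P_{\text{opt}}$, not on the composition of $T$).

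Finally, I would combine the bounds. Since the random choices in the $z$ rounds are independent and each succeeds (in the sense that the picked point lies in $P_{\text{opt}}$) with probability at least $\frac{\delta}{\gamma+\delta}$ conditionally on the past, the probability that all $z$ additions lie in $P_{\text{opt}}$ is at least $\big(\frac{\delta}{\gamma+\delta}\big)^{z}$. Multiplying by the probability $1-\gamma$ for the initial selection yields the claimed bound $(1-\gamma)\big(\frac{\delta}{\gamma+\delta}\big)^{z}$ that $T \subset P_{\text{opt}}$ after $z$ rounds. No real obstacle arises here: the argument is a clean product-of-probabilities calculation, and the only thing to be careful about is ensuring that the lower bound $|Q \cap P_{\text{opt}}| \geq \delta n$ holds irrespective of where $o_i$ ends up, which follows purely from counting outliers.
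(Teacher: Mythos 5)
Your proof is correct and follows essentially the same route as the paper: both argue that since at most $\gamma n$ of the $(\delta+\gamma)n$ points in $Q$ can be outliers, a uniform draw from $Q$ lands in $P_{\text{opt}}$ with probability at least $\frac{\delta}{\gamma+\delta}$, and then multiply this conditional bound over the $z$ rounds together with the initial $1-\gamma$ factor. Your explicit remark that the lower bound on $|Q \cap P_{\text{opt}}|$ is deterministic and history-independent (so the conditional probabilities chain cleanly) is a worthwhile clarification that the paper leaves implicit.
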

\begin{proof}
Initially, because $|P_{\text{opt}}|/|P|=1-\gamma$, the first selected point in Step 2 belongs to $P_{\text{opt}}$ with probability $1-\gamma$. In each of the $z$ rounds in Step 3, the selected point belongs to $P_{\text{opt}}$ with probability $\frac{\delta}{\gamma+\delta}$, since 
\begin{eqnarray}
\frac{|P_{\text{opt}}\cap Q|}{|Q|}&=& 1-\frac{|Q\setminus P_{\text{opt}}|}{|Q|}\geq 1-\frac{|P\setminus P_{\text{opt}}|}{|Q|}=1-\frac{\gamma n}{(\delta+\gamma) n}=\frac{\delta}{\delta+\gamma}. \label{for-lem-outlier-1}
\end{eqnarray}
Therefore, with probability $(1-\gamma)(\frac{\delta}{\gamma+\delta})^{z}$ the whole set $T\subset P_{\text{opt}}$.
\qed\end{proof}

\begin{lemma}
In the $i$-th round of Step 3 for $1\leq i\leq z$, at least one of the following two events happens: (1) $o_i$ is the ball center of a $(1+\epsilon,1-\delta)$-approximation; (2) $r_{i+1}>(1+\epsilon)\textbf{Rad}(P_{\text{opt}})-||c_i-c_{i+1}||-\xi r_i$, where $r_i$ is the exact radius of $\textbf{MEB}(T)$ is the $i$-th round. 
\label{lem-outlier-2}
\end{lemma}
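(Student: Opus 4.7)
The plan is to prove the lemma as a simple dichotomy based on the magnitude of $l_i$, the $(t+1)$-th largest distance from $P$ to $o_i$, mirroring the analysis behind inequality~\eqref{for-bc--1} in Section~\ref{sec-newanalysis}. Either $l_i$ is small enough that the current ball is already the desired bi-criteria approximation, or $l_i$ is large and forces the MEB of $T$ to grow; exactly one of these alternatives must hold.

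For the first branch, I would suppose that $l_i \leq (1+\epsilon)\mathbf{Rad}(P_{\text{opt}})$. By the definition of $l_i$, the ball $\mathbb{B}(o_i, l_i)$ excludes exactly the farthest $t = (\delta+\gamma)n$ points of $P$, so it covers $(1-\delta-\gamma)n$ points. Combined with the radius bound $l_i \leq (1+\epsilon)\mathbf{Rad}(P_{\text{opt}})$, this matches the bi-criteria requirement from Definition~\ref{def-app-outlier}, which is exactly event~(1). Note that $o_i$ need not coincide with any extreme point — the argument only uses the placement of $l_i$ in the sorted distance list.

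For the second branch, assume instead $l_i > (1+\epsilon)\mathbf{Rad}(P_{\text{opt}})$. Then the chosen $q \in Q$ satisfies $\|q-o_i\| \geq l_i > (1+\epsilon)\mathbf{Rad}(P_{\text{opt}})$. Since $q$ is added to $T$ in Step~3(3), it lies in the updated set whose exact MEB has center $c_{i+1}$ and radius $r_{i+1}$, so $\|q - c_{i+1}\| \leq r_{i+1}$. Using the triangle inequality along $q \to c_{i+1} \to c_i \to o_i$ together with $\|c_i - o_i\| \leq \xi r_i$ (from Step~3(1)), I get
\begin{equation*}
(1+\epsilon)\mathbf{Rad}(P_{\text{opt}}) < \|q - o_i\| \leq \|q - c_{i+1}\| + \|c_{i+1} - c_i\| + \|c_i - o_i\| \leq r_{i+1} + \|c_i - c_{i+1}\| + \xi r_i,
\end{equation*}
which rearranges to event~(2). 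This is the direct analogue of~\eqref{for-bc--1}, with $\mathbf{Rad}(P)$ replaced by $\mathbf{Rad}(P_{\text{opt}})$ and with the guarantee on $\|q-o_i\|$ coming from the selection rule instead of the farthest-point rule.

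There is no real obstacle: the only subtlety is ensuring that the triangle-inequality argument does not require $q \in P_{\text{opt}}$, which is why I invoke $\|q - c_{i+1}\| \leq r_{i+1}$ through the fact that $q$ is in the set defining $c_{i+1}$ rather than through containment in $P_{\text{opt}}$. The lemma therefore holds deterministically in every round, independently of whether the success event in Lemma~\ref{lem-outlier-1} occurs; that randomness will only enter later when the second branch is iterated to force convergence within $z$ rounds.
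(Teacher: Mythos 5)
Your proposal is correct and follows essentially the same argument as the paper: the same dichotomy on whether $l_i \leq (1+\epsilon)\mathbf{Rad}(P_{\text{opt}})$, the same use of the fact that $\mathbb{B}(o_i, l_i)$ covers $(1-\delta-\gamma)n$ points in the first branch, and the same triangle-inequality chain $\|q-o_i\| \leq \|q-c_{i+1}\| + \|c_{i+1}-c_i\| + \|c_i-o_i\|$ in the second branch. Your closing remark that the lemma holds deterministically without requiring $q\in P_{\text{opt}}$ is a correct and useful observation, though the paper leaves it implicit.
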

\begin{proof}
If $l_i\le(1+\epsilon)\textbf{Rad}(P_{\text{opt}})$, then we are done. That is, the ball $\mathbb{B}(o_i, l_i)$ covers $(1-\delta-\gamma)n$ points with radius $l_i\leq (1+\epsilon)\textbf{Rad}(P_{\text{opt}})$ (the first event happens). Otherwise, $l_i> (1+\epsilon)\textbf{Rad}(P_{\text{opt}})$ and we consider the second event.
Let $q$ be the point added to $T$ in the $i$-th round. Using the triangle inequality, we have
\begin{eqnarray}
||o_i-q||\leq ||o_i-c_i||+||c_i-c_{i+1}||+|c_{i+1}-q||\leq \xi r_i+||c_i-c_{i+1}||+r_{i+1}. \label{for-lemma5-1}
\end{eqnarray}
 Since $l_i> (1+\epsilon)\textbf{Rad}(P_{\text{opt}})$ and $q$ lies outside  of $\mathbb{B}(o_i, l_i)$, {\em i.e,} $||o_i-q||\geq l_i> (1+\epsilon)\textbf{Rad}(P_{\text{opt}})$, (\ref{for-lemma5-1}) implies that the second event happens and the proof is completed.
\qed\end{proof}

\begin{proof} \textbf{(of Theorem~\ref{the-outlier})}
Suppose that the first event of Lemma~\ref{lem-outlier-2} never happens. As a consequence, we obtain a series of inequalities for each pair of radii $ r_{i+1}$ and $r_{i}$, {\em i.e.,} $r_{i+1}>(1+\epsilon)\textbf{Rad}(P_{\text{opt}})-||c_i-c_{i+1}||-\xi r_i$. Assume that $T\subset P_{\text{opt}}$ in Lemma~\ref{lem-outlier-1}, {\em i.e.,} each time the algorithm correctly adds a point from $P_{\text{opt}}$ to $T$. 
Using the almost identical idea for proving Theorem~\ref{the-newbc} in Section~\ref{sec-newanalysis}, 
we know that a $(1+\epsilon)$-approximate MEB of $P_{\text{opt}}$ is obtained after at most $z$ rounds.  The success probability directly comes from Lemma~\ref{lem-outlier-1}. Overall, we obtain Theorem~\ref{the-outlier}. 
\qed\end{proof}

Theorem~\ref{the-outlier} directly implies the following corollary.

\begin{corollary}
\label{cor-outlier}
If one repeatedly runs Algorithm~\ref{alg-outlier} $O(\frac{1}{1-\gamma}(1+\frac{\gamma}{\delta})^z)$ times, with constant probability, the algorithm outputs a $(1+\epsilon,1-\delta)$-approximation for the problem of MEB with outliers.
\end{corollary}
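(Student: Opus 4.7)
The plan is to obtain the corollary by a standard probability amplification argument on top of Theorem~\ref{the-outlier}. Each single execution of Algorithm~\ref{alg-outlier} independently produces a $(1+\epsilon,1-\delta)$-approximation with success probability at least $p:=(1-\gamma)\bigl(\tfrac{\delta}{\gamma+\delta}\bigr)^{z}$. Treating the $N$ independent executions as Bernoulli trials, the probability that every one of them fails is at most $(1-p)^{N}\le e^{-Np}$. To drive this below any desired constant (say $1/e$, which gives constant success probability) it therefore suffices to take
\begin{equation*}
N \;=\; \Theta\!\left(\tfrac{1}{p}\right)\;=\;\Theta\!\left(\tfrac{1}{(1-\gamma)}\Bigl(\tfrac{\gamma+\delta}{\delta}\Bigr)^{z}\right)\;=\;\Theta\!\left(\tfrac{1}{1-\gamma}\Bigl(1+\tfrac{\gamma}{\delta}\Bigr)^{z}\right),
\end{equation*}
which matches the claimed bound.

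Next I would explain how to turn ``at least one run succeeds'' into an algorithm that actually outputs such a run. Observe that every execution of Algorithm~\ref{alg-outlier} already returns a ball $\mathbb{B}(o_{\hat i},l_{\hat i})$ where $l_{\hat i}$ is the $(t+1)$-th largest distance from $P$ to $o_{\hat i}$; by construction such a ball always covers exactly $(1-\gamma-\delta)n$ points of $P$, i.e.\ it is always a feasible $(1-\delta)$-covering solution. Among the $N$ candidate balls we simply return the one with the smallest radius. Since with constant probability at least one candidate has radius $\le (1+\epsilon)\mathbf{Rad}(P_{\text{opt}})$, the chosen minimum-radius candidate must also satisfy this bound, and it still covers $(1-\gamma-\delta)n$ points. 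Hence the selected ball is a $(1+\epsilon,1-\delta)$-approximation, as required.

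There is essentially no technical obstacle here: all the work has been done in Theorem~\ref{the-outlier}, and what remains is routine. The only mild subtlety worth pointing out is that the selection step across the $N$ runs requires comparing the radii $l_{\hat i}$, which in this (linear-time) setting is perfectly fine; when the argument is later refined into a sublinear algorithm (Section~\ref{sec-oulier-improve}), this identification step must be replaced by the Sandwich Lemma estimator $\tilde{l}_i$, but that is beyond the scope of the present corollary.
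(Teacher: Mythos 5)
Your proposal is correct and matches the paper's (implicit) reasoning: the paper simply states that Theorem~\ref{the-outlier} directly implies the corollary, and what you wrote out is exactly the standard amplification argument one would fill in, including the observation that returning the minimum-radius candidate among the $N$ runs yields a valid $(1+\epsilon,1-\delta)$-approximation whenever at least one run succeeds. No gap or divergence from the paper's approach.
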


\textbf{Running time.} 
In Theorem~\ref{the-outlier}, we set $z=\frac{2}{(1-s)\epsilon}$ and $s\in(0,1)$. To keep $z$ small, according to Theorem~\ref{the-newbc}, we set $s=\frac{\epsilon}{2+\epsilon}$ so that $z=\frac{2}{\epsilon}+1$ (only larger than the lower bound $\frac{2}{\epsilon}$ by $1$). 
For each round of Step 3, we need to compute an approximate center $o_i$ that has a distance to the exact one less than $\xi r_i=s\frac{\epsilon}{1+\epsilon}r_i=O(\epsilon^2)r_i$. Using the algorithm proposed in~\cite{badoiu2003smaller}, this can be done in $O(\frac{1}{\xi^2}|T|d)=O(\frac{1}{\epsilon^5}d)$ time. Also, the set $Q$ can be obtained in linear time by the algorithm in~\cite{blum1973time}. 
In total, the time complexity for obtaining a $(1+\epsilon,1-\delta)$-approximation in Corollary~\ref{cor-outlier} is 
\begin{eqnarray}
O\big(\frac{C}{\epsilon}(n+\frac{1}{\epsilon^5})d\big) \label{for-time-bi},
\end{eqnarray}
where $C=O(\frac{1}{1-\gamma}(1+\frac{\gamma}{\delta})^{\frac{2}{\epsilon}+1})$.  As mentioned before, B\u{a}doiu {\em et al.}~\cite{BHI} also proposed a linear time bi-criteria approximation. However, the hidden constant of their running time is exponential in $\Theta(\frac{1}{\epsilon\delta})$  that is much larger than $\frac{2}{\epsilon}+1$.

%

\subsubsection{Improvement on Running Time}
\label{sec-oulier-improve}

In this section, we show that the running time of Algorithm~\ref{alg-outlier} can be further improved to be independent of the number of points $n$. First, we observe that it is not necessary to compute the set $Q$ of the farthest $t$ points in Step 3(2) of the algorithm. Actually, as long as the selected point $q$ is part of $ P_{\text{opt}}\cap Q$ in Step 3(3), a $(1+\epsilon, 1-\delta)$-approximation is still guaranteed. The Uniform-Adaptive Sampling procedure proposed in Section~\ref{sec-twolemma} can help us to obtain a point $q\in P_{\text{opt}}\cap Q$ without computing the set $Q$. Moreover, in Lemma~\ref{lem-outlier-general2}, we show that the radius of each candidate solution can be estimated via random sampling. Overall, we achieve a sublinear time algorithm (Algorithm~\ref{alg-outlier2}). Following the analysis in Section~\ref{sec-quality}, we set $s=\frac{\epsilon}{2+\epsilon}$ so that $z=\frac{2}{(1-s)\epsilon}=\frac{2}{\epsilon}+1$. We present the results in Theorem~\ref{the-outlier2} and Corollary~\ref{cor-outlier2}. Comparing with Theorem~\ref{the-outlier}, we have an extra $(1-\eta_1)(1-\eta_2)$ in the success probability in Theorem~\ref{the-outlier2}, due to the probabilities from Lemmas~\ref{lem-outlier-general1} and \ref{lem-outlier-general2}. Another minor issue is that the covering approximation error is increased from $\delta$  to $5\delta$ when applying Lemma~\ref{lem-outlier-general2}. Actually this issue can be easily solved by replacing $\delta$ by $\delta/5$ in the parameters $n'$, $t'$, $n''$, and $t''$, and the asymptotic complexity does not change.

\renewcommand{\algorithmicrequire}{\textbf{Input:}}
\renewcommand{\algorithmicensure}{\textbf{Output:}}
\begin{algorithm}
\vspace{-0.05in}
   \caption{Sublinear Time $(1+\epsilon,1-\delta)$-approximation Algorithm for MEB with Outliers}
   \label{alg-outlier2}
\begin{algorithmic}[1]
\REQUIRE A point set $P$ with $n$ points in $\mathbb{R}^{d}$, the fraction of outliers $\gamma\in(0,1)$, and the parameters $\epsilon,\eta_1, \eta_2\in (0,1)$, $\delta\in (0,1/3\gamma)$, and $z\in\mathbb{Z}^{+}$.
\STATE Let $n'=O(\frac{1}{\delta}\log\frac{1}{\eta_1})$, $n''=O\big(\frac{\gamma}{\delta^2}\log\frac{1}{\eta_2}\big)$, $t'=\frac{3}{2}(\delta/5+\gamma) n'$, and $t''=(1+\frac{\delta}{5\gamma})^2\gamma n''$.
\STATE Initially, randomly select a point $p\in P$ and let $T=\{p\}$. 
\STATE $i=1$; repeat the following steps until $j=z$:
\begin{enumerate}[(1)]
\item Compute the approximate MEB center $o_i$ of $T$.

\item Sample $n'$ points uniformly at random from $P$, and let $Q'$ be the set of farthest $t'$ points to $o_i$ from the sample.

\item Randomly select a point $q\in Q'$, and add it to $T$. 

\item Sample $n''$ points uniformly at random from $P$, and let $\tilde{l}_i$ be the $(t''+1)$-th largest distance from the sampled points to $o_i$.
\item $i=i+1$.
\end{enumerate}
\STATE Output the ball $\mathbb{B}(o_{\hat{i}}, \tilde{l}_{\hat{i}})$ where $\hat{i}=\arg_{i}\min\{\tilde{l}_i\mid 1\leq i\leq z\}$.
\end{algorithmic}
\end{algorithm}


\begin{theorem}
\label{the-outlier2}
If the input parameter $z= \frac{2}{\epsilon}+1$, then with probability $(1-\gamma)\big((1-\eta_1)(1-\eta_2)\frac{\delta/5}{3(\gamma+\delta/5)}\big)^{z}$, Algorithm~\ref{alg-outlier2} outputs a $(1+\epsilon,1-\delta)$-approximation for the problem of MEB with outliers.
\end{theorem}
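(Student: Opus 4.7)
The proof plan closely mirrors that of Theorem~\ref{the-outlier}, but replaces the deterministic ``select from the farthest $t$ points'' and ``compute $l_i$ exactly'' steps with the two probabilistic tools from Section~\ref{sec-twolemma}. Let $t=(\delta/5+\gamma)n$, let $Q$ be the set of the farthest $t$ points of $P$ to $o_i$, and let $l_i$ be the $(t+1)$-th largest distance from $P$ to $o_i$ (the quantities that Algorithm~\ref{alg-outlier} used deterministically). The high-level goal is to show that, with the claimed probability, after $z=2/\epsilon+1$ rounds we reach some iteration $\hat i$ where (i) the true value $l_{\hat i}\leq(1+\epsilon)\mathbf{Rad}(P_{\text{opt}})$, and (ii) the estimator $\tilde l_{\hat i}$ output by Algorithm~\ref{alg-outlier2} sandwiches between a covering-safe value and $l_{\hat i}$.

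The first step is to guarantee that $T\subset P_{\text{opt}}$ holds throughout the run. The initial point in Step~2 lies in $P_{\text{opt}}$ with probability $1-\gamma$. In each round, I invoke Lemma~\ref{lem-outlier-general1} with parameter $\delta/5$ in place of $\delta$: with probability $1-\eta_1$, the set $Q'$ of the farthest $t'=\frac{3}{2}(\delta/5+\gamma)n'$ points from the uniform sample satisfies $|Q'\cap(P_{\text{opt}}\cap Q)|/|Q'|\geq \frac{\delta/5}{3(\gamma+\delta/5)}$, so a uniformly random $q\in Q'$ lies in $P_{\text{opt}}\cap Q$ with probability $(1-\eta_1)\frac{\delta/5}{3(\gamma+\delta/5)}$. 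Multiplying across $z$ independent rounds and combining with the initial $1-\gamma$ gives the factor $(1-\gamma)\big((1-\eta_1)\frac{\delta/5}{3(\gamma+\delta/5)}\big)^z$ on the event that $T\subset P_{\text{opt}}$ and each added point is genuinely in $Q$.

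Assuming this event, I then reuse the core-set analysis underlying Theorem~\ref{the-outlier} and Theorem~\ref{the-newbc}. Since $o_i$ is computed within distance $\xi r_i=s\frac{\epsilon}{1+\epsilon}r_i$ of $c_i$ with $s=\epsilon/(2+\epsilon)$, and since the added point $q\in Q$ satisfies $\|q-o_i\|\geq l_i$, the dichotomy of Lemma~\ref{lem-outlier-2} applies verbatim: either some $l_i\leq(1+\epsilon)\mathbf{Rad}(P_{\text{opt}})$ already, or the recurrence $r_{i+1}>(1+\epsilon)\mathbf{Rad}(P_{\text{opt}})-\|c_i-c_{i+1}\|-\xi r_i$ holds, which by the proof of Theorem~\ref{the-newbc} cannot continue for more than $z=2/\epsilon+1$ rounds. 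Hence some iteration $i^*\leq z$ attains $l_{i^*}\leq(1+\epsilon)\mathbf{Rad}(P_{\text{opt}})$.

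The final step replaces the exact $l_i$ by the estimator $\tilde l_i$. For each round I apply Lemma~\ref{lem-outlier-general2} with $\delta/5$ in place of $\delta$: with probability $1-\eta_2$ both $\tilde l_i\leq l_i$ and $|P\setminus\mathbb{B}(o_i,\tilde l_i)|\leq(\gamma+5\cdot\delta/5)n=(\gamma+\delta)n$ hold. Taking the union over the $z$ rounds yields the additional factor $(1-\eta_2)^z$. Then the returned $\mathbb{B}(o_{\hat i},\tilde l_{\hat i})$ satisfies $\tilde l_{\hat i}\leq \tilde l_{i^*}\leq l_{i^*}\leq(1+\epsilon)\mathbf{Rad}(P_{\text{opt}})$ by the $\arg\min$ definition, and covers at least $(1-\delta-\gamma)n$ points of $P$ by the Sandwich Lemma. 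Multiplying all success probabilities gives the stated bound $(1-\gamma)\bigl((1-\eta_1)(1-\eta_2)\frac{\delta/5}{3(\gamma+\delta/5)}\bigr)^z$. The main obstacle I anticipate is bookkeeping: making sure the Sandwich Lemma's covering slack $5\delta$ is absorbed cleanly by the $\delta/5$ substitution without changing the asymptotic sample sizes, and that the $\arg\min$ over noisy estimates $\tilde l_i$ still selects an index whose \emph{true} radius is bounded by $(1+\epsilon)\mathbf{Rad}(P_{\text{opt}})$ rather than merely its estimate.
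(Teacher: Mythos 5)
Your proof is correct and follows essentially the same route as the paper: transplant the analysis of Theorem~\ref{the-outlier} into the sublinear setting by replacing the deterministic ``farthest-$t$-points'' step with Lemma~\ref{lem-outlier-general1} and the exact computation of $l_i$ with Lemma~\ref{lem-outlier-general2}, absorb the $5\delta$ covering slack via the $\delta/5$ substitution, and multiply the per-round success factors $(1-\eta_1)\tfrac{\delta/5}{3(\gamma+\delta/5)}$ and $(1-\eta_2)$ across $z$ rounds. The final selection concern you flag is already resolved by your own observation that the algorithm returns the ball of radius $\tilde l_{\hat i}$ (not $l_{\hat i}$), so $\tilde l_{\hat i}\leq\tilde l_{i^*}\leq l_{i^*}\leq(1+\epsilon)\mathbf{Rad}(P_{\textnormal{opt}})$ while the Sandwich Lemma certifies the coverage at index $\hat i$.
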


To boost the success probability in Theorem~\ref{the-outlier2}, we need to repeatedly run Algorithm~\ref{alg-outlier2} and output the best candidate. However, we need to be careful on setting the parameters. The success probability in Theorem~\ref{the-outlier2} consists of two parts, $\mathcal{P}_1=(1-\gamma)\big((1-\eta_1)\frac{\delta/5}{3(\gamma+\delta/5)}\big)^{z}$ and $\mathcal{P}_2=(1-\eta_2)^{z}$, where $\mathcal{P}_1$ indicates the probability that $\{o_1, \cdots, o_z\}$ contains a qualified candidate, and  $\mathcal{P}_2$ indicates the success probability of Lemma~\ref{lem-outlier-general2} over all the $z$ rounds. Therefore, if we run Algorithm~\ref{alg-outlier2} $N=O(\frac{1}{\mathcal{P}_1})$ times, with constant probability (by taking the union bound), the set of all the generated candidates contains at least one that yields a $(1+\epsilon,1-\delta)$-approximation; moreover, to guarantee that we can correctly estimate the resulting radii of all the candidates via the Sandwich Lemma with constant probability, we need to set $\eta_2=O(\frac{1}{zN})$ (because there are $O(zN)$ candidates).

%
%
%

\begin{corollary}
\label{cor-outlier2}
If one repeatedly runs Algorithm~\ref{alg-outlier2} $N=O\Big(\frac{1}{1-\gamma}\big(\frac{1}{1-\eta_1}(3+\frac{3\gamma}{\delta/5})\big)^{z}\Big)$ times with setting $\eta_2=O(\frac{1}{zN})$, with constant probability, the algorithm outputs a $(1+\epsilon,1-\delta)$-approximation for the problem of MEB with outliers.
\end{corollary}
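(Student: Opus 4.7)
The plan is to boost the success probability of Theorem~\ref{the-outlier2} via independent repetition, while carefully decoupling the two independent sources of randomness in a single execution of Algorithm~\ref{alg-outlier2}: (i) the \emph{candidate generation}, which needs to hit a point of $P_{\text{opt}}\cap Q$ in every round (governed by Lemmas~\ref{lem-outlier-1} and~\ref{lem-outlier-general1} together with the initial random pick), and (ii) the \emph{radius estimation}, which needs the Sandwich Lemma to give a correct $\tilde l_i$ for every candidate center produced (governed by Lemma~\ref{lem-outlier-general2}).

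Concretely, I would first write the success probability of a single run as the product $\mathcal{P}_1\cdot \mathcal{P}_2$, where
\[
\mathcal{P}_1=(1-\gamma)\Big((1-\eta_1)\tfrac{\delta/5}{3(\gamma+\delta/5)}\Big)^{z},\qquad \mathcal{P}_2=(1-\eta_2)^{z},
\]
and observe that $\mathcal{P}_1$ depends only on the sampling used to pick the centers $o_1,\dots,o_z$, not on the $(1-\eta_2)$ events in Step 3(4). Hence, across $N$ independent runs we have $N\cdot z$ candidate centers, and the event ``\emph{at least one run produces a sequence $T\subseteq P_{\text{opt}}$ with some $o_i$ being a $(1+\epsilon,1-\delta)$-approximation center}'' has probability at least $1-(1-\mathcal{P}_1)^{N}$, while the event ``\emph{every Sandwich Lemma invocation across all $Nz$ candidates succeeds}'' has probability at least $1-Nz\cdot\eta_2$ by a union bound.

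Next, I would plug in the choices $N=O\!\big(\tfrac{1}{1-\gamma}(\tfrac{1}{1-\eta_1}(3+\tfrac{3\gamma}{\delta/5}))^{z}\big)=\Theta(1/\mathcal{P}_1)$ and $\eta_2=O(1/(zN))$, so that each of the two events above holds with an arbitrary pre-chosen constant probability; by a further union bound both hold simultaneously with constant probability. Conditioning on both, let $i^\star$ be the round in the successful run for which $o_{i^\star}$ is a $(1+\epsilon,1-\delta)$-approximation center, i.e.\ its true $(t+1)$-th distance $l_{i^\star}\le (1+\epsilon)\mathbf{Rad}(P_{\text{opt}})$. By Lemma~\ref{lem-outlier-general2} the estimate $\tilde l_{i^\star}$ satisfies $\tilde l_{i^\star}\le l_{i^\star}\le(1+\epsilon)\mathbf{Rad}(P_{\text{opt}})$. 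The global minimizer $\hat i=\arg\min\tilde l_i$ therefore also satisfies $\tilde l_{\hat i}\le \tilde l_{i^\star}\le (1+\epsilon)\mathbf{Rad}(P_{\text{opt}})$, and by the second conclusion of Lemma~\ref{lem-outlier-general2} (applied with the rescaling $\delta\mapsto \delta/5$ built into Algorithm~\ref{alg-outlier2}, so $\gamma+5(\delta/5)=\gamma+\delta$) the ball $\mathbb{B}(o_{\hat i},\tilde l_{\hat i})$ misses at most $(\gamma+\delta)n$ input points. This is exactly the desired $(1+\epsilon,1-\delta)$-approximation.

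The only delicate point, and the main obstacle, is the bookkeeping of the union bound: the Sandwich Lemma must be guaranteed \emph{simultaneously} for \emph{all} $Nz$ candidate centers generated across the $N$ independent runs (otherwise the minimizer $\hat i$ could be a candidate whose $\tilde l_{\hat i}$ under-estimates its true radius and thus spuriously beats $\tilde l_{i^\star}$). This is precisely why $\eta_2$ must be chosen as $O(1/(zN))$ rather than merely $O(1/z)$; the extra $\log$ factor it introduces only affects the $\mathtt{polylog}$ term in the sample size $n''=O(\tfrac{\gamma}{\delta^2}\log\tfrac{1}{\eta_2})$ and does not change the asymptotic form of $N$. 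Everything else is a direct combination of Theorem~\ref{the-outlier2} with standard probability amplification, so once the union bound is written out, the statement of the corollary follows.
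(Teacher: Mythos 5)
Your proof follows the same approach as the paper's argument: decompose the single-run success probability into $\mathcal{P}_1$ (candidate generation) and $\mathcal{P}_2$ (Sandwich-Lemma estimation), amplify $\mathcal{P}_1$ by repeating $N=\Theta(1/\mathcal{P}_1)$ times, and take a union bound over all $O(zN)$ candidates to justify $\eta_2=O(1/(zN))$. Your write-up is in fact somewhat more explicit than the paper (in particular, the conditioning argument explaining why the global minimizer $\hat{i}$ inherits the quality guarantee), but the underlying reasoning and parameter choices are identical.
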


The calculation of running time is similar to (\ref{for-time-bi}) in Section~\ref{sec-quality}. We just replace $n$ by $\max\{n', n''\}=O\big(\frac{\gamma}{\delta^2}\log\frac{1}{\eta_2}\big)=O\big(\frac{\gamma}{\delta^2}\log(zN)\big)=\tilde{O}\big(\frac{\gamma}{\delta^2\epsilon}\big)$~\footnote{The asymptotic notation $\tilde{O}(f)=O\big(f\cdot \mathtt{polylog}(\frac{\gamma}{\eta_1\delta(1-\gamma)})\big)$.}, and change the value of $C$ to be $O\Big(\frac{1}{1-\gamma}\big(\frac{1}{1-\eta_1}(3+\frac{3\gamma}{\delta/5})\big)^{\frac{2}{\epsilon}+1}\Big)$. So the total running time is independent of $n$.


\subsection{General MEB Problem}
\label{sec-unknown2}
Now we consider solving the general MEB problem without the stability assumption in this Section. 
Let $0<\epsilon$, $\delta<1$ be two given parameters. 
First, we view the input $P$ as an instance $(P, \delta/2)$  of MEB with outliers ({\em i.e.,} $\gamma=\delta/2$). Then, we apply the algorithm of Section~\ref{sec-outlier-general} to obtain a bi-criteria $(1+\epsilon^2/2, 1-\delta/2)$-approximation solution $\mathbb{B}(c, r_c)$ (we replace the ``$\epsilon$'' by $\epsilon^2/2$ and replace the ``$\delta$'' by $\delta/2$). The obtained ball $\mathbb{B}(c, r_c)$ covers at least $(1- \delta/2- \delta/2)n=(1-\delta)n$ points of $P$, and the radius
\begin{eqnarray}
r_c\leq (1+\frac{1}{2}\epsilon^2)\cdot r_{-\delta/2}, \label{for-unknown2-1}
\end{eqnarray}
where $r_{-\delta/2}$ stands for the radius of the smallest ball that covers at least $(1-\delta/2)n$ points of $P$. 

Second, we assume that the input $P$ is an $(\alpha, \beta)$-stable instance with $\alpha=\epsilon^2$ and $\beta=\delta/2$; then run Algorithm~\ref{alg-meb3} to obtain a candidate ball center $\tilde{o}$ (of course, we can also use Algorithm~\ref{alg-meb1}, where the only difference is that the sample complexity will be higher). To compute the real radius $r_{\tilde{o}}$ yielded from $\tilde{o}$ (since $P$ may not be a real $(\alpha, \beta)$-stable instance), we just need to read the whole dataset $P$ in one pass. Finally, we determine the final output based on the ratio $r_{\tilde{o}}/r_c$.


 \begin{algorithm}
   \caption{\textbf{Hybrid Approximation for MEB}}
   \label{alg-relaxoutlier-2}
\begin{algorithmic}[1]
\REQUIRE An instance $P$ of MEB   problem in $\mathbb{R}^d$; two parameters $0<\epsilon, \delta<1$.
\STATE View the input as a $(P, \delta/2)$ instance of MEB with outliers; apply the method of Corollary~\ref{cor-outlier2} to obtain a bi-criteria $(1+\epsilon^2/2, 1-\delta/2)$-approximation solution $\mathbb{B}(c, r_c)$ on $(P, \delta/2)$.

\STATE Assume that the input $P$ is an $(\alpha, \beta)$-stable instance with $\alpha=\epsilon^2$ and $\beta=\delta/2$; then run Algorithm~\ref{alg-meb3} to obtain a candidate ball center $\tilde{o}$.

\STATE Read the whole input dataset $P$ in one-pass, and compute the radius $r_{\tilde{o}}=\max_{p\in P}||\tilde{o}-p||$.

\STATE If the ratio $\frac{r_{\tilde{o}}}{r_{c}}\leq\frac{1+\epsilon}{1-\epsilon^2/2}$, return the ball $\mathbb{B}(\tilde{o}, r_{\tilde{o}})$ and say ``it is a $(1+\epsilon)$-radius approximation''.

\STATE Else, return the ball $\mathbb{B}(c, r_c)$ and say ``it is a $(1-\delta)$-covering approximation''.  
 \end{algorithmic}
\end{algorithm}

\begin{theorem}
\label{the-relaxoutlier-2}
With constant success probability, Algorithm~\ref{alg-relaxoutlier-2} returns either a $(1+\epsilon)$-radius approximation or a $(1-\delta)$-covering approximation, and the running time is $O\Big(\big(n+h(\epsilon, \delta)\big)\cdot d\Big)$, where $h(\epsilon, \delta)=O\big(\frac{1}{1-\delta/2}\exp(O(1/\epsilon^2))\big)$. The algorithm only needs uniform sampling and a single pass over the input data, and the space complexity in memory is $O(h(\epsilon, \delta)\cdot d)$. Moreover, if the input data matrix (the $n\times d$ matrix representing the input $P$) has at most $M\ll nd$ non-zeros entries, the total running time will be $O\big(n+h(\epsilon, \delta)\cdot d+M\big)$.
\end{theorem}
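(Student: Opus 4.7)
The plan is to analyze the two branches of Algorithm~\ref{alg-relaxoutlier-2} separately and then tally the running time. Let $R=\mathbf{Rad}(P)$, and write $r_{-\gamma}$ for the minimum radius of any ball that covers $(1-\gamma)n$ points of $P$. Two immediate consequences of the construction will be used repeatedly: (a)~Corollary~\ref{cor-outlier2}, invoked on $(P,\delta/2)$ with its own error parameters set to $\epsilon^2/2$ and $\delta/2$, certifies $r_c\le(1+\epsilon^2/2)\,r_{-\delta/2}\le(1+\epsilon^2/2)R$; and (b)~$\mathbb{B}(c,r_c)$ covers $(1-\delta)n$ points, while $\mathbb{B}(\tilde o,r_{\tilde o})$ covers all of $P$ because $r_{\tilde o}$ is defined as the maximum distance from $\tilde o$ to $P$.

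For Case 1 (ratio at most the threshold $\tfrac{1+\epsilon}{1-\epsilon^2/2}$), chaining the threshold inequality with (a) gives
\begin{equation*}
r_{\tilde o}\le\frac{1+\epsilon}{1-\epsilon^2/2}\,r_c\le\frac{(1+\epsilon)(1+\epsilon^2/2)}{1-\epsilon^2/2}\,R=(1+\epsilon)\bigl(1+O(\epsilon^2)\bigr)R,
\end{equation*}
so $\mathbb{B}(\tilde o,r_{\tilde o})$ is a valid $(1+\epsilon)$-radius approximation after absorbing the $O(\epsilon^2)$ slack by a standard rescaling of $\epsilon$.

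The main obstacle is Case 2 (ratio exceeding the threshold), where I must argue $r_c\le R$ even though (a) alone only yields $r_c\le(1+\epsilon^2/2)R$. The plan is to split on whether $P$ is $(\epsilon^2,\delta/2)$-stable. If $P$ is \emph{not} stable, the violation of Definition~\ref{def-stable}(1) produces some $Q$ with $|Q|<\delta n/2$ and $\mathbf{Rad}(P\setminus Q)\le(1-\epsilon^2)R$; extending $Q$ to any superset of size $\lceil\delta n/2\rceil$ only shrinks the radius further, so $r_{-\delta/2}\le(1-\epsilon^2)R$, whence $r_c\le(1+\epsilon^2/2)(1-\epsilon^2)R\le(1-\epsilon^2/2)R$. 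If $P$ \emph{is} stable, Theorem~\ref{the-sample2} applies with $\alpha=\epsilon^2$ and $\beta_0=\delta/2$ and delivers $r_{\tilde o}\le(1+\epsilon)R$ (after rolling the constants in $\lambda$ into $\epsilon$); substituting this into the Case 2 rearrangement $r_c<\tfrac{1-\epsilon^2/2}{1+\epsilon}\,r_{\tilde o}$ again yields $r_c\le(1-\epsilon^2/2)R$. In both sub-cases, $r_c\le(1-\Theta(\epsilon^2))R<R$ while $\mathbb{B}(c,r_c)$ covers $(1-\delta)n$ points, so it is a valid $(1-\delta)$-covering approximation, matching the $\Theta(\epsilon^2)$ slack promised in the introduction.

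For the complexity, Step~1 is dominated by Corollary~\ref{cor-outlier2} with $\gamma=\delta/2$, $\epsilon\mapsto\epsilon^2/2$, $\delta\mapsto\delta/2$; plugging these substitutions into the corollary absorbs every polynomial factor of $1/\epsilon$ and $1/\delta$ into the exponential and yields $\tilde O(h(\epsilon,\delta)\,d)$ time with $h(\epsilon,\delta)=O\!\bigl(\tfrac{1}{1-\delta/2}\exp(O(1/\epsilon^2))\bigr)$, and Step~2 is no larger by Theorem~\ref{the-sample2}. Step~3 is a single linear pass, costing $O(nd)$ in general and $O(n+M)$ when the input has only $M$ nonzero entries. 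The samples used in Steps~1--2 and the max-distance scan in Step~3 can be streamed in a single pass (using reservoir sampling if random access is unavailable), and the only state kept in memory consists of the $O(h(\epsilon,\delta))$ sampled points and candidate centers, giving $O(h(\epsilon,\delta)\,d)$ space. Summing yields the claimed $O\bigl((n+h(\epsilon,\delta))\,d\bigr)$, or $O(n+h(\epsilon,\delta)\,d+M)$ in the sparse case. A union bound over the constant-probability guarantees of Corollary~\ref{cor-outlier2} and Theorem~\ref{the-sample2} preserves a constant overall success probability.
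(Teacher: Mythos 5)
Your argument reaches the same conclusion by essentially the same ingredients — Corollary~\ref{cor-outlier2} for the ball $\mathbb{B}(c,r_c)$, Theorem~\ref{the-sample2} for the center $\tilde o$, Definition~\ref{def-stable} for the unstable case, and the ratio threshold — but you order the case split as ratio-first, stability-second, whereas the paper splits on stability first. This reordering costs you something in the ratio-$\le\tau$ branch: you only invoke the bi-criteria bound $r_c\le(1+\epsilon^2/2)R$ and so end up with $r_{\tilde o}\le\tfrac{(1+\epsilon)(1+\epsilon^2/2)}{1-\epsilon^2/2}R$, which strictly exceeds $(1+\epsilon)R$, and you then appeal to a ``rescaling of $\epsilon$.'' That rescaling is not completely innocuous here because the threshold $\tfrac{1+\epsilon}{1-\epsilon^2/2}$ inside the algorithm is itself parameterized by $\epsilon$, so shrinking $\epsilon$ changes what the algorithm does. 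The cleaner route — and the one the paper takes — is to split on stability inside this branch too: if $P$ is $(\epsilon^2,\delta/2)$-stable then $r_{\tilde o}\le(1+\epsilon)R$ comes directly from Theorem~\ref{the-sample2} without touching $r_c$ at all; if $P$ is not stable, then $r_c\le(1-\epsilon^2/2)R$ by the stability definition, and the threshold inequality gives $r_{\tilde o}\le\tfrac{1+\epsilon}{1-\epsilon^2/2}(1-\epsilon^2/2)R=(1+\epsilon)R$ exactly. You already carry out precisely this stability case split in your ratio-$>\tau$ branch, so extending it to the other branch removes the $(1+O(\epsilon^2))$ slack and the rescaling appeal entirely. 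The complexity, space, sparse-matrix, and success-probability parts of your write-up match the paper.
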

\begin{remark}
\label{rem-the-relaxoutlier}
In the following proof, we will see that when the algorithm returns a $(1-\delta)$-covering approximation, the returned radius is  not only $
\leq \mathbf{Rad}(P)$, but also at most $\big(1-\Theta(\epsilon^2)\big)\cdot\mathbf{Rad}(P)$ (see (\ref{for-the-relaxoutlier-3-1}) and (\ref{for-the-relaxoutlier-4-1})).
\end{remark}

\begin{proof}
%
%

We study the time and space complexities first. The method of Corollary~\ref{cor-outlier2} only needs uniform samplings, and Step 2 of  Algorithm~\ref{alg-relaxoutlier-2} is a single pass over the input data.  
According to Corollary~\ref{cor-outlier2}, we know the space complexity is $O(h(\epsilon, \delta)\cdot d)$ with $h(\epsilon, \delta)=O\big(\frac{1}{1-\delta/2}\exp(O(1/\epsilon^2))\big)$. The total running time is $O\Big(\big(n+h(\epsilon, \delta )\big)\cdot d\Big)$. 
Furthermore, we consider the case that the input matrix is sparse. In Step~3, we need to compute the value  $r_{\tilde{o}}=\max_{p\in P}||\tilde{o}-p||$. For  each point $p\in P$, we know 
\begin{eqnarray}
||\tilde{o}-p||^2=||\tilde{o}||^2+||p||^2-2\langle \tilde{o}, p\rangle, \label{for-the-relaxoutlier-1}
\end{eqnarray}
where $\langle \tilde{o}, p\rangle$ stands for their inner product. The value of $||\tilde{o}||^2$ can be obtained in $O(d)$ time, and if 
the input data matrix  has  at most $M\ll nd$ non-zeros entries, the complexity for computing the values $\{||p||^2-2\langle \tilde{o}, p\rangle\mid p\in P\}$ 
is $O(n+M)$. Overall, the complexity of Algorithm~\ref{alg-relaxoutlier-2} is $O\big(n+h(\epsilon, \delta)\cdot d+M\big)$.

Now, we prove the solution quality. We let  $\alpha=\epsilon^2$ and $\beta=\delta/2$, and consider the following two cases. 

\textbf{Case 1:} the instance $P$ is $(\alpha, \beta)$-stable. Then we directly have
\begin{eqnarray}
r_{\tilde{o}}\leq  (1+\epsilon)\cdot\mathbf{Rad}(P). \label{for-the-relaxoutlier-2-1}
\end{eqnarray}
If  $\frac{r_{\tilde{o}}}{r_{c}}>\frac{1+\epsilon}{1-\epsilon^2/2}$, together with (\ref{for-the-relaxoutlier-2-1}), we have
\begin{eqnarray}
r_{c}<\Big(1-\epsilon^2/2\Big)\cdot\mathbf{Rad}(P). \label{for-the-relaxoutlier-3-1}
\end{eqnarray}
Then we can return the ball $\mathbb{B}(c, r_{c})$ and say ``it is a $(1-\delta)$-covering approximation''.  On the other hand, when $\frac{r_{\tilde{o}}}{r_{c}}\leq\frac{1+\epsilon}{1-\epsilon^2/2}$, from (\ref{for-the-relaxoutlier-2-1}) we can return the ball $\mathbb{B}(\tilde{o}, r_{\tilde{o}})$ and say ``it is a $(1+\epsilon)$-radius approximation''.

\textbf{Case 2:}  $P$ is not an $(\alpha, \beta)$-stable instance. Then, from the definition of stability we know the optimal radius of the instance $(P, \delta/2)$ is no larger than 
\begin{eqnarray}
(1-\epsilon^2)\cdot\mathbf{Rad}(P).\label{for-the-relaxoutlier-6-1}
\end{eqnarray}
So we have 
\begin{eqnarray}
r_{c}&<&(1+\frac{1}{2}\epsilon^2)(1- \epsilon^2)\cdot\mathbf{Rad}(P)<\Big(1-\epsilon^2/2\Big)\cdot\mathbf{Rad}(P). \label{for-the-relaxoutlier-4-1}
\end{eqnarray}
If $\frac{r_{\tilde{o}}}{r_{c}}\leq\frac{1+\epsilon}{1-\epsilon^2/2}$,  together with (\ref{for-the-relaxoutlier-4-1}), it implies 
\begin{eqnarray}
r_{\tilde{o}}<(1+\epsilon)\cdot\mathbf{Rad}(P). \label{for-the-relaxoutlier-5-1}
\end{eqnarray}
Then we can return the ball $\mathbb{B}(\tilde{o}, r_{\tilde{o}})$ and say ``it is a $(1+\epsilon)$-radius approximation''.  On the other hand,  when $\frac{r_{\tilde{o}}}{r_{c}}>\frac{1+\epsilon}{1-\epsilon^2/2}$,  from  (\ref{for-the-relaxoutlier-4-1}) we can return  the ball $\mathbb{B}(c, r_{c})$ and say ``it is a $(1-\delta)$-covering approximation''.

Since the success probability of the method of Section~\ref{sec-outlier-general} is constant, the overall success probability of Algorithm~\ref{alg-relaxoutlier} is constant as well. 
\qed\end{proof}

\textbf{More analysis on the result of Algorithm~\ref{alg-relaxoutlier-2}.} We further consider an ``inverse'' question: can we infer the stability degree of the given instance $P$ from the output of Algorithm~\ref{alg-relaxoutlier-2}?  In Step 2, we assume that $P$ is an $(\epsilon^2, \delta/2)$-stable instance, but this may not be true in reality. 
 Recall the definition of ``$(\alpha, \beta)$-stable'' in Definition~\ref{def-stable}. We know that there always exists a value $\hat{\alpha}\in [0, 1)$ such that $P$ is  a $(\hat{\alpha}, \delta/2)$-stable.  We can use ``$\hat{\alpha}$'' to indicate the stability degree of $P$, for the fixed ``$\delta/2$''. The following theorem shows that we can infer the value of $\hat{\alpha}$ through  Algorithm~\ref{alg-relaxoutlier-2}.  
   
\begin{theorem}
\label{the-infer}
If  Algorithm~\ref{alg-relaxoutlier-2} returns a $(1+\epsilon)$-radius approximation, then $\hat{\alpha}<\epsilon$; otherwise, the algorithm returns a $(1-\delta)$-covering approximation and it implies $\hat{\alpha}>\frac{\epsilon^2}{2}$. 

In other words, the algorithm can distinguish the case $\hat{\alpha}\geq\epsilon$ (it must returns a $(1-\delta)$-covering approximation) and the case $\hat{\alpha}\leq\frac{\epsilon^2}{2}$ (it must returns a $(1+\epsilon)$-radius approximation); but if $\frac{\epsilon^2}{2}<\hat{\alpha}<\epsilon$, the algorithm can return either  a $(1-\delta)$-covering approximation or a $(1+\epsilon)$-radius approximation.
\end{theorem}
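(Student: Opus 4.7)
The plan is to prove the two contrapositive implications separately. Set $R:=\mathbf{Rad}(P)$. From $(\hat{\alpha},\delta/2)$-stability, condition~(2) of Definition~\ref{def-stable} gives $r^*_{\delta/2}\leq(1-\hat{\alpha})R$ and condition~(1) gives $\mathbf{Rad}(P\setminus Q)>(1-\hat{\alpha})R$ for every $Q\subseteq P$ with $|Q|<(\delta/2)n$, where I write $r^*_{\delta/2}$ for the optimal radius of the $(1-\delta/2)n$-subset. Two elementary inequalities I will use throughout are $r_{\tilde{o}}\geq R$ (since $\mathbb{B}(\tilde{o},r_{\tilde{o}})$ covers all of $P$ by construction in Step~3 of Algorithm~\ref{alg-relaxoutlier-2}) and the bi-criteria guarantee $r_c\leq(1+\epsilon^2/2)\,r^*_{\delta/2}$ from Corollary~\ref{cor-outlier2}.

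Consider first $\hat{\alpha}\geq\epsilon$. Combining condition~(2) with the bi-criteria bound gives $r_c\leq(1+\epsilon^2/2)(1-\epsilon)R$, whence
\[
\frac{r_{\tilde{o}}}{r_c}\;\geq\;\frac{1}{(1+\epsilon^2/2)(1-\epsilon)}.
\]
A direct cross-multiplication shows this strictly exceeds the Step~4 threshold $(1+\epsilon)/(1-\epsilon^2/2)$: the inequality reduces to $1-\epsilon^2/2>(1-\epsilon^2)(1+\epsilon^2/2)=1-\epsilon^2/2-\epsilon^4/2$, i.e.\ $\epsilon^4/2>0$. Hence Step~5 is executed and a $(1-\delta)$-covering approximation is returned, as required.

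Consider next $\hat{\alpha}\leq\epsilon^2/2$. I first argue $r_{\tilde{o}}\leq(1+\epsilon)R$. A careful inspection of the proofs of Lemmas~\ref{lem-upper},~\ref{lem-sample2},~\ref{lem-stable1} shows that stability is invoked only through condition~(1); the existence clause~(2) is never used when bounding $\tilde{o}$. Since $(1-\hat{\alpha})R\geq(1-\epsilon^2/2)R>(1-\epsilon^2)R$, condition~(1) of $(\epsilon^2,\delta/2)$-stability is implied by the hypothesis, so Theorem~\ref{the-sample2}'s guarantee still yields $r_{\tilde{o}}\leq(1+\epsilon)R$. The remaining task is to establish the matching lower bound $r_c\geq(1-\epsilon^2/2)R$, which together with the preceding bound gives $r_{\tilde{o}}/r_c\leq(1+\epsilon)/(1-\epsilon^2/2)$ and hence triggers Step~4, returning a $(1+\epsilon)$-radius approximation.

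The hard part will be this lower bound on $r_c$. The trivial chain $r_c\geq r^*_\delta$ is too weak, since $r^*_\delta$ may be strictly smaller than $r^*_{\delta/2}$ and the strong $(\delta/2)$-level stability cannot be transported to the $\delta$-level directly. My plan is to extend the $(1-\delta)n$-subset $S\subseteq P$ certified by $\mathbb{B}(c,r_c)$ by adjoining $\lfloor(\delta/2)n\rfloor+1$ carefully chosen points $A\subseteq P\setminus S$: then $|P\setminus(S\cup A)|<(\delta/2)n$, so condition~(1) forces $\mathbf{Rad}(S\cup A)>(1-\hat{\alpha})R\geq(1-\epsilon^2/2)R$. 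The contradiction with the hypothetical $r_c<(1-\epsilon^2/2)R$ will then follow from bounding the MEB growth when augmenting $S$ with $A$, controlling the shift of the center via the ambient bound $\|p-o\|\leq R$ for all $p\in P$, where $o$ is the center of $\mathbf{MEB}(P)$. Making this geometric bookkeeping tight---in particular choosing $A$ so that $\mathbf{MEB}(S\cup A)$ stays essentially within $r_c$---is where the principal technical difficulty lies.
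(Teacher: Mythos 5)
Your contrapositive reformulation is valid, and the first half of your proof ($\hat{\alpha}\geq\epsilon\Rightarrow$ covering approximation) is correct and essentially the same calculation as the paper's Case~1, just run in the opposite direction: you combine $r_{\tilde{o}}\geq\mathbf{Rad}(P)$ with $r_c\leq(1+\epsilon^2/2)\,r^*_{\delta/2}\leq(1+\epsilon^2/2)(1-\epsilon)\mathbf{Rad}(P)$ and check the threshold by cross-multiplication, arriving at $\epsilon^4/2>0$. Good.

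The second half, however, has a genuine gap that your proposed repair cannot close. You want to establish the lower bound $r_c\geq(1-\epsilon^2/2)\mathbf{Rad}(P)$ by taking the $(1-\delta)n$ points $S$ inside $\mathbb{B}(c,r_c)$, adjoining $A\subseteq P\setminus S$ of size $\lfloor(\delta/2)n\rfloor+1$ so that $|P\setminus(S\cup A)|<(\delta/2)n$, and then invoking condition~(1). But condition~(1) only tells you $\mathbf{Rad}(S\cup A)>(1-\hat{\alpha})\mathbf{Rad}(P)$; it does not bound $\mathbf{Rad}(S\cup A)$ in terms of $r_c$. The points of $P\setminus S$ that you adjoin may sit anywhere in $\mathbf{MEB}(P)$, in particular at distance up to roughly $2\,\mathbf{Rad}(P)$ from $c$, so $\mathbf{MEB}(S\cup A)$ can be forced to have radius $\Theta(\mathbf{Rad}(P))$ no matter how tiny $r_c$ is; you get no contradiction. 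There is no "careful choice" of $A$ that rescues this, because $P\setminus S$ is not in your control.

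The paper's own proof does not attempt a standalone lower bound on $r_c$ relative to $\mathbf{Rad}(P)$. It proves the direct implication \emph{covering output $\Rightarrow\hat{\alpha}>\epsilon^2/2$} by case-splitting on whether $P$ is $(\epsilon^2,\delta/2)$-stable. In the non-stable case, the definition directly yields $r_{-\delta/2}\leq(1-\epsilon^2)\mathbf{Rad}(P)$, hence $\hat{\alpha}\geq\epsilon^2>\epsilon^2/2$, without touching $r_c$. In the stable case, $\hat{\alpha}=\epsilon^2>\epsilon^2/2$ is immediate, and the paper further checks consistency by combining $r_{\tilde{o}}\leq(1+\epsilon)\mathbf{Rad}(P)$ (from Theorem~\ref{the-stable}) with the ratio threshold and the bi-criteria bound $r_c\leq(1+\epsilon^2/2)\,r_{-\delta/2}$. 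You should adopt this case-split on stability (equivalently, on whether condition~(1) holds at level $\epsilon^2$) rather than trying to manufacture a lower bound on $r_c$; your observation that the $\tilde{o}$-side analysis only invokes condition~(1) is still useful for the stable sub-case, but the non-stable sub-case must be handled separately as the paper does.
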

\begin{proof}
Recall we set $\alpha=\epsilon^2$ and $\beta=\delta/2$ in Algorithm~\ref{alg-relaxoutlier-2}. 
First, we suppose the output is a $(1+\epsilon)$-radius approximation. One possible case is the instance $P$ is a real $(\alpha, \beta)$-stable instance, and then $\hat{\alpha}=\alpha<\epsilon$. 
The other possible case is that $P$ is not   $(\alpha, \beta)$-stable but the ratio $\frac{r_{\tilde{o}}}{r_{c}}\leq\frac{1+\epsilon}{1-\epsilon^2/2}$. Together with (\ref{for-unknown2-1}), we have 
\begin{eqnarray}
\frac{\mathbf{Rad}(P)}{r_{-\delta/2}}\leq \frac{r_{\tilde{o}}}{\frac{1}{1+\epsilon^2/2}r_{c}}\leq \frac{(1+\epsilon)(1+\epsilon^2/2)}{1-\epsilon^2/2}. 
\end{eqnarray}
So $\hat{\alpha}=1-\frac{r_{-\delta/2}}{\mathbf{Rad}(P)}\leq 1-\frac{1-\epsilon^2/2}{(1+\epsilon)(1+\epsilon^2/2)}<\epsilon$. Overall, as long as the output is a $(1+\epsilon)$-radius approximation, $\hat{\alpha}$ should be smaller than $\epsilon$.

 Then we suppose the output is a $(1-\delta)$-covering approximation. One possible case is the instance $P$ is not $(\alpha, \beta)$-stable, and then $\hat{\alpha}>\alpha=\epsilon^2$. 
The other possible case is that $P$ is   $(\alpha, \beta)$-stable but the ratio $\frac{r_{\tilde{o}}}{r_{c}}>\frac{1+\epsilon}{1-\epsilon^2/2}$. Together with (\ref{for-the-relaxoutlier-2-1}), we have  
 \begin{eqnarray}
\frac{\mathbf{Rad}(P)}{r_{-\delta/2}}\geq \frac{\frac{1}{1+\epsilon}r_{\tilde{o}}}{r_{c}}> \frac{1}{1-\epsilon^2/2}. 
\end{eqnarray}
So $\hat{\alpha}=1-\frac{r_{-\delta/2}}{\mathbf{Rad}(P)}> 1-(1-\epsilon^2/2)=\epsilon^2/2$. Overall, as long as the output is a $(1-\delta)$-covering approximation, $\hat{\alpha}>\min\{\epsilon^2, \epsilon^2/2\}=\frac{\epsilon^2}{2}$. 
  \qed
 \end{proof}


\section{ Extension \Rmnum{1}: Hybrid Approximation for MEB with Outliers}
\label{sec-unknown}

In this section, we extend the idea of Section~\ref{sec-unknown2} to present a hybrid approximation algorithm for the MEB with outliers problem $(P, \gamma)$. 
First, we  extend Definition~\ref{def-stable} of MEB to MEB with outliers.

\begin{definition}[($\alpha$, $\beta$)-stable for MEB with Outliers]
\label{def-outlier-stable}
Let $0<\alpha, \beta<1$. Given an instance $(P, \gamma)$ of the  MEB with outliers problem in Definition~\ref{def-outlier}, $(P, \gamma)$ is an ($\alpha$, $\beta$)-stable instance if (1) $\mathbf{Rad}(P\setminus Q)> (1-\alpha)\mathbf{Rad}(P_{\textnormal{opt}})$ for any $Q\subset P$ with $|Q|< \big(\gamma+\beta\big)n$, and (2) there exists a $Q'\subset P$ with $|Q'|=\lceil(\beta+\gamma)n\rceil$ having $\mathbf{Rad}(P\setminus Q')\leq (1-\alpha)\mathbf{Rad}(P_{\textnormal{opt}})$. 
\end{definition}

Definition~\ref{def-outlier-stable} directly implies the following claim.

\begin{claim}
\label{cla-outlier-stable}
If $(P, \gamma)$ is an ($\alpha$, $\beta$)-stable instance of the problem of MEB with outliers, the corresponding $P_{\textnormal{opt}}$ is an ($\alpha$, $\tilde{\beta}$)-stable instance of MEB with $\tilde{\beta}\geq\frac{\beta}{1-\gamma}$.
\end{claim}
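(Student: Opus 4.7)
The plan is to prove the claim by a direct contradiction argument that converts a hypothetical ``too easy'' reduction of $\mathbf{Rad}(P_{\textnormal{opt}})$ into a violation of the $(\alpha,\beta)$-stability of $(P,\gamma)$. Let $\tilde{\beta}$ denote the stability parameter of $P_{\textnormal{opt}}$ (viewed as an MEB instance) with respect to the same $\alpha$. The key observation is that since $|P_{\textnormal{opt}}|=(1-\gamma)n$, the inequality $\tilde{\beta}\geq \beta/(1-\gamma)$ is equivalent to the statement $\tilde{\beta}\cdot|P_{\textnormal{opt}}|\geq\beta n$, i.e., one must delete at least $\beta n$ points from $P_{\textnormal{opt}}$ to shrink its radius by a factor of $(1-\alpha)$.

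First I would assume, for contradiction, that $\tilde{\beta}<\beta/(1-\gamma)$. By condition~(2) of Definition~\ref{def-stable} applied to $P_{\textnormal{opt}}$, there exists a subset $R\subset P_{\textnormal{opt}}$ with $|R|=\lceil\tilde{\beta}(1-\gamma)n\rceil<\beta n$ such that $\mathbf{Rad}(P_{\textnormal{opt}}\setminus R)\leq(1-\alpha)\mathbf{Rad}(P_{\textnormal{opt}})$. The natural move is to ``augment'' $R$ into an outlier set for the original instance by adding all $\gamma n$ true outliers: set $Q:=R\cup(P\setminus P_{\textnormal{opt}})$. Then $|Q|=|R|+\gamma n<(\beta+\gamma)n$ and, crucially, $P\setminus Q=P_{\textnormal{opt}}\setminus R$, so $\mathbf{Rad}(P\setminus Q)\leq(1-\alpha)\mathbf{Rad}(P_{\textnormal{opt}})$.

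This directly contradicts condition~(1) of the $(\alpha,\beta)$-stability of $(P,\gamma)$ in Definition~\ref{def-outlier-stable}, which demands that $\mathbf{Rad}(P\setminus Q)>(1-\alpha)\mathbf{Rad}(P_{\textnormal{opt}})$ for every $Q\subset P$ with $|Q|<(\beta+\gamma)n$. Therefore $\tilde{\beta}\geq\beta/(1-\gamma)$, as desired. I would also briefly note that $P_{\textnormal{opt}}$ is automatically $(\alpha,\tilde{\beta})$-stable for some well-defined $\tilde{\beta}\in[1/|P_{\textnormal{opt}}|,1)$ (by the same argument given after Definition~\ref{def-stable} about $\beta$ being well-defined for any fixed $\alpha$), so the claim is a genuine lower bound on this threshold rather than a statement about an ill-defined quantity.

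The proof is essentially a one-line reduction, so there is no serious obstacle; the only thing to be careful about is the bookkeeping with the ceiling function and ensuring $|R|<\beta n$ (which follows from $\tilde{\beta}(1-\gamma)<\beta$ together with the strictness in the contradiction hypothesis, provided one handles the integer rounding correctly, e.g., by taking $\tilde{\beta}$ small enough that $\lceil\tilde{\beta}(1-\gamma)n\rceil\leq\lceil(\beta-\varepsilon)n\rceil<\beta n$ for sufficiently small $\varepsilon>0$).
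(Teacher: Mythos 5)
Your proof is correct and takes essentially the same approach as the paper: a contradiction argument in which a hypothetical shrinking subset $R\subset P_{\textnormal{opt}}$ is augmented with the $\gamma n$ outliers of $(P,\gamma)$ to produce a set $Q\subset P$ with $|Q|<(\beta+\gamma)n$ yet $\mathbf{Rad}(P\setminus Q)\leq(1-\alpha)\mathbf{Rad}(P_{\textnormal{opt}})$, violating Definition~\ref{def-outlier-stable}. The paper phrases this via the complementary set $P'=P_{\textnormal{opt}}\setminus R$ rather than $Q$ itself, and your remark about the ceiling bookkeeping is a reasonable (if minor) point of care that the paper's brief argument elides.
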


Note that Definition~\ref{def-outlier-stable} implicitly requires $\beta<1-\gamma$. So it implies the lower bound $\frac{\beta}{1-\gamma}$ of $\tilde{\beta}$ in Claim~\ref{cla-outlier-stable} cannot be larger than $1$. 
To see the correctness of 
Claim~\ref{cla-outlier-stable}, we can use contradiction. 
Suppose that there exists a subset $P'\subset P_{\textnormal{opt}}$ such that $|P'|> (1-\frac{\beta}{1-\gamma})|P_{\textnormal{opt}}|=(1-\gamma-\beta)n$ and $\mathbf{Rad}(P')\leq(1-\alpha)\mathbf{Rad}(P_{\textnormal{opt}})$. Then, it is in contradiction to the fact that $(P, \gamma)$ is an $(\alpha, \beta)$-stable instance of MEB with outliers. 

To apply the idea of Section~\ref{sec-unknown2},  a significant challenge is that the set $P_{\textnormal{opt}}$ is mixed with the outliers, and thus we cannot easily obtain a $(1+\epsilon)$-radius approximation as Algorithm~\ref{alg-relaxoutlier-2}. 
Our starting point is still the sublinear time bi-criteria approximation algorithm proposed in Section~\ref{sec-outlier-general}. Specifically, given any two small parameters $0<\epsilon$, $\delta<1$, the algorithm  returns a set of  candidate ball centers via the uniform-adaptive sampling procedure. We use $\Xi$ to denote this set.  With constant probability, as least one candidate from $\Xi$, say $s$, satisfies the following inequality: 
\begin{eqnarray}
\big|\mathbb{B}\big(s, (1+\epsilon)\cdot\mathbf{Rad}(P_{\textnormal{opt}})\big)\cap P\big|\geq \big(1-\delta-\gamma\big)n.\label{for-dingesa20}
\end{eqnarray}
Namely, it is a ``$(1+\epsilon, 1-\delta)$-approximation''. To pick such a qualified candidate,  it is possible to estimate the size of $\mathbb{B}\big(s, (1+\epsilon)\cdot\mathbf{Rad}(P_{\textnormal{opt}})\big)\cap P$ by using the uniform sampling based technique ``sandwich lemma'' (instead of reading the whole dataset $P$). 
%
It is worth to note an implicit fact about Theorem~\ref{the-outlier} of Section~\ref{sec-outlier-general}. Actually, in the proof it showed that among the candidate set $\Xi$, there exists one solution $s$  such that the ball $\mathbb{B}\big(s, (1+\epsilon)\cdot\mathbf{Rad}(P_{\textnormal{opt}})\big)$ covers at least $\big(1-\delta-\gamma\big)n$ points from $P_{\textnormal{opt}}$ (since the set $T\subset P_{\textnormal{opt}}$ and the solution $s$ is generated from $T$ (see Lemma~\ref{lem-outlier-1})). So the solution $s$ should satisfy 
\begin{eqnarray}
\big|\mathbb{B}\big(s, (1+\epsilon)\cdot\mathbf{Rad}(P_{\textnormal{opt}})\big)\cap  P_{\textnormal{opt}}\big|\geq \big(1-\delta-\gamma\big)n, \label{for-dingesa20-2}
\end{eqnarray}
which is stronger than (\ref{for-dingesa20}). But the  sandwich lemma may ignore such a stronger solution, since only selecting a solution satisfying (\ref{for-dingesa20})   is already sufficient to guarantee a $(1+\epsilon, 1-\delta)$-approximation. We introduce the following new algorithm for MEB with outliers based on this observation.

\vspace{0.05in}
\textbf{The hybrid approximation algorithm.}
 Let $\epsilon$ and $\delta$ be the two given parameters. First, we apply the method of Section~\ref{sec-outlier-general}. But we do not directly input the couple $(\epsilon, \delta)$ to the bi-criteria approximation algorithm; instead, we use $(\frac{1}{2(2\sqrt{2}+\sqrt{3})^2}\epsilon^2, \delta)$ (we will explain why we have the coefficient ``$\frac{1}{2(2\sqrt{2}+\sqrt{3})^2}$'' in our analysis). That is, we compute a set $\Xi$ of candidate ball centers via the uniform-adaptive sampling of Section~\ref{sec-outlier-general}, and at least one center yields a $(1+\frac{1}{2(2\sqrt{2}+\sqrt{3})^2}\epsilon^2, 1-\delta)$-approximation for the instance $(P, \gamma)$. Then, for each candidate $q\in \Xi$, we define two values:
\begin{eqnarray}
r_q&=& \min  \Big\{ r>0\mid\big|\mathbb{B}(q, r)\cap P\big|\geq (1-\gamma)n\Big\}; \label{for-alg-relax1}\\
r'_q&=& \min \Big\{ r>0\mid\big|\mathbb{B}(q, r)\cap P\big|\geq \big(1-\delta- \gamma\big)n\Big\}.\label{for-alg-relax2}
\end{eqnarray}
We can compute these two values for all the candidates of $\Xi$ by scanning the input $P$ in one pass (instead of using the sandwich lemma). We select the two points $s_1=\arg\min_{q\in\Xi}r_q$ and $s_2=\arg\min_{q\in\Xi}r'_q$ (they may or may not be the same point). If the ratio $\frac{r_{s_1}}{r'_{s_2}}\leq\frac{1+\epsilon}{1-\epsilon^2/\big(2(2\sqrt{2}+\sqrt{3})^2\big)}$, return the ball $\mathbb{B}(s_1, r_{s_1})$ and say ``it is a $(1+\epsilon)$-radius approximation''; else, return the ball $\mathbb{B}(s_2, r'_{s_2})$ and say ``it is a $(1-\delta)$-covering approximation''.


\begin{algorithm}
   \caption{\textbf{Hybrid Approximation for MEB with Outliers}}
   \label{alg-relaxoutlier}
\begin{algorithmic}[1]
\REQUIRE An instance $(P,\gamma)$ of MEB with outliers problem in $\mathbb{R}^d$; two parameters $0<\epsilon, \delta<1$.
\STATE Apply the uniform-adaptive sampling method of Section~\ref{sec-outlier-general} to obtain a set $\Xi$ of candidate ball centers, where at least one center yields a $(1+\frac{1}{2(2\sqrt{2}+\sqrt{3})^2}\epsilon^2, 1-\delta)$-approximation for the instance $(P, \gamma)$. 

\STATE Read the whole input dataset $P$ in one pass, and compute the values $r_q$ and $r'_q$ as the formulas (\ref{for-alg-relax1}) and (\ref{for-alg-relax2}) for each $q\in \Xi$. 

\STATE Let  $s_1=\arg\min_{q\in\Xi}r_q$ and $s_2=\arg\min_{q\in \Xi}r'_q$.

\STATE If the ratio $\frac{r_{s_1}}{r'_{s_2}}\leq\frac{1+\epsilon}{1-\epsilon^2/\big(2(2\sqrt{2}+\sqrt{3})^2\big)}$, return the ball $\mathbb{B}(s_1, r_{s_1})$ and say ``it is a $(1+\epsilon)$-radius approximation''.

\STATE Else, return the ball $\mathbb{B}(s_2, r'_{s_2})$ and say ``it is a $(1-\delta)$-covering approximation''.  
 \end{algorithmic}
\end{algorithm}

\begin{theorem}
\label{the-relaxoutlier}
With constant success probability, Algorithm~\ref{alg-relaxoutlier} returns either a $(1+\epsilon)$-radius approximation or a $(1-\delta)$-covering approximation, and the running time is $O(g(\epsilon, \delta, \gamma)\cdot nd)$, where $g(\epsilon, \delta, \gamma)=O(\frac{1}{1-\gamma}(\frac{ \gamma+\delta }{\delta})^{O(1/\epsilon^2)})$. The algorithm only needs uniform sampling and a single pass over the input data, and the space complexity in memory is $O(g(\epsilon, \delta, \gamma)\cdot d)$. Moreover, if the input data matrix (the $n\times d$ matrix representing the input $P$) has at most $M\ll nd$ non-zeros entries, the total running time will be $O\big(g(\epsilon, \delta, \gamma)\cdot(n+d+M)\big)$.
\end{theorem}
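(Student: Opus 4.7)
The plan is to parallel the proof of Theorem~\ref{the-relaxoutlier-2}, using Claim~\ref{cla-outlier-stable} to lift stability of the instance $(P,\gamma)$ to stability of $P_{\textnormal{opt}}$, applying Theorem~\ref{the-stable} to $P_{\textnormal{opt}}$, and then combining this with the bi-criteria guarantee of Corollary~\ref{cor-outlier2}. Throughout I set $\alpha:=\epsilon^2/\big(2(2\sqrt 2+\sqrt 3)^2\big)$, the stability parameter that matches the threshold in Step~4 of Algorithm~\ref{alg-relaxoutlier}.

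For complexity, by Corollary~\ref{cor-outlier2}, Step~1 is sublinear in $n$ and produces $|\Xi|=zN=O(g(\epsilon,\delta,\gamma))$ candidates, where $z=O(1/\alpha)=O(1/\epsilon^2)$ is the inner-round count and $N=O\big((1-\gamma)^{-1}((\gamma+\delta)/\delta)^{z}\big)$ is the repetition count; substituting these gives $g(\epsilon,\delta,\gamma)=O\big((1-\gamma)^{-1}((\gamma+\delta)/\delta)^{O(1/\epsilon^2)}\big)$ as claimed. Step~2 scans $P$ once and, for each $q\in\Xi$, extracts $r_q$ and $r'_q$ from a partial sort of the $n$ distances to $q$, at cost $O(|\Xi|\cdot nd)$ time and $O(|\Xi|\cdot d)$ working space. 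For a sparse input matrix with $M\ll nd$ non-zeros, the same inner-product expansion $\|q-p\|^2=\|q\|^2+\|p\|^2-2\langle q,p\rangle$ used in the proof of Theorem~\ref{the-relaxoutlier-2} reduces the distance-computation cost per candidate to $O(n+d+M)$, giving the final bound $O\big(g(\epsilon,\delta,\gamma)(n+d+M)\big)$.

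For correctness, Corollary~\ref{cor-outlier2} together with the strengthened observation (\ref{for-dingesa20-2}) guarantees, with constant probability, some $s^*\in\Xi$ with $|\mathbb{B}(s^*,(1+\alpha)\mathbf{Rad}(P_{\textnormal{opt}}))\cap P_{\textnormal{opt}}|\geq(1-\gamma-\delta)n$, so that $r'_{s_2}\leq r'_{s^*}\leq(1+\alpha)\mathbf{Rad}(P_{\textnormal{opt}})$. Splitting on whether $(P,\gamma)$ is $(\alpha,\delta)$-stable in the sense of Definition~\ref{def-outlier-stable}: in the stable case, Claim~\ref{cla-outlier-stable} makes $P_{\textnormal{opt}}$ an $(\alpha,\tilde\beta)$-stable MEB instance with $\tilde\beta\geq\delta/(1-\gamma)$, and Theorem~\ref{the-stable} (applied with $\tilde\epsilon:=\epsilon/\big(\sqrt 2(2\sqrt 2+\sqrt 3)\big)$ so that $\tilde\epsilon^2=\alpha$, $\beta_0=\delta/(1-\gamma)$, and $\epsilon'=\tilde\epsilon$) gives $\|s^*-o\|<(\epsilon/\sqrt 2)\mathbf{Rad}(P_{\textnormal{opt}})$, whence $r_{s_1}\leq r_{s^*}\leq(1+\epsilon/\sqrt 2)\mathbf{Rad}(P_{\textnormal{opt}})$; stability also forces $r'_{s_2}>(1-\alpha)\mathbf{Rad}(P_{\textnormal{opt}})$, so the ratio $r_{s_1}/r'_{s_2}<(1+\epsilon)/(1-\alpha)$ holds, the ratio test passes, and the returned $\mathbb{B}(s_1,r_{s_1})$ is a valid $(1+\epsilon)$-radius approximation. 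In the non-stable case, Definition~\ref{def-outlier-stable} supplies $Q$ with $|Q|<(\gamma+\delta)n$ and $\mathbf{Rad}(P\setminus Q)\leq(1-\alpha)\mathbf{Rad}(P_{\textnormal{opt}})$; the ratio-passes branch then gives $r_{s_1}\leq\frac{(1+\epsilon)(1+\alpha)}{1-\alpha}\mathbf{Rad}(P_{\textnormal{opt}})=(1+\epsilon)\mathbf{Rad}(P_{\textnormal{opt}})$ after absorbing the $O(\alpha)=O(\epsilon^2)$ slack into the target accuracy, mirroring~(\ref{for-the-relaxoutlier-5-1}), while the ratio-fails branch combined with the non-stability yields $r'_{s_2}<\mathbf{Rad}(P_{\textnormal{opt}})$, paralleling~(\ref{for-the-relaxoutlier-4-1}).

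The main obstacle is the non-stable, ratio-fails sub-case: unlike Theorem~\ref{the-relaxoutlier-2}, where the bi-criteria is run on a smaller-size instance and thus directly inherits a $(1-\alpha)$-shrunk radius under non-stability, the bi-criteria of Section~\ref{sec-outlier-general} here measures radius against $\mathbf{Rad}(P_{\textnormal{opt}})$ itself, so the required bound $r'_{s_2}\leq\mathbf{Rad}(P_{\textnormal{opt}})$ must be squeezed out of the ratio inequality jointly with the non-stability shrinkage; tracking the specific constant $1/\big(2(2\sqrt 2+\sqrt 3)^2\big)$ encoded in $\alpha$ is what makes Step~4's threshold valid in both sub-cases simultaneously. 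The overall success probability is the constant guarantee of Corollary~\ref{cor-outlier2}, which can be boosted by a constant number of independent repetitions of Algorithm~\ref{alg-relaxoutlier}.
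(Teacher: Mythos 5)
The complexity discussion and the Case~1 (stable) analysis are fine, and the identification of Claim~\ref{cla-outlier-stable} plus Theorem~\ref{the-stable} as the bridge matches the paper. The real problem is Case~2 (non-stable), which you flag as ``the main obstacle'' but do not actually close.

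\textbf{You never establish the key inequality.} The paper's proof of the non-stable case hinges on (\ref{for-the-relaxoutlier-4}), which asserts
\[
r'_{s_2} \;<\; \Bigl(1+\tfrac{1}{2(2\sqrt 2+\sqrt 3)^2}\epsilon^2\Bigr)\Bigl(1-\tfrac{1}{(2\sqrt 2+\sqrt 3)^2}\epsilon^2\Bigr)\mathbf{Rad}(P_{\textnormal{opt}}) \;<\;\Bigl(1-\tfrac{\epsilon^2}{2(2\sqrt 2+\sqrt 3)^2}\Bigr)\mathbf{Rad}(P_{\textnormal{opt}}).
\]
This bound must be obtained \emph{before} the ratio test is applied: it is what makes both sub-cases of Case~2 work. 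With it, the ratio-passes branch gives $r_{s_1}\le\frac{1+\epsilon}{1-\alpha/2}\,r'_{s_2}<(1+\epsilon)\mathbf{Rad}(P_{\textnormal{opt}})$ exactly, and the ratio-fails branch inherits the $(1-\Theta(\epsilon^2))$ factor claimed in Remark~\ref{rem-the-relaxoutlier}. You instead try to finish the ratio-passes branch from the weaker bi-criteria bound $r'_{s_2}\le(1+\alpha)\mathbf{Rad}(P_{\textnormal{opt}})$, which yields $r_{s_1}\le\frac{(1+\epsilon)(1+\alpha)}{1-\alpha}\mathbf{Rad}(P_{\textnormal{opt}})$. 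This is \emph{strictly greater} than $(1+\epsilon)\mathbf{Rad}(P_{\textnormal{opt}})$, and ``absorbing the $O(\alpha)$ slack into the target accuracy'' is not available here: the theorem claims $(1+\epsilon)$ with no slack, and the threshold in Step~4 is fixed in the algorithm --- you cannot retroactively inflate $\epsilon$ without changing the algorithm. Likewise, for the ratio-fails branch you simply assert ``combined with the non-stability yields $r'_{s_2}<\mathbf{Rad}(P_{\textnormal{opt}})$'' without any derivation.

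\textbf{Your choice of $\alpha$ is off by a factor of two, and that is not cosmetic.} You set $\alpha=\epsilon^2/\bigl(2(2\sqrt 2+\sqrt 3)^2\bigr)$, i.e.\ you equate the stability parameter with the bi-criteria error parameter. The paper sets the stability parameter to $\alpha=\epsilon^2/(2\sqrt 2+\sqrt 3)^2$ --- twice the bi-criteria parameter --- and $\beta=(1-\gamma)\delta$. This factor of two is exactly what makes the algebra in Case~2 close: with stability loss $\alpha$ and bi-criteria slack $\alpha/2$, one has $(1+\alpha/2)(1-\alpha)<1-\alpha/2$, so $r'_{s_2}$ lands strictly below the denominator of the Step~4 threshold. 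With your choice, even granting the analogous non-stability bound, you would only get $(1+\alpha)(1-\alpha)=1-\alpha^2$, which is \emph{larger} than $1-\alpha$, so the ratio-passes branch still overshoots $(1+\epsilon)$. In short, the specific constant $\frac{1}{2(2\sqrt 2+\sqrt 3)^2}$ appearing in the algorithm is the bi-criteria parameter, not the stability parameter, and misidentifying it breaks the tightly balanced inequalities that the threshold was designed to satisfy.
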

\begin{remark}
\label{rem-the-relaxoutlier}
Similar to Theorem~\ref{the-relaxoutlier-2}, we will see that when the algorithm returns a $(1-\delta)$-covering approximation, the returned radius is  at most $\big(1-\Theta(\epsilon^2)\big)\cdot\mathbf{Rad}(P_{\textnormal{opt}})$ (see (\ref{for-the-relaxoutlier-3}) and (\ref{for-the-relaxoutlier-4})).
\end{remark}
\begin{proof}\textbf{(of Theorem~\ref{the-relaxoutlier})} 
We study the time and space complexities first.  The method of Corollary~\ref{cor-outlier2}  only needs uniform samplings, and Step 2 of  Algorithm~\ref{alg-relaxoutlier} is a single pass over the input data.  The size of $\Xi$ is $g(\epsilon, \delta, \gamma)=O(\frac{1}{1-\gamma}(\frac{ \gamma+\delta }{\delta})^{O(1/\epsilon^2)})$ based on Corollary~\ref{cor-outlier2}. Overall, the space complexity is $O(g(\epsilon, \delta, \gamma)\cdot d)$. And the complexity for generating $\Xi$ is $O\big(|\Xi|\cdot\mathtt{poly}(\frac{1}{\epsilon}, \frac{1}{\delta})d\big)$ which is sublinear in the input size $nd$. It is easy to see that the complexity of Step 2 dominates the whole complexity. Therefore, the total running time is $O(g(\epsilon, \delta, \gamma)\cdot nd)$. Furthermore, we consider the case that the input matrix is sparse. Similar to the proof of  Theorem~\ref{the-relaxoutlier-2}, we know that 
the complexity of Algorithm~\ref{alg-relaxoutlier} is $O\big(g(\epsilon, \delta, \gamma)\cdot(n+d+M)\big)$ if the input data matrix  has  at most $M\ll nd$ non-zeros entries.

Now, we prove the solution quality. We let  $\alpha=\frac{1}{(2\sqrt{2}+\sqrt{3})^2}\epsilon^2$ and $\beta=(1-\gamma)\delta$, and consider the following two cases. 

\textbf{Case 1:} the instance $(P, \gamma)$ is $(\alpha, \beta)$-stable ({\em i.e.,} $P_{\textnormal{opt}}$ is an $(\alpha, \tilde{\beta})$-stable instance of MEB with $\tilde{\beta}\geq \delta$, according to Claim~\ref{cla-outlier-stable}). Denote by $o$ the optimal center of $\mathbf{MEB}(P_{\textnormal{opt}})$. We suppose one candidate ball center $q_0$ of $\Xi$ satisfies the formula (\ref{for-dingesa20-2}). As a consequence, from Theorem~\ref{the-stable}, we know that $||q_0-o||\leq (2\sqrt{2}+\sqrt{3})\sqrt{\alpha}\cdot\mathbf{Rad}(P_{\textnormal{opt}})=\epsilon\cdot\mathbf{Rad}(P_{\textnormal{opt}})$. That is, 
\begin{eqnarray}
r_{s_1}\leq r_{q_0}\leq (1+\epsilon)\cdot\mathbf{Rad}(P_{\textnormal{opt}}). \label{for-the-relaxoutlier-2}
\end{eqnarray}
If  $\frac{r_{s_1}}{r'_{s_2}}>\frac{1+\epsilon}{1-\epsilon^2/\big(2(2\sqrt{2}+\sqrt{3})^2\big)}$, together with (\ref{for-the-relaxoutlier-2}), we have
\begin{eqnarray}
r'_{s_2}<\Big(1-\epsilon^2/\big(2(2\sqrt{2}+\sqrt{3})^2\big)\Big)\cdot\mathbf{Rad}(P_{\textnormal{opt}}). \label{for-the-relaxoutlier-3}
\end{eqnarray}
Then we can return the ball $\mathbb{B}(s_2, r'_{s_2})$ and say ``it is a $(1-\delta)$-covering approximation''.  On the other hand, when $\frac{r_{s_1}}{r'_{s_2}}\leq\frac{1+\epsilon}{1-\epsilon^2/\big(2(2\sqrt{2}+\sqrt{3})^2\big)}$, from (\ref{for-the-relaxoutlier-2}) we can return the ball $\mathbb{B}(s_1, r_{s_1})$ and say ``it is a $(1+\epsilon)$-radius approximation''.

\textbf{Case 2:}  $(P, \gamma)$ is not an $(\alpha, \beta)$-stable instance. Then it implies
\begin{eqnarray}
r'_{s_2}&<&(1+\frac{1}{2(2\sqrt{2}+\sqrt{3})^2}\epsilon^2)(1-\frac{1}{(2\sqrt{2}+\sqrt{3})^2}\epsilon^2)\cdot\mathbf{Rad}(P_{\textnormal{opt}})\nonumber\\
&<&\Big(1-\epsilon^2/\big(2(2\sqrt{2}+\sqrt{3})^2\big)\Big)\cdot\mathbf{Rad}(P_{\textnormal{opt}}). \label{for-the-relaxoutlier-4}
\end{eqnarray}
If $\frac{r_{s_1}}{r'_{s_2}}\leq\frac{1+\epsilon}{1-\epsilon^2/\big(2(2\sqrt{2}+\sqrt{3})^2\big)}$,  together with (\ref{for-the-relaxoutlier-4}), it implies 
\begin{eqnarray}
r_{s_1}<(1+\epsilon)\cdot\mathbf{Rad}(P_{\textnormal{opt}}). \label{for-the-relaxoutlier-5}
\end{eqnarray}
Then we can return the ball $\mathbb{B}(s_1, r_{s_1})$ and say ``it is a $(1+\epsilon)$-radius approximation''.  On the other hand,  when $\frac{r_{s_1}}{r'_{s_2}}>\frac{1+\epsilon}{1-\epsilon^2/\big(2(2\sqrt{2}+\sqrt{3})^2\big)}$,  from (\ref{for-the-relaxoutlier-4}) we can return  the ball $\mathbb{B}(s_2, r'_{s_2})$ and say ``it is a $(1-\delta)$-covering approximation''.

Since the success probability of the method of Section~\ref{sec-outlier-general} is constant, the overall success probability of Algorithm~\ref{alg-relaxoutlier} is constant as well. 
\qed\end{proof}

We also have the following theorem for inferring the stability of the instance $(P, \gamma)$, and the proof is almost identical to the proof of Theorem~\ref{the-infer}.


\begin{theorem}
\label{the-infer2}
Suppose $(P, \gamma)$ is a $(\hat{\alpha}, (1-\gamma)\delta)$-stable instance. If  Algorithm~\ref{alg-relaxoutlier} returns a $(1+\epsilon)$-radius approximation, then $\hat{\alpha}<\epsilon$; otherwise, the algorithm returns a $(1-\delta)$-covering approximation and it implies $\hat{\alpha}>\frac{\epsilon^2}{2(2\sqrt{2}+\sqrt{3})^2}$. 
\end{theorem}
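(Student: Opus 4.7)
The plan is to follow the two-branch case analysis used in the proof of Theorem~\ref{the-infer}, with $\alpha=\epsilon^{2}/(2\sqrt{2}+\sqrt{3})^{2}$ and $\beta=(1-\gamma)\delta$ now playing the role of the algorithm's assumed stability parameters. Under the hypothesis that $(P,\gamma)$ is $(\hat\alpha,(1-\gamma)\delta)$-stable, Definition~\ref{def-outlier-stable} pins $\hat\alpha$ down uniquely from $(P,\gamma)$, so $(P,\gamma)$ is $(\alpha,\beta)$-stable exactly when $\hat\alpha=\alpha$. I would condition on the algorithm's output branch and then, inside each branch, on whether $\hat\alpha=\alpha$.

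In the $(1+\epsilon)$-radius branch the Step~4 test passed, so $r_{s_1}/r'_{s_2}\leq(1+\epsilon)/(1-\alpha/2)$. If $(P,\gamma)$ is $(\alpha,\beta)$-stable, uniqueness of $\hat\alpha$ gives $\hat\alpha=\alpha<\epsilon$ immediately. If not, I would plug in the bi-criteria bound $r'_{s_2}\leq(1+\alpha/2)\mathbf{Rad}(P_{\textnormal{opt}})$ from Corollary~\ref{cor-outlier2} together with the sharper bound $r'_{s_2}<(1+\alpha/2)(1-\alpha)\mathbf{Rad}(P_{\textnormal{opt}})$ derived in Case~2 of the proof of Theorem~\ref{the-relaxoutlier}, and combine with the trivial lower bound $r_{s_1}\geq\mathbf{Rad}(P_{\textnormal{opt}})$ (any ball centred at $s_1$ covering $(1-\gamma)n$ points has radius at least the MEB-with-$\gamma$-outliers radius). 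The rearrangement mirrors Case~B of Theorem~\ref{the-infer} and yields $\hat\alpha<\epsilon$.

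In the $(1-\delta)$-covering branch the Step~4 test failed, so $r_{s_1}/r'_{s_2}>(1+\epsilon)/(1-\alpha/2)$. If $(P,\gamma)$ is not $(\alpha,\beta)$-stable, then $\hat\alpha\neq\alpha$, and the sharp bound $r'_{s_2}<(1+\alpha/2)(1-\alpha)\mathbf{Rad}(P_{\textnormal{opt}})$ from Case~2 of the proof of Theorem~\ref{the-relaxoutlier} forces $\hat\alpha>\alpha>\alpha/2$. If $(P,\gamma)$ is $(\alpha,\beta)$-stable, then by Case~1 of that same proof $r_{s_1}\leq(1+\epsilon)\mathbf{Rad}(P_{\textnormal{opt}})$, and the failed ratio test rearranges to $r'_{s_2}<(1-\alpha/2)\mathbf{Rad}(P_{\textnormal{opt}})$; converting this into a witness that the optimal MEB radius of $P$ after removing $\lceil(\gamma+(1-\gamma)\delta)n\rceil$ points drops below $(1-\alpha/2)\mathbf{Rad}(P_{\textnormal{opt}})$ yields $\hat\alpha>\alpha/2=\epsilon^{2}/(2(2\sqrt{2}+\sqrt{3})^{2})$.

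The main obstacle I anticipate lies in that last step: a ball witnessing $r'_{s_2}$ covers at least $(1-\gamma-\delta)n$ points of $P$, so it discards up to $(\gamma+\delta)n$ points, whereas $\hat\alpha$ is defined via the strictly smaller discard count $\lceil(\gamma+(1-\gamma)\delta)n\rceil$. Reconciling this $\gamma\delta n$-point gap between the two scales, so that a covering witness for $r'_{s_2}$ really certifies a reduction in the stability-defining optimum (one uses that removing additional points can only decrease the optimal radius, together with Claim~\ref{cla-outlier-stable} relating $\beta$ on $P$ to $\tilde\beta$ on $P_{\textnormal{opt}}$), is the one genuinely new piece of work; everything else reduces to the routine arithmetic verification that $1-(1-\alpha/2)/\bigl((1+\epsilon)(1+\alpha/2)\bigr)<\epsilon$ for the declared value of $\alpha$ and $\epsilon\in(0,1)$.
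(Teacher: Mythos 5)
Your high-level plan — mirror the two-branch case analysis of Theorem~\ref{the-infer} with $\alpha=\epsilon^{2}/(2\sqrt{2}+\sqrt{3})^{2}$ and $\beta=(1-\gamma)\delta$ — is exactly what the paper intends (the paper only says the proof is ``almost identical'' to that of Theorem~\ref{the-infer}), and you have put your finger on the right technical obstacle: the ball $\mathbb{B}(s_2,r'_{s_2})$ is certified to cover only $(1-\gamma-\delta)n$ points, while $\hat\alpha$ is pinned down by the optimal radius after discarding the strictly smaller count $\lceil(\gamma+(1-\gamma)\delta)n\rceil$, i.e.\ covering the larger count $(1-\gamma)(1-\delta)n$. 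However, your proposed reconciliation does not close this gap; it runs the monotonicity in the wrong direction. Write $\tilde r=\mathbf{Rad}(P,\gamma+(1-\gamma)\delta)$ for the stability-defining optimum, so $\hat\alpha=1-\tilde r/\mathbf{Rad}(P_{\textnormal{opt}})$, and $r^{*}_{-\gamma-\delta}=\mathbf{Rad}(P,\gamma+\delta)$ for the optimum at the ball's actual coverage level. Since $(1-\gamma)(1-\delta)>1-\gamma-\delta$, covering more points can only make the optimal radius larger, so $\tilde r\geq r^{*}_{-\gamma-\delta}$. The ball $\mathbb{B}(s_2,r'_{s_2})$ is a feasible witness for the $(1-\gamma-\delta)n$ coverage level, hence it certifies $r^{*}_{-\gamma-\delta}\leq r'_{s_2}$, but this gives no upper bound on the \emph{larger} quantity $\tilde r$. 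Your appeal to ``removing additional points can only decrease the optimal radius'' is precisely the inequality $\tilde r\geq r^{*}_{-\gamma-\delta}$ — which goes the opposite way from what the covering branch needs (an upper bound on $\tilde r$), so it cannot convert a covering witness for $r'_{s_2}$ into a witness that $\tilde r$ is small.

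The same missing ingredient also undermines your radius branch, sub-case ``not stable.'' To mirror the paper's use of (\ref{for-unknown2-1}) in Theorem~\ref{the-infer}, what you need is an inequality of the form $r'_{s_2}\leq(1+\alpha/2)\,\tilde r$; plugging this together with $r_{s_1}\geq\mathbf{Rad}(P_{\textnormal{opt}})$ and the passed ratio test then gives $\hat\alpha\leq 1-\frac{1-\alpha/2}{(1+\epsilon)(1+\alpha/2)}<\epsilon$ by the same arithmetic as in Theorem~\ref{the-infer}. But the bi-criteria guarantee from Corollary~\ref{cor-outlier2} only supplies $r'_{s_2}\leq(1+\alpha/2)\mathbf{Rad}(P_{\textnormal{opt}})$, which is weaker because $\tilde r\leq\mathbf{Rad}(P_{\textnormal{opt}})$. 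As you wrote it, the rearrangement never introduces $\hat\alpha$ at all — the chain $r_{s_1}\geq\mathbf{Rad}(P_{\textnormal{opt}})$, $r'_{s_2}<(1+\alpha/2)(1-\alpha)\mathbf{Rad}(P_{\textnormal{opt}})$, plus the ratio test yields only a tautological numeric inequality, not a bound on $\hat\alpha$. Establishing $r'_{s_2}\leq(1+\alpha/2)\tilde r$ would require re-running the core-set/uniform-adaptive analysis with the hidden target set $P^{*}$ of size $n-\lceil(\gamma+\beta)n\rceil$ in place of $P_{\textnormal{opt}}$ (so that Lemma~\ref{lem-outlier-2} drives $l_i$ down to $(1+\alpha/2)\mathbf{Rad}(P^{*})=(1+\alpha/2)\tilde r$), and checking that the fraction of the far set lying in $P^{*}$ and the repeat count $N$ in Corollary~\ref{cor-outlier2} still suffice — the far-set fraction drops from $\frac{\delta}{\gamma+\delta}$ to $\frac{\delta-\beta}{\gamma+\delta}=\frac{\gamma\delta}{\gamma+\delta}$, so this is not automatic. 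That extra lemma is the genuinely missing piece of the proof, and neither monotonicity nor Claim~\ref{cla-outlier-stable} alone supplies it.
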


 \section{Extension \Rmnum{2}: Bi-criteria Approximations for MEX With Outliers}
\label{sec-ext}



In this section, we extend Definition~\ref{def-outlier} for MEB with outliers and define a more general problem called \textbf{minimum enclosing ``x'' (MEX) with Outliers}. Then we show that the ideas of Lemma~\ref{lem-outlier-general1} and~\ref{lem-outlier-general2} can be generalized to deal with MEX with outliers problems, as long as the shape ``x'' satisfies several properties. 
\textbf{To describe a shape ``x'', we need to clarify three basic concepts: center, size, and distance function.}

Let $\mathcal{X}$ be the set of specified shapes in $\mathbb{R}^d$. We require that each shape $x\in \mathcal{X}$ is uniquely determined by the following two components: 
``$c(x)$'', the \textbf{center} of $x$, and 
``$s(x)\geq 0$'', the \textbf{size} of $x$. For any two shapes $x_1, x_2\in\mathcal{X}$,  $x_1=x_2$ if and only if $c(x_1)=c(x_2)$ and $s(x_1)=s(x_2)$. Moreover, given a center $o_0$ and a value $l_0\geq 0$, we use $x(o_0, l_0)$ to denote the shape $x$ with $c(x)=o_0$ and $s(x)=l_0$. 
For different shapes, we have different definitions for the center and size. For example, if $x$ is a ball, $c(x)$ and $s(x)$ should be the ball center and the radius respectively; given $o_0\in \mathbb{R}^d$ and  $l_0\geq 0$,  $x(o_0, l_0)$ should be the ball $\mathbb{B}(o_0, l_0)$. As a more complicated example, consider the $k$-center clustering with outliers problem, which is to find $k$ balls to cover the input point set excluding a certain number of outliers and minimize the maximum radius ({\em w.l.o.g.,} we can assume that the $k$ balls have the same radius). For this problem, the shape ``x'' is a union of $k$ balls in $\mathbb{R}^d$; the center $c(x)$ is the set of the $k$ ball centers and the size $s(x)$ is the radius. 


For any point $p\in \mathbb{R}^d$ and any shape $x\in\mathcal{X}$, we also need to define a \textbf{distance function} $f(c(x), p)$ between the center $c(x)$ and $p$.
 For example, if $x$ is a ball, $f(c(x), p)$ is simply equal to $||p-c(x)||$; if $x$ is a union of $k$ balls with the center $c(x)=\{c_1, c_2, \cdots, c_k\}$, the distance should be $\min_{1\leq j\leq k}||p-c_j||$. Note that the distance function is only for ranking the points to $c(x)$, and not necessary to be non-negative ({\em e.g.,} in Section~\ref{sec-svm-general}, we define a distance function $f(c(x), p)\leq 0$ for SVM). By using this distance function, we can define the set ``$Q$'' and the value ``$l_i$'' when generalizing Lemma~\ref{lem-outlier-general1} and~\ref{lem-outlier-general2} below. To guarantee their correctnesses, we also require $\mathcal{X}$ to satisfy the following three properties. 

\begin{property}
\label{prop-1}
{\em For any two shapes $x_1\neq x_2\in\mathcal{X}$, if $c(x_1)=c(x_2)$, then
\begin{eqnarray}
s(x_1)\leq s(x_2)\Longleftrightarrow x_1 \text{ is covered by } x_2, 
\end{eqnarray}
where ``$x_1$ is covered by $x_2$'' means ``for any point $p\in \mathbb{R}^d$, $p\in x_1\Rightarrow p\in x_2$''.}
\end{property}

\begin{property}
\label{prop-2}
{\em Given any shape $x\in\mathcal{X}$ and any point $p_0\in x$, the set 
\begin{eqnarray}
\{p\mid p\in \mathbb{R}^d \text{ and } f(c(x), p)\leq f(c(x), p_0)\}\subseteq x.
\end{eqnarray}}
\end{property}

%

\begin{property}
\label{prop-4}
{\em Given any shape center $o_0$ and any point $p_0\in \mathbb{R}^d$, let $r_0=\min\{r\mid r\geq 0, p_0\in x(o_0, r)\}$. Then  $p_0\in x(o_0, r_0)$ and  $p_0 \notin x(o_0, r)$ for any $r<r_0$. (\textbf{Note:} usually the value $r_0$ is just the distance from $p_0$ to the shape center $o_0$; but for some cases, such as the SVM problem in Section~\ref{sec-svm-general}, the shape size and distance function have different meanings).}
 \end{property}

Intuitively, Property~\ref{prop-1} shows that $s(x)$ defines an order of the shapes sharing the same center $c(x)$. Property~\ref{prop-2} shows that the distance function $f$ defines an order of the points to a given shape center $c(x)$. Property~\ref{prop-4} shows that a center $o_0$ and a point $p_0$ can define a shape just ``touching'' $p_0$. We can take $\mathcal{X}=\{\text{all $d$-dimensional balls}\}$ as an example. For any two concentric balls, the smaller one is always covered by the larger one (Property~\ref{prop-1}); if a point $p_0$ is inside a ball $x$, any point $p$ having the distance $||p-c(x)||\leq ||p_0-c(x)||$ should be inside $x$ too (Property~\ref{prop-2}); also, given a ball center $o_0$ and a point $p_0$,  $p_0\in \mathbb{B}(o_0, ||p_0-o_0||)$ and $p_0\notin \mathbb{B}(o_0, r)$ for any $r<||p_0-o_0||$ (Property~\ref{prop-4}). 
 
%
%
%
Now, we introduce the formal definitions of the MEX with outliers problem and its bi-criteria approximation. 

\begin{definition} [MEX with Outliers]
\label{def-outlier-general}
Suppose the shape set $\mathcal{X}$ satisfies Property~\ref{prop-1}, \ref{prop-2}, and \ref{prop-4}. 
Given a set $P$ of $n$ points in $\mathbb{R}^d$ and a small parameter $\gamma\in (0,1)$, the MEX with outliers problem is to find the smallest shape $x\in \mathcal{X}$ that covers $(1-\gamma)n$ points. Namely, the task is to find a subset of $P$ with size $(1-\gamma)n$ such that its minimum enclosing shape of $\mathcal{X}$ is the smallest among all possible choices of the subset. The obtained solution is denoted by $\mathbf{MEX}(P, \gamma)$.
\end{definition}


\begin{definition}[Bi-criteria Approximation]
\label{def-app-general}
Given an instance $(P, \gamma)$ for MEX with outliers and two small parameters $0<\epsilon, \delta<1$, a $(1+\epsilon, 1-\delta)$-approximation of $(P, \gamma)$ is a solution $x\in\mathcal{X}$ that covers at least $\big(1-\delta-\gamma\big)n$ points and has the size at most $(1+\epsilon)s(x_{\text{opt}})$, where $x_{\text{opt}}$ is the optimal solution.
\end{definition}

It is easy to see that Definition~\ref{def-outlier} of MEB with outliers actually is a special case of Definition~\ref{def-outlier-general}. 
Similar to MEB with outliers, we still use $P_{\text{opt}}$, where $P_{\text{opt}}\subset P$ and $|P_{\text{opt}}|=(1-\gamma)n$, to denote the subset covered by the optimal solution of MEX with outliers. 

%
%
Now, we provide the generalized versions of Lemma~\ref{lem-outlier-general1} and \ref{lem-outlier-general2}. Similar to the core-set construction method in Section~\ref{sec-newanalysis}, we assume that there exists an iterative  algorithm $\Gamma$ to compute MEX (without outliers); actually, this is an important prerequisite to design the sub-linear time algorithms under our framework (we will discuss the iterative algorithms for the MEX with outliers problems   including flat fitting, $k$-center clustering, and SVM, in the following subsections). 
In the $i$-th iteration of $\Gamma$, it maintains a shape center $o_i$.  Also, let $Q$ be the set of $(\delta+\gamma) n$ farthest points from $P$ to $o_i$ with respect to the distance function $f$. First, we need to define the value ``$l_i$'' by $Q$ in the following claim. 

\begin{claim}
\label{claim-prop}
There exists a value $l_i\geq 0$ satisfying $P\setminus x(o_i, l_i)=Q$.
\end{claim}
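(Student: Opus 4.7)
The plan is to exhibit a single size $l_i\ge 0$ for which the shape $x(o_i,l_i)$ separates $P$ into the $(1-\delta-\gamma)n$ points that are $f$-closest to $o_i$ and the remaining $(\delta+\gamma)n$ points, which by construction are exactly the members of $Q$. All three structural properties of $\mathcal{X}$ will be needed: Property~\ref{prop-4} will be used to define $l_i$ via the ``rim'' point of $P\setminus Q$, Property~\ref{prop-2} will pack $P\setminus Q$ inside the shape, and a combination of Properties~\ref{prop-1}, \ref{prop-2}, and \ref{prop-4} will keep $Q$ outside.

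The first step is to pick $p^{\star}\in\arg\max_{p\in P\setminus Q} f(o_i,p)$ and define
\begin{equation*}
l_i\ :=\ \min\{\,r\ge 0 : p^{\star}\in x(o_i,r)\,\},
\end{equation*}
which is well-defined by Property~\ref{prop-4} and satisfies $p^{\star}\in x(o_i,l_i)$ while $p^{\star}\notin x(o_i,r)$ for every $r<l_i$. To verify $P\setminus Q\subseteq x(o_i,l_i)$, I would invoke Property~\ref{prop-2} on the shape $x(o_i,l_i)$ with base point $p^{\star}$: every $p\in P\setminus Q$ has $f(o_i,p)\le f(o_i,p^{\star})$ by the choice of $p^{\star}$, hence $p\in x(o_i,l_i)$.

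For the reverse inclusion $Q\cap x(o_i,l_i)=\emptyset$, I would fix $q\in Q$, set $r_q:=\min\{r\ge 0:q\in x(o_i,r)\}$ by Property~\ref{prop-4}, and use that $f(o_i,p^{\star})<f(o_i,q)$ by the definition of $Q$ as the top-$(\delta+\gamma)n$ set (adopting a consistent tie-breaking rule so that the ranks of $p^{\star}$ and of every $q\in Q$ are genuinely separated). Property~\ref{prop-2} applied to $x(o_i,r_q)$ with base point $q$ then forces $p^{\star}\in x(o_i,r_q)$, whence $r_q\ge l_i$ by the minimality of $l_i$; Property~\ref{prop-4} subsequently rules out $q\in x(o_i,l_i)$ provided the inequality is strict. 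The main obstacle is exactly this strictness: the equality $r_q=l_i$ could only occur if $p^{\star}$ and $q$ played symmetric roles at the boundary, which the tie-breaking convention on $Q$ excludes (and which is automatic in the generic case where all $f$-distances are distinct). With $r_q>l_i$ secured, combining the two inclusions yields $P\setminus x(o_i,l_i)=Q$ as claimed.
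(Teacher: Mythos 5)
Your proof follows essentially the same route as the paper's: both pick $l_i$ to be the threshold size $r_{j_0}$ associated with the $\big((\delta+\gamma)n+1\big)$-th farthest point, both use Property~\ref{prop-2} with that point as the base to show $P\setminus Q\subseteq x(o_i,l_i)$, and both argue the exclusion of $Q$ through Property~\ref{prop-4} via a strictness comparison of the threshold sizes. The only notable difference is that you explicitly flag the strictness $r_q>l_i$ as the potential obstacle and address it by a tie-breaking convention, whereas the paper asserts the corresponding strict inequality $r_{j_1}<r_{j_0}$ directly (relying, just as you do, on the implicit assumption that distinct $f$-ranks yield distinct threshold sizes — which holds in every concrete instantiation considered in the paper); so your treatment is, if anything, slightly more candid about where the argument depends on genericity.
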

\begin{proof}
The points of $P$ can be ranked based on their distances to $o_i$. Without loss of generality, let $P=\{p_1, p_2, \cdots, p_n\}$ with $f(o_i, p_1)> f(o_i, p_2)>\cdots > f(o_i, p_n)$ (for convenience, we assume that any two distances are not equal; if there is a tie, we can arbitrarily decide their order to $o_i$). Then the set $Q=\{p_j\mid 1\leq j\leq (\delta+\gamma) n\}$. Moreover, from Property~\ref{prop-4}, we know that each point $p_j\in P$ corresponds to a value $r_j$ that $p_j\in x(o_i, r_j)$ and $p_j\notin x(o_i, r)$ for any $r<r_j$. 
Denote by $x_{j}$ the shape $x(o_i, r_{j})$. We select the  point  $p_{j_0}$ with $j_0=(\delta+\gamma) n+1$. 
From Property~\ref{prop-2}, we know that $p_j\in x_{j_0}$ for any $j\geq j_0$, {\em i.e.,} \textbf{(a)}~$P\setminus Q\subseteq x_{j_0}$. We also need to prove that $p_j\notin x_{j_0}$ for any $j< j_0$. Assume there exists some $p_{j_1}\in x_{j_0}$ with $j_1< j_0$. Then we have $r_{j_1}< r_{j_0}$ and thus $p_{j_0}\notin x_{j_1}$ (by Property~\ref{prop-4}). By Property~\ref{prop-2}, $p_{j_0}\notin x_{j_1}$ implies $f(o_i, p_{j_0})>f(o_i, p_{j_1})$, which is in contradiction to the fact $f(o_i, p_{j_0})<f(o_i, p_{j_1})$. So we have \textbf{(b)}~$Q\cap x_{j_0}=\emptyset$. 

The above  \textbf{(a)} and \textbf{(b)} imply that $\{P\cap x_{j_0}, Q\}$ is a partition of $P$, {\em i.e.,} $(P\cap x_{j_0})\cup Q=P$ and $(P\cap x_{j_0})\cap Q=\emptyset$.  So we know $P\setminus x_{j_0}=Q$. Therefore, we can set the value $l_i=r_{j_0}$ and then $P\setminus x(o_i, l_i)=Q$. 
%
%
%
%
%
%
%
%
%
%
%
%
%
%
%
%
%
%
%
%
%
%
%
%
\qed\end{proof}





\begin{lemma}[Generalized Uniform-Adaptive Sampling]
\label{lem-outlier-general1-general}
Let $\eta_1\in(0,1)$.  If we sample $n'=O(\frac{1}{\delta}\log\frac{1}{\eta_1})$ points independently and uniformly at random from $P$ and let $Q'$ be the set of farthest $\frac{3}{2}(\delta+\gamma) n'$ points to $o_i$ from the sample, then, with probability at least $1-\eta_1$, the following holds 
\begin{eqnarray}
\frac{\Big|Q'\cap\big(P_{\text{opt}}\cap Q\big)\Big|}{|Q'|}\geq \frac{\delta}{3(\gamma+\delta)}.
\end{eqnarray}
\end{lemma}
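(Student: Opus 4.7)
The plan is to mimic the proof of Lemma~\ref{lem-outlier-general1} almost line by line, replacing the ad-hoc distance-threshold characterization of $Q$ used there with the shape-theoretic characterization provided by Claim~\ref{claim-prop}. The outline has four steps.

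First, I would let $A$ denote the set of $n'$ i.i.d.\ uniform samples from $P$. Since $|Q|=(\delta+\gamma)n$ and $Q$ contains at most $\gamma n$ outliers, we still have $|P_{\text{opt}}\cap Q|\geq \delta n$. Writing $\lambda=|P_{\text{opt}}\cap Q|/n\geq\delta$ and applying Chernoff bounds with a small slack $\sigma<1/2$ to the indicator variables ``$i$-th sample lies in $P_{\text{opt}}\cap Q$'' and ``$i$-th sample lies in $Q$,'' the choice $n'=\Theta(\tfrac{1}{\delta}\log\tfrac{1}{\eta_1})$ yields, with probability at least $1-\eta_1$,
\begin{equation*}
\big|A\cap(P_{\text{opt}}\cap Q)\big| > \tfrac{1}{2}\delta\, n' \qquad\text{and}\qquad \big|A\cap Q\big| < \tfrac{3}{2}(\delta+\gamma)\, n'.
\end{equation*}
These are exactly the same two inequalities as in the proof of Lemma~\ref{lem-outlier-general1}, since the Chernoff step does not depend on the geometry.

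Second, I need to produce the containment $A\cap Q \subseteq Q'$, which is the step that requires the generalized framework. By Claim~\ref{claim-prop} applied to $(P,o_i)$ there exists $l_i\geq 0$ with $P\setminus x(o_i,l_i)=Q$. Applying the same claim to $(A,o_i)$ with the threshold $\tfrac{3}{2}(\delta+\gamma)n'$ in place of $(\delta+\gamma)n$, there exists $l_i'\geq 0$ with $A\setminus x(o_i,l_i')=Q'$. Hence both $A\cap Q$ and $Q'$ are sublevel-complements of the same $o_i$-indexed shape family on $A$, i.e.\ they are nested: either $A\cap Q\subseteq Q'$ or $Q'\subseteq A\cap Q$. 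The cardinality bound $|A\cap Q|<\tfrac{3}{2}(\delta+\gamma)n'=|Q'|$ from the previous step rules out the latter, so $A\cap Q\subseteq Q'$.

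Third, intersecting both sides with $P_{\text{opt}}\cap Q$ and using $Q'\subseteq A$, I get $A\cap(P_{\text{opt}}\cap Q)\subseteq Q'\cap(P_{\text{opt}}\cap Q)$ and conversely $Q'\cap(P_{\text{opt}}\cap Q)\subseteq A\cap(P_{\text{opt}}\cap Q)$, so the two sets are in fact equal. Combined with the first Chernoff inequality this gives
\begin{equation*}
\frac{\big|Q'\cap(P_{\text{opt}}\cap Q)\big|}{|Q'|} \;=\; \frac{\big|A\cap(P_{\text{opt}}\cap Q)\big|}{\tfrac{3}{2}(\delta+\gamma)n'} \;\geq\; \frac{\tfrac{1}{2}\delta\, n'}{\tfrac{3}{2}(\delta+\gamma)n'} \;=\; \frac{\delta}{3(\gamma+\delta)},
\end{equation*}
which is the desired bound.

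The only genuinely new content relative to Lemma~\ref{lem-outlier-general1} is step two: the original proof exploited the fact that for MEB both ``outside $\mathbb{B}(o_i,l_i)$'' and ``outside $\mathbb{B}(o_i,l_i')$'' are defined by the scalar distance $\lVert p-o_i\rVert$, so their relationship is trivially nested. In the general MEX setting the same nesting still holds, but only because Properties~\ref{prop-1}, \ref{prop-2}, \ref{prop-4} have been tailored so that the distance function $f(c(x),\cdot)$ induces a total order on $\mathbb{R}^d$ compatible with the shape family $\{x(o_i,r)\}_{r\geq 0}$; this is precisely the content of Claim~\ref{claim-prop}. I expect that articulating this nesting cleanly (via Claim~\ref{claim-prop}) will be the main obstacle, while the Chernoff and ratio arithmetic are verbatim from the ball case.
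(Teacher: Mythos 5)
Your proposal is correct and follows essentially the same route as the paper: Chernoff bounds to get the two cardinality inequalities on $A\cap(P_{\text{opt}}\cap Q)$ and $A\cap Q$, then a nesting argument to conclude $A\cap(P_{\text{opt}}\cap Q)=Q'\cap(P_{\text{opt}}\cap Q)$. The only minor stylistic difference is that you justify the nesting of $A\cap Q$ and $Q'$ via the shape-size thresholds $l_i, l'_i$ and Property~\ref{prop-1} directly, whereas the paper phrases it through the distance-function characterizations $A\cap Q=\{p\in A\mid f(o_i,p)>f(o_i,p_{j_0})\}$ and $Q'=\{p\in A\mid f(o_i,p)>f(o_i,p_{j'_0})\}$; these are two equivalent readings of Claim~\ref{claim-prop}, so the substance is identical.
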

\begin{proof}
Let $A$ denote  the set of sampled $n'$ points from $P$. Similar to (\ref{for-outlier-general1-1}), we have
\begin{eqnarray}
\Big|A\cap\big(P_{\text{opt}}\cap Q\big)\Big|> \frac{1}{2}\delta n' \text{\hspace{0.2in} and \hspace{0.2in}} \Big|A\cap  Q\Big|< \frac{3}{2}(\delta+\gamma) n' \label{for-outlier-general1-1-2}
\end{eqnarray}
with probability $1-\eta_1$. Similar to (\ref{for-aq}), we have 
\begin{eqnarray}
A\cap Q=\{p\in A\mid f(o_i, p)>f(o_i, p_{j_0})\},\label{for-aq-1}
\end{eqnarray}
where $p_{j_0}$ is the point selected in the proof of Claim~\ref{claim-prop}. By using the same manner of Claim~\ref{claim-prop}, we also can select a point $p_{j'_0}\in A$ with 
\begin{eqnarray}
 Q'=\{p\in A\mid  f(o_i, p)>f(o_i, p_{j'_0})\}.\label{for-qprime-1}
\end{eqnarray}
Then, we can prove 
\begin{eqnarray}
\Big(A\cap\big(P_{\text{opt}}\cap Q\big)\Big)= \Big(Q'\cap\big(P_{\text{opt}}\cap Q\big)\Big).\label{for-v9-2-1} 
\end{eqnarray}
by using the same idea of (\ref{for-v9-2}). Hence, 
\begin{eqnarray}
\frac{\Big|Q'\cap\big(P_{\text{opt}}\cap Q\big)\Big|}{|Q'|}&=&\frac{\Big|A\cap\big(P_{\text{opt}}\cap Q\big)\Big|}{|Q'|}\geq \frac{\delta}{3(\gamma+\delta)},
\end{eqnarray}
where the final inequality comes from the first inequality of (\ref{for-outlier-general1-1-2}) and the fact $|Q'|=\frac{3}{2}(\delta+\gamma) n'$.
\qed\end{proof}

\begin{lemma} [Generalized Sandwich Lemma]
\label{lem-outlier-general2-generalize}
Let $\eta_2\in(0,1)$ and assume $\delta<\gamma/3$. $l_i$ is the value from Claim~\ref{claim-prop}. We sample $n''=O\big(\frac{\gamma}{\delta^2}\log\frac{1}{\eta_2}\big)$ points independently and uniformly at random from $P$ and let $q$ be the $\big((1+\delta/\gamma)^2\gamma n''+1\big)$-th farthest one from the sampled points to $o_i$. If 
$\tilde{l}_i=\min\{r\mid r\geq 0, q\in x(o_i, r)\}$ (similar to the way defining ``$r_0$'' in Property~\ref{prop-4}), then, with probability $1-\eta_2$, the following holds
\begin{eqnarray}
\tilde{l}_i&\leq& l_i\label{for-outlier-general2-g1};\\
\Big|P\setminus x(o_i, \tilde{l}_i)\Big|&\leq& (\gamma+5\delta) n.\label{for-outlier-general2-g2}
\end{eqnarray}
\end{lemma}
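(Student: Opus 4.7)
The plan is to mirror the proof of the original Sandwich Lemma (Lemma~\ref{lem-outlier-general2}), replacing the ball $\mathbb{B}(o_i,\cdot)$ by the generic shape $x(o_i,\cdot)$ and using only Properties~\ref{prop-1}, \ref{prop-2}, and~\ref{prop-4} wherever Euclidean geometry was previously invoked. First, in analogy with the auxiliary value $\tilde{l}'_i$ in the original proof, I would introduce $\tilde{l}'_i \geq 0$ satisfying $\bigl|P\setminus x(o_i,\tilde{l}'_i)\bigr| = \frac{(\gamma+\delta)^2}{\gamma-\delta}\,n$ (well-defined by sorting the points of $P$ according to $f(o_i,\cdot)$ and taking the size value produced by Property~\ref{prop-4} on the appropriate point, exactly as in Claim~\ref{claim-prop}). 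Since $(\delta+\gamma)n < \tfrac{(\gamma+\delta)^2}{\gamma-\delta}n$, Property~\ref{prop-1} gives $x(o_i,\tilde{l}'_i) \subseteq x(o_i,l_i)$ and hence $\tilde{l}'_i \leq l_i$.

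Next I would sample $B\subset P$ of size $n''=O(\tfrac{\gamma}{\delta^{2}}\log\tfrac{1}{\eta_2})$ and apply the Chernoff bound (with deviation parameter $\sigma = \delta/(2\gamma)$) to the two indicator sums counting $|B \setminus x(o_i,\tilde{l}'_i)|$ and $|B \cap Q|$. With probability at least $1-\eta_2$ these yield, exactly as in~(\ref{for-outlier-general2-3}) and~(\ref{for-outlier-general2-4}),
\begin{align*}
\bigl|B \setminus x(o_i,\tilde{l}'_i)\bigr| &> (1+\delta/\gamma)^{2}\gamma\, n'',\\
\bigl|B \cap Q\bigr| &< (1+\delta/\gamma)^{2}\gamma\, n''.
\end{align*}
By Claim~\ref{claim-prop} applied to the sampled set $B$ and the distance ranking induced by $f(o_i,\cdot)$, the $\bigl((1+\delta/\gamma)^{2}\gamma n''+1\bigr)$-th farthest sampled point $q$ determines a size $\tilde{l}_i$ such that $B \setminus x(o_i,\tilde{l}_i)$ consists of exactly the $(1+\delta/\gamma)^{2}\gamma n''$ farthest points of $B$ (this is where I need Properties~\ref{prop-2} and~\ref{prop-4} to play the role that the radius ordering of balls played before).

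From the first Chernoff inequality and Property~\ref{prop-1} (size-monotonicity of concentric shapes), $\bigl|B\setminus x(o_i,\tilde{l}_i)\bigr|\leq\bigl|B\setminus x(o_i,\tilde{l}'_i)\bigr|$ forces $\tilde{l}_i \geq \tilde{l}'_i$. From the second Chernoff inequality, $q\notin Q$, and then I would argue $x(o_i,\tilde{l}_i)\cap Q=\emptyset$ by contradiction: if some $q_y\in Q\cap x(o_i,\tilde{l}_i)$, then Property~\ref{prop-2} applied inside $x(o_i,\tilde{l}_i)$ (with $q$ as the boundary point supplied by Property~\ref{prop-4}) would give $f(o_i,q_y)\leq f(o_i,q)$, contradicting the definition of $Q$ (all points of $Q$ have distance strictly larger than every point outside $Q$). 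Since $x(o_i,l_i)$ excludes exactly $Q$, this yields $\tilde{l}_i\leq l_i$, proving~(\ref{for-outlier-general2-g1}). Finally,~(\ref{for-outlier-general2-g2}) follows from $\tilde{l}_i\geq \tilde{l}'_i$ combined with Property~\ref{prop-1}: $|P\setminus x(o_i,\tilde{l}_i)|\leq |P\setminus x(o_i,\tilde{l}'_i)| = \tfrac{(\gamma+\delta)^{2}}{\gamma-\delta}n < (\gamma+5\delta)n$, where the last step uses $\delta<\gamma/3$.

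The main obstacle is the contradiction step concluding $x(o_i,\tilde{l}_i)\cap Q=\emptyset$. In the original Euclidean proof, comparing distances to $o_i$ was enough; here I must route the comparison through Properties~\ref{prop-2} and~\ref{prop-4}, namely show that $f$-order of points corresponds to shape-inclusion of concentric shapes. Everything else (the Chernoff bound, the choice of $\tilde{l}'_i$, and the final arithmetic with $\delta<\gamma/3$) transports verbatim from the ball case.
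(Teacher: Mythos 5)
Your proposal is correct and follows essentially the same route as the paper's (terse) proof: introduce the auxiliary value $\tilde{l}'_i$, apply Chernoff to pin $\tilde{l}_i$ between $\tilde{l}'_i$ and $l_i$, and then use Property~\ref{prop-1} to convert the sandwich on sizes into a sandwich on shapes for the final counting step. One small precision point in your contradiction argument: applying Property~\ref{prop-2} to $x(o_i,\tilde{l}_i)$ with $q$ as $p_0$ yields only the forward inclusion $\{p: f(o_i,p)\le f(o_i,q)\}\subseteq x(o_i,\tilde{l}_i)$, not the implication ``$q_y\in x(o_i,\tilde{l}_i)\Rightarrow f(o_i,q_y)\le f(o_i,q)$'' that you state; the correct derivation (as in the proof of Claim~\ref{claim-prop}) first uses Property~\ref{prop-4} on $q_y$ to get a smaller concentric shape excluding $q$, and then invokes the \emph{contrapositive} of Property~\ref{prop-2} on that smaller shape --- but you already flag exactly this point as the obstacle that must be routed through Properties~\ref{prop-2} and~\ref{prop-4}, so the gap is recognized and fixable by the machinery the paper supplies. (Alternatively, the detour through $x(o_i,\tilde{l}_i)\cap Q=\emptyset$ can be skipped entirely: once $q\notin Q$, Claim~\ref{claim-prop} gives $q\in x(o_i,l_i)$ directly, hence $\tilde{l}_i\le l_i$ by minimality.)
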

\begin{proof}
Let $B$ denote  the set of sampled $n''$ points from $P$. By using the same manner of Claim~\ref{claim-prop}, we know that there exists a value $\tilde{l}'_i>0$ satisfying $\Big|P\setminus x(o_i, \tilde{l}'_i)\Big|=\frac{(\gamma+\delta)^2}{\gamma-\delta}\gamma n$. Similar to the proof of Lemma~\ref{lem-outlier-general2}, we can prove that $\tilde{l}_i\in[\tilde{l}'_i,l_i]$. Due to Property~\ref{prop-1}, we know that $x(o_i, \tilde{l}_i)$ is ``sandwiched'' by the two shapes $x(o_i, \tilde{l}'_i)$ and $x(o_i, l_i)$. Further, since $x(o_i, \tilde{l}'_i)$ is covered by $x(o_i, \tilde{l}_i)$, we have
\begin{eqnarray}
\Big|P\setminus x(o_i, \tilde{l}_i)\Big|&\leq& \Big|P\setminus x(o_i, \tilde{l}'_i)\Big|=\frac{(\gamma+\delta)^2}{\gamma-\delta}\gamma n=(\gamma+5\delta) n,
\end{eqnarray}
where the last equality comes from the assumption $\delta<\gamma/3$. So (\ref{for-outlier-general2-g1}) and (\ref{for-outlier-general2-g2}) are true. 
\qed\end{proof}

By using Lemma~\ref{lem-outlier-general1-general} and Lemma~\ref{lem-outlier-general2-generalize}, we study several applications in the following subsections.

\subsection{$k$-Center Clustering with Outliers}
\label{sec-ext-kcenter}

%
%


Let $\gamma\in(0,1)$. Given a set $P$ of $n$ points in $\mathbb{R}^d$, the problem of \textbf{$k$-center clustering with outliers} is to find $k$ balls to cover $(1-\gamma)n$ points, and the maximum radius of the balls is minimized ({\em w.l.o.g.,} we can assume that the $k$ balls have the same radius). 
Given an instance $(P, \gamma)$, let $\{C_1, \cdots, C_k\}$ be the $k$ clusters forming $P_{\text{opt}}$ (the subset of $P$ yielding the optimal solution), and $r_{\text{opt}}$ be the optimal radius; that is, each $C_j$ is covered by an individual ball with radius $r_{\text{opt}}$. Similar to Section~\ref{sec-outlier-general}, we first introduce a linear time algorithm, and then show how to modify it to be sublinear time by using Lemma~\ref{lem-outlier-general1-general} and \ref{lem-outlier-general2-generalize}.

%


\textbf{Linear time algorithm.} Our algorithm in Section~\ref{sec-quality} can be generalized to be a linear time bi-criteria algorithm for the problem of $k$-center clustering with outliers, if $k$ is assumed to be a constant. Our idea is as follows. In Algorithm~\ref{alg-outlier}, we maintain a set $T$ as the core-set of $P_{\text{opt}}$; here, we instead maintain $k$ sets $T_1, T_2, \cdots, T_k$ as the core-sets of $C_1, C_2, \cdots, C_k$, respectively. Consequently, each $T_j$ for $1\leq j\leq k$ has an approximate MEB center $o^j_i$ in the $i$-th round of Step 3, and we let $O_i=\{o^1_i, \cdots, o^k_i\}$. Initially, $O_0$ and $T_j$ for $1\leq j\leq k$ are all empty; we randomly select a point $p\in P$, and with probability $1-\gamma$, $p\in P_{\text{opt}}$ (w.l.o.g., we assume $p\in C_1$ and add it to $T_1$; thus $O_1=\{p\}$ after this step). 
We let $Q$ be the set of farthest $t=(\delta+\gamma) n$ points to $O_i$, and $l_i$ be the $(t+1)$-th largest distance from $P$ to $O_i$ (the distance from a point $p\in P$ to $O_i$ is $\min_{1\leq j\leq k}||p-o^j_i||$). Then, we randomly select a point $q\in Q$, and with probability $\frac{\delta}{\gamma+\delta}$, $q\in P_{\text{opt}}$ (as (\ref{for-lem-outlier-1}) in Lemma~\ref{lem-outlier-1}). For ease of presentation, we assume that $q\in P_{\text{opt}}$ happens and we have an ``oracle'' to guess which optimal cluster $q$ belongs to, say $q\in C_{j_q}$; then, we add $q$ to $T_{j_q}$ and update the approximate MEB center of $T_{j_q}$. Since each optimal cluster $C_j$ for $1\leq j\leq k$ has the core-set with size $\frac{2}{\epsilon}+1$ (by setting $s=\frac{\epsilon}{2+\epsilon}$ in Theorem~\ref{the-newbc}), after adding at most $k(\frac{2}{\epsilon}+1)$ points, the distance $l_i$ will be smaller than $(1+\epsilon)r_{\text{opt}}$. Consequently, a $(1+\epsilon, 1-\delta)$-approximation solution is obtained when $i\geq k(\frac{2}{\epsilon}+1)$. \textbf{Note} that some ``small'' clusters could be missing from the above random sampling based approach and therefore $|O_i|$ could be less than $k$; however, it always can be guaranteed that the total number of missing inliers is at most $\delta  n$, {\em i.e.,} a $(1+\epsilon, 1-\delta)$-approximation is always guaranteed (otherwise, the ratio $\frac{|P_{\text{opt}}\cap Q|}{|Q|}>\frac{\delta}{\gamma+\delta}$ and we can continue to sample a point from $P_{\text{opt}}$ and then update $O_i$). 

To remove the oracle for guessing the cluster containing $q$, we can enumerate all the possible $k$ cases; since we add $k(\frac{2}{\epsilon}+1)$ points to $T_1, T_2, \cdots, T_k$, it generates $k^{k(\frac{2}{\epsilon}+1)}=2^{k\log k (\frac{2}{\epsilon}+1)}$ solutions in total, and at least one yields  a $(1+\epsilon, 1-\delta)$-approximation with probability $(1-\gamma)(\frac{\delta}{\gamma+\delta})^{k(\frac{2}{\epsilon}+1)}$ (by the same manner for proving Theorem~\ref{the-outlier}).

\begin{theorem}
\label{the-kcenter}
Let $(P, \gamma)$ be an instance of $k$-center clustering with outliers. Given two parameters $\epsilon, \delta\in (0,1)$, there exists an algorithm that outputs a $(1+\epsilon, 1-\delta)$-approximation with probability $(1-\gamma)(\frac{\delta}{\gamma+\delta})^{k(\frac{2}{\epsilon}+1)}$. The running time is $O(2^{k\log k (\frac{2}{\epsilon}+1)}(n+\frac{1}{\epsilon^5})d)$.

If one repeatedly runs the algorithm $O(\frac{1}{1-\gamma}(\frac{\gamma+\delta}{\delta})^{k(\frac{2}{\epsilon}+1)})$ times, with constant probability, the algorithm outputs a $(1+\epsilon,1-\delta)$-approximation solution.

\end{theorem}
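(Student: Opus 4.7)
\smallskip
\noindent\textbf{Proof proposal.} My plan is to mirror the proof of Theorem~\ref{the-outlier} with $k$ parallel core-sets $T_1,\ldots,T_k$ and an exhaustive $k^z$-fold enumeration over cluster assignments, where $z=k(\frac{2}{\epsilon}+1)$. At each of the $z$ rounds we sample $q\in Q$ (the farthest $(\delta+\gamma)n$ points from the current centers $O_i=\{o_i^1,\ldots,o_i^k\}$ measured by $\min_j\|p-o_i^j\|$), and branch into $k$ possibilities by appending $q$ to each $T_j$ in turn; the final output is whichever branch yields the smallest covering radius for $(1-\gamma-\delta)n$ points of $P$.

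The probability argument is the $k$-center analog of Lemma~\ref{lem-outlier-1}: the seed lies in $P_{\text{opt}}$ with probability $1-\gamma$, and at each subsequent round, conditioned on the state of the correctly-labeled branch, as long as that branch is not yet a $(1+\epsilon,1-\delta)$-approximation one has $|Q\cap P_{\text{opt}}|\geq \delta n$, so the sampled $q$ lies in $P_{\text{opt}}$ with probability at least $\frac{\delta}{\gamma+\delta}$ by the calculation (\ref{for-lem-outlier-1}). The product across the $z$ rounds gives $(1-\gamma)\big(\frac{\delta}{\gamma+\delta}\big)^z$ for the event that every sampled $q$ belongs to $P_{\text{opt}}$; conditioned on this event, the enumeration is guaranteed to include the branch that routes every $q$ into the $T_j$ indexed by its true cluster.

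For this distinguished branch I would invoke Theorem~\ref{the-newbc} in parallel to each cluster: setting $s=\frac{\epsilon}{2+\epsilon}$, any cluster $C_j$ whose core-set $T_j$ receives $\frac{2}{\epsilon}+1$ correct additions is $(1+\epsilon)$-radius-approximated by $o^j$. The progress argument, the analog of Lemma~\ref{lem-outlier-2}, is that whenever $\bigcup_j \mathbb{B}(o_i^j,(1+\epsilon)r_{\text{opt}})$ still misses more than $\delta n$ points of $P_{\text{opt}}$, the sampled $q$ lies in some cluster $C_{j^\ast}$ outside that union, and being appended to $T_{j^\ast}$ triggers the recurrence (\ref{for-bc4}) on the $j^\ast$-th core-set. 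By pigeonhole, after $z=k(\frac{2}{\epsilon}+1)$ such progress rounds no cluster can still be the bottleneck and the algorithm must have produced a $(1+\epsilon,1-\delta)$-approximation.

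The hardest part is handling ``small'' optimal clusters: a $C_j$ carrying very few points may never be sampled, so $|T_j|$ stays below $\frac{2}{\epsilon}+1$, yet the theorem still holds because the uncovered mass of such skipped clusters sits inside the $\delta n$ covering slack. I would make this precise by tracking the potential $\sum_j \big|C_j \setminus \mathbb{B}(o_i^j,(1+\epsilon)r_{\text{opt}})\big|$ and noting that termination is triggered the instant it drops to $\delta n$, so pigeonhole need only saturate those clusters that remain bottlenecks. Once correctness is in place, the per-branch cost is $O\big(z(nd + \frac{k}{\epsilon^5}d)\big)$, namely $O(nd)$ per round to extract $Q$ via the linear-time selection of~\cite{blum1973time} and $O(\frac{k}{\epsilon^5}d)$ to refresh the $k$ approximate MEBs via Theorem~\ref{the-newbc}, and multiplying by the $k^z$ branches absorbs the polynomial factors into the stated $O\big(2^{k\log k(2/\epsilon+1)}(n+\frac{1}{\epsilon^5})d\big)$. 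The amplification to constant probability is standard boosting: running $N=O\big(\frac{1}{1-\gamma}(\frac{\gamma+\delta}{\delta})^z\big)$ independent repetitions drives the joint failure probability below a constant via $(1-p)^N\leq e^{-Np}$ with $p=(1-\gamma)(\frac{\delta}{\gamma+\delta})^z$.
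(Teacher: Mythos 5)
Your proposal is the same construction the paper sketches for Theorem~\ref{the-kcenter}: run $k$ parallel core-sets $T_1,\ldots,T_k$, at each round sample $q$ from the farthest $(\delta+\gamma)n$ points to the current centers $O_i$, exhaustively branch over which $T_j$ receives $q$ to simulate the cluster-assignment oracle (giving the $k^{z}=2^{k\log k(\frac{2}{\epsilon}+1)}$ factor), invoke the core-set recurrence from Theorem~\ref{the-newbc} per cluster to bound the rounds by $z=k(\frac{2}{\epsilon}+1)$, and absorb the mass of never-sampled small clusters into the $\delta n$ covering slack. The one imprecision worth fixing is the progress condition: the clean dichotomy (mirroring Lemma~\ref{lem-outlier-2}) is on whether $l_i\le(1+\epsilon)r_{\text{opt}}$, not on whether your potential $\sum_j|C_j\setminus\mathbb{B}(o_i^j,(1+\epsilon)r_{\text{opt}})|$ exceeds $\delta n$ --- outliers occupying slots of $Q$ can keep that potential above $\delta n$ while $l_i$ is already $\le(1+\epsilon)r_{\text{opt}}$ and the sampled $q$ sits inside the union --- but in that regime the current iterate is already a valid $(1+\epsilon,1-\delta)$-approximation, so your pigeonhole count survives unchanged.
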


Similar to our discussion on the running time for MEB with outliers in Section~\ref{sec-quality}, B\u{a}doiu {\em et al.}~\cite{BHI} also achieved a linear time bi-criteria approximation for the $k$-center clustering with outliers problem (see Section 4 in their paper). However, the hidden constant of their running time is exponential in $(\frac{k}{\epsilon\delta})^{O(1)}$ that is much larger than ``$k\log k (\frac{2}{\epsilon}+1)$'' in Theorem~\ref{the-kcenter}. 

\vspace{0.05in}
\textbf{Sublinear time algorithm.} The linear time algorithm can be further improved to be sublinear time; the idea is similar to that for designing sublinear time algorithm for MEB with outliers in Section~\ref{sec-oulier-improve}. First, we follow Definition~\ref{def-outlier-general} and define the shape set $\mathcal{X}$, where each $x\in\mathcal{X}$ is union of $k$ balls in the space; 
 the center $c(x)$ should be the set of its $k$ ball centers, say $c(x)=\{o^1_x, o^2_x, \cdots, o^k_x\}$, and the size $s(x)$ is the radius, {\em i.e.,} $x=\cup^k_{j=1}\mathbb{B}(o^j_x, s(x))$. Obviously, if $x$ is a feasible solution for the instance $(P, \gamma)$, the size $\Big|P\cap (\cup^k_{j=1}\mathbb{B}(o^j_x, s(x)))\Big|$ should be at least $(1-\gamma)n$. Also, define the distance function $f(c(x), p)=\min_{1\leq j\leq k}||p-o^j_x||$. It is easy to verify that the shape set $\mathcal{X}$ satisfies Property~\ref{prop-1}, \ref{prop-2}, and \ref{prop-4}.
From Lemma~\ref{lem-outlier-general1-general}, we know that it is possible to obtain a point in $P_{\text{opt}}\cap Q$ with probability $(1-\eta_1) \frac{\delta}{3(\gamma+\delta)}$. Further, we can estimate the value $l_i$ and select the best candidate solution based on Lemma~\ref{lem-outlier-general2-generalize}. Overall, we have the following theorem.


\begin{theorem}
\label{the-kcenter2}
Let $(P, \gamma)$ be an instance of $k$-center clustering with outliers. Given the parameters $\epsilon, \delta, \eta_1, \eta_2\in (0,1)$, there exists an algorithm that outputs a $(1+\epsilon, 1-\delta)$-approximation with probability $(1-\gamma)\big((1-\eta_1)(1-\eta_2)\frac{\delta}{3(\gamma+\delta)}\big)^{k(\frac{2}{\epsilon}+1)}$. The running time is $\tilde{O}(2^{k\log k (\frac{2}{\epsilon}+1)}(\frac{\gamma}{\delta^2}+\frac{1}{\epsilon^5})d)$.

If one repeatedly runs the algorithm  $N=O\Big(\frac{1}{1-\gamma}\big(\frac{1}{1-\eta_1}(\frac{3(\gamma+\delta)}{\delta})\big)^{k(\frac{2}{\epsilon}+1)}\Big)$ times with setting $\eta_2=O(\frac{1}{2^{k\log k (\frac{2}{\epsilon}+1)}N})$, with constant probability, the algorithm outputs a $(1+\epsilon,1-\delta)$-approximation solution.

\end{theorem}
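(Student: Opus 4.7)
The plan is to adapt the linear-time procedure underlying Theorem~\ref{the-kcenter} by replacing its two linear-time primitives with the sublinear sampling tools of Section~\ref{sec-ext}, in the same way that Algorithm~\ref{alg-outlier2} was obtained from Algorithm~\ref{alg-outlier} in Section~\ref{sec-oulier-improve}. First I would formalize the shape family: let $\mathcal{X}$ consist of all $k$-ball unions $x=\bigcup_{j=1}^{k}\mathbb{B}(o^j,r)$ with $c(x)=\{o^1,\dots,o^k\}$, $s(x)=r$, and $f(c(x),p)=\min_{1\le j\le k}\|p-o^j\|$. I would verify that this satisfies Properties~\ref{prop-1}, \ref{prop-2}, and~\ref{prop-4}: two concentric $k$-ball shapes (same center set) are nested exactly by their common radius; any point $p$ with $f(c(x),p)\le f(c(x),p_0)$ lies in whichever ball of $x$ is closest to $p$, so it lies in $x$ whenever $p_0$ does; and for a fixed $o_0$ and $p_0$ the value $r_0=f(o_0,p_0)$ is the unique smallest radius placing $p_0$ on the boundary.

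Next I would run exactly the enumeration scheme of Theorem~\ref{the-kcenter}, maintaining $k$ core-sets $T_1,\dots,T_k$ and corresponding approximate MEB centers $O_i=\{o_i^1,\dots,o_i^k\}$, and branching over the $k$ choices of cluster label for each of the $k(\tfrac{2}{\epsilon}+1)$ greedy additions, yielding $k^{k(2/\epsilon+1)}=2^{k\log k(2/\epsilon+1)}$ execution paths. Inside each round I replace the two costly operations: (a) rather than scanning $P$ for the farthest $t=(\delta+\gamma)n$ points to $O_i$ and sampling uniformly from $P_{\textnormal{opt}}\cap Q$, I invoke Lemma~\ref{lem-outlier-general1-general} with $n'=O(\tfrac{1}{\delta}\log\tfrac{1}{\eta_1})$ uniform samples and let $Q'$ be the farthest $\tfrac{3}{2}(\delta+\gamma)n'$ of them under $f$, then draw $q$ uniformly from $Q'$; by the lemma, $q\in P_{\textnormal{opt}}\cap Q$ with probability at least $(1-\eta_1)\tfrac{\delta}{3(\gamma+\delta)}$. (b) Rather than computing the exact $(t+1)$-th largest distance $l_i$ for each candidate, I invoke Lemma~\ref{lem-outlier-general2-generalize} with $n''=O(\tfrac{\gamma}{\delta^2}\log\tfrac{1}{\eta_2})$ uniform samples to obtain $\tilde l_i\le l_i$ with $|P\setminus x(O_i,\tilde l_i)|\le(\gamma+5\delta)n$, and then select the candidate minimizing $\tilde l_i$. (The $5\delta$ blow-up is absorbed by rescaling $\delta\mapsto\delta/5$ inside the sample sizes, which does not change the asymptotics.)

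For the single-run success probability, the initial uniform sample lies in $P_{\textnormal{opt}}$ with probability $1-\gamma$; each of the $k(\tfrac{2}{\epsilon}+1)$ subsequent greedy additions contributes an independent factor $(1-\eta_1)(1-\eta_2)\tfrac{\delta}{3(\gamma+\delta)}$, giving the stated per-run bound. The per-iteration work is $\tilde O((\tfrac{\gamma}{\delta^2}+\tfrac{1}{\epsilon^5})d)$ (sampling cost plus the cost of maintaining each approximate $\mathbf{MEB}(T_j)$ via the algorithm of Theorem~\ref{the-newbc}), and multiplying by the $k(\tfrac{2}{\epsilon}+1)$ iterations and the $2^{k\log k(2/\epsilon+1)}$ enumerations yields the claimed running time. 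For the amplification, I repeat the whole algorithm $N$ times and pick the candidate with smallest $\tilde l_i$; $N$ is chosen exactly so that the per-run probability $(1-\gamma)((1-\eta_1)\tfrac{\delta}{3(\gamma+\delta)})^{k(2/\epsilon+1)}$ multiplied by $N$ is $\Omega(1)$.

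The main obstacle is that the sandwich estimate must be simultaneously reliable across \emph{every} candidate center set produced by all the enumerations in all $N$ runs, since otherwise the argmin in the final selection could be spoiled by an optimistic underestimate on a bad candidate. I address this by setting $\eta_2=O(\tfrac{1}{2^{k\log k(2/\epsilon+1)}N})$ and taking a union bound over the $O(2^{k\log k(2/\epsilon+1)}\cdot N)$ invocations of Lemma~\ref{lem-outlier-general2-generalize}; because $\eta_2$ enters the running time only logarithmically through the $\tilde O(\cdot)$ notation, this does not alter the stated complexity. This delivers a $(1+\epsilon,1-\delta)$-approximation with constant probability, completing the proof of Theorem~\ref{the-kcenter2}.
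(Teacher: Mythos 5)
Your proposal is correct and follows essentially the same route the paper takes: specialize the generalized Uniform-Adaptive Sampling and Sandwich lemmas to the shape family of $k$-ball unions, plug them into the $k^{k(2/\epsilon+1)}$-way enumeration scheme of Theorem~\ref{the-kcenter}, and amplify with $\eta_2$ scaled by a union bound over all candidate centers. In fact you spell out several steps (verification of Properties~\ref{prop-1}--\ref{prop-4} for $k$-ball unions, the $\delta\mapsto\delta/5$ rescaling, and the explicit union-bound argument for the sandwich estimates) that the paper leaves implicit.
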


\subsection{Flat Fitting with Outliers}
\label{sec-flat}
Let $j$ be a fixed integer between $0$ and $d$. Given a $j$-dimensional flat $\mathcal{F}$ and a point $p\in\mathbb{R}^d$, we define their distance, $dist(\mathcal{F}, p)$, to be the Euclidean distance from $p$ to its projection onto $\mathcal{F}$. Let $P$ be a set of $n$ points in $\mathbb{R}^d$. The problem of \textbf{flat fitting} is to find the $j$-dimensional flat $\mathcal{F}$ that minimizes $\max_{p\in P}dist(\mathcal{F}, p)$. It is easy to see that the MEB problem is the case $j=0$ of the flat fitting problem. Furthermore, given a parameter $\gamma\in (0,1)$, the \textbf{flat fitting with outliers} problem is to find a subset $P'\subset P$ with size $(1-\gamma)n$ such that  $\max_{p\in P'}dist(\mathcal{F}, p)$ is minimized. Similar to MEB with outliers, we also use $P_{\text{opt}}$ to denote the optimal subset. Before presenting our algorithms for flat fitting with outliers, we first introduce the linear time algorithm from Har-Peled and Varadarajan~\cite{DBLP:journals/dcg/Har-PeledV04} for the vanilla version (without outliers). 

We start from the case $j=1$, {\em i.e.,} the flat $\mathcal{F}$ is a line in the space. Roughly speaking, their algorithm is an iterative procedure to update the solution round by round, until it is close enough to the optimal line $l_{\text{opt}}$. There are two parts in the algorithm. \textbf{(1)} It picks an arbitrary point $p_\Delta\in P$ and let $q_\Delta$ be the farthest point of $P$ from $p_\Delta$; it can be proved that the line passing through $p_\Delta$ and $q_\Delta$, denoted as $l_0$, is a good initial solution that yields a $4$-approximation with respect to the objective function. \textbf{(2)} In each of the following  rounds, the algorithm updates the solution from $l_{i-1}$ to $l_i$ where $i\geq 1$ is the current number of rounds: let $p_i$ be the farthest point of $P$ from $l_{i-1}$ and let $h_i$ denote the $2$-dimensional flat spanned by $p_i$ and $l_{i-1}$; then the algorithm computes a set of $O(\frac{1}{\epsilon^8}\log^2\frac{1}{\epsilon})$ lines on $h_i$, and picks one of them as $l_i$ via an ``oracle''. They proved that the improvement from $l_{i-1}$ to $l_i$ is significant enough; thus, after running $\nu=O(\frac{1}{\epsilon^3}\log\frac{1}{\epsilon})$ rounds, it is able to achieve a $(1+\epsilon)$-approximation. To remove the ``oracle'', the algorithm can enumerate all the $O(\frac{1}{\epsilon^8}\log^2\frac{1}{\epsilon})$ lines on $h_i$, and thus the total running time is $O\big(2^{\frac{1}{\epsilon^3}\log^2\frac{1}{\epsilon}} n d\big)$. 

\vspace{0.05in}

\textbf{Linear time algorithm.} Now we consider to adapt the above algorithm to the case with outliers, where in fact the idea is similar to the idea proposed in Section~\ref{sec-quality} for MEB with outliers. For simplicity, we still use the same notations as above. Consider the part \textbf{(1)} first. If we randomly pick a point $p_{\Delta}$ from $ P$, with probability $1-\gamma$, it belongs to $P_{\text{opt}}$; further, we randomly pick a point, denoted as $q_\Delta$, from the set of $(\delta_0+\gamma) n$ farthest points of $P$ from $p_{\Delta}$, where the value of $\delta_0$ will be determined below. Obviously, with probability $\frac{\delta_0}{\gamma+\delta_0}$, $q_\Delta\in P_{\text{opt}}$. Denote by $P_0=\{p\in P_{\text{opt}}\mid ||p-p_\Delta||\leq ||q_\Delta-p_\Delta||\}$. Then we have the following lemma.

\begin{lemma}
\label{lem-flat-1}
Denote by $l_0$ the line passing through $p_\Delta$ and $q_\Delta$. Then, with probability $(1-\gamma)(\frac{\delta_0}{\gamma+\delta_0})$, 
\begin{eqnarray}
\max_{p\in P_0}dist(l_0, p)\leq 4 \max_{p\in P_{0}}dist(l_{\text{opt}}, p)\leq 4 \max_{p\in P_{\text{opt}}}dist(l_{\text{opt}}, p).\label{for-flat-1}
\end{eqnarray}
Also, the size of $P_0$ is at least $\big(1-(\delta_0+\gamma)\big)n$. 
\end{lemma}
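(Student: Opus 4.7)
The proof naturally splits into three independent claims: a probability bound, a cardinality bound on $P_0$, and the $4$-approximation inequality. For the probability, the two uniform samplings are independent. The first pick satisfies $p_\Delta\in P_{\text{opt}}$ with probability $(1-\gamma)n/n=1-\gamma$. Conditioned on this, $q_\Delta$ is drawn uniformly from the $(\delta_0+\gamma)n$ farthest points of $P$ from $p_\Delta$; since the whole $P$ contains at most $\gamma n$ outliers, at least $\delta_0 n$ points of this candidate set lie in $P_{\text{opt}}$, and the conditional probability is at least $\delta_0/(\delta_0+\gamma)$. Multiplying gives the stated $(1-\gamma)\delta_0/(\delta_0+\gamma)$.

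For the size bound, set $r^{*}=\|q_\Delta-p_\Delta\|$ and let $Q$ denote the set of the $(\delta_0+\gamma)n$ farthest points from $p_\Delta$. Because $q_\Delta\in Q$, every point $p\in P$ with $\|p-p_\Delta\|>r^{*}$ must itself belong to $Q$; hence at most $(\delta_0+\gamma)n$ points of $P$ lie outside $\mathbb{B}(p_\Delta,r^{*})$. Intersecting the complementary set $P\setminus Q$ with $P_{\text{opt}}$ and then invoking the standard counting inequality yields the claimed lower bound $|P_0|\geq \big(1-(\delta_0+\gamma)\big)n$ on the size of $P_0$.

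For the approximation bound, put $\rho_0=\max_{p\in P_0}dist(l_{\text{opt}},p)$; the rightmost inequality is immediate from $P_0\subseteq P_{\text{opt}}$. Both $p_\Delta$ and $q_\Delta$ lie in $P_0$ (trivially for $p_\Delta$, and for $q_\Delta$ because $q_\Delta\in P_{\text{opt}}$ under the probabilistic event and $\|q_\Delta-p_\Delta\|=r^{*}$), so they sit within perpendicular distance $\rho_0$ of $l_{\text{opt}}$. I would split on the magnitude of $r^{*}$. If $r^{*}\leq 4\rho_0$, then for every $p\in P_0$ the trivial estimate $dist(l_0,p)\leq \|p-p_\Delta\|\leq r^{*}\leq 4\rho_0$ already suffices. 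If $r^{*}>4\rho_0$, then the direction $u$ of $l_0$ is nearly aligned with the direction of $l_{\text{opt}}$: the perpendicular component of $q_\Delta-p_\Delta$ has norm at most $2\rho_0$ while its along-$l_{\text{opt}}$ component has length at least $\sqrt{r^{*2}-(2\rho_0)^2}$, so the angle $\theta$ between $l_0$ and $l_{\text{opt}}$ satisfies $\sin\theta\leq 2\rho_0/r^{*}$. Writing $p-p_\Delta$ in an orthonormal frame whose first axis points along $l_{\text{opt}}$, and using $\|p-p_\Delta\|\leq r^{*}$ together with the perpendicular bound $\leq 2\rho_0$ from the cylinder, a direct projection computation onto the orthogonal complement of $u$ yields $dist(l_0,p)\leq 4\rho_0$.

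The principal obstacle will be keeping the constant clean at $4$ in the second case: one must exploit the fact that $\|p-p_\Delta\|\leq r^{*}$ (which is precisely why $q_\Delta$ is drawn from the \emph{farthest} fringe of $P$ and not arbitrarily), together with the cylinder structure around $l_{\text{opt}}$. This is essentially the line-fitting $4$-approximation underlying the initial step in the algorithm of Har-Peled and Varadarajan~\cite{DBLP:journals/dcg/Har-PeledV04}; the contribution here is to transplant it to the with-outliers setting after the Uniform-Adaptive Sampling of Lemma~\ref{lem-outlier-general1} secures, with the stated probability, that both $p_\Delta$ and $q_\Delta$ belong to $P_{\text{opt}}$.
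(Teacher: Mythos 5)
Your decomposition into three claims matches the paper's (which disposes of the factor-$4$ inequality by citing Har--Peled and Varadarajan and calls the cardinality bound ``straightforward''), and your probability computation is exactly the paper's. You rederive the $4$-approximation directly rather than citing~\cite{DBLP:journals/dcg/Har-PeledV04}; that derivation is sound, and in fact your second case already handles the first, since $\|p-p_\Delta\|\sin\theta+2\rho_0\le 4\rho_0$ needs nothing beyond $\|p-p_\Delta\|\le r^{*}$ and $\sin\theta\le 2\rho_0/r^{*}$, so the split on $r^{*}$ versus $4\rho_0$ is not needed. The one place that deserves care is the cardinality bound: with $P_0=P_{\text{opt}}\cap\mathbb{B}(p_\Delta,r^{*})$, the counting you invoke gives $|P_0|\ge |P\cap\mathbb{B}(p_\Delta,r^{*})|+|P_{\text{opt}}|-n\ge(1-\delta_0-2\gamma)n$, not $(1-\delta_0-\gamma)n$; this coincides with the bound the paper asserts without proof, so it is a looseness already present in the statement rather than a new gap, but your phrase ``standard counting inequality yields the claimed lower bound'' should not be taken at face value. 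Finally, a small misattribution: Lemma~\ref{lem-flat-1} belongs to the \emph{linear-time} variant, which uses plain uniform sampling from the farthest $(\delta_0+\gamma)n$ fringe; the Uniform-Adaptive Sampling of Lemma~\ref{lem-outlier-general1-general} only enters when the algorithm is later made sublinear.
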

It is straightforward to obtain the size of $P_0$. The inequality (\ref{for-flat-1}) directly comes from the aforementioned result of~\cite{DBLP:journals/dcg/Har-PeledV04}, as long as $p_\Delta$ and $q_\Delta\in P_{\text{opt}}$. So we can use the line $l_0$ as our initial solution. Then, we can apply the same random sampling idea to select the point $p_i$ in the $i$-th round. Namely, we randomly pick a point as $p_i$ from the set of $(\delta_0+\gamma) n$ farthest points of $P$ from $l_i$. Moreover, we need to shrink the set $P_{i-1}$ to $P_i=\{p\in P_{i-1}\mid dist(l_{i-1}, p)\leq dist(l_{i-1}, p_i)\}$. Similar to Lemma~\ref{lem-flat-1}, we can show that the improvement from $l_{i-1}$ to $l_{i}$ is significant enough with probability $(1-\gamma)(\frac{\delta_0}{\gamma+\delta_0})^{i+1}$, and the size of $P_i$ is at least $\big(1-((i+1)\delta_0+\gamma)\big) n$. After running $\nu$ rounds, we obtain the line $l_\nu$ such that $\max_{p\in P_\nu}dist(l_\nu, p)\leq (1+\epsilon)  \max_{p\in P_{\text{opt}}}dist(l_{\text{opt}}, p)$, and $|P_\nu|\geq \big(1-((\nu+1)\delta_0+\gamma)\big)n$. So if we set $\delta_0=\frac{\delta}{\nu+1}$ with a given $\delta\in (0,1)$, the line $l_\nu$ will be a bi-criteria $(1+\epsilon, 1-\delta)$-approximation of the instance $(P, \gamma)$. By using the idea in~\cite{DBLP:journals/dcg/Har-PeledV04}, we can extend the result to the case $j>1$ with $\nu=\frac{e^{O(j^2)}}{\epsilon^{2j+1}}\log\frac{1}{\epsilon}$. We refer the reader to~\cite{DBLP:journals/dcg/Har-PeledV04} for more details. 

\begin{theorem}
\label{the-flat}
Let $(P, \gamma)$ be an instance of $j$-dimensional flat fitting with outliers. Given two parameters $\epsilon, \delta\in (0,1)$, there exists an algorithm that outputs a $(1+\epsilon, 1-\delta)$-approximation with probability $(1-\gamma)\big(\frac{1}{2}\big)^{g(j, \epsilon)}$ where $g(j, \epsilon)=poly(e^{O(j^2)}, \frac{1}{\epsilon^j})$. The running time is $O(2^{g'(j, \epsilon)}nd)$ where $g'(j, \epsilon)=poly(e^{O(j^2)}, \frac{1}{\epsilon^j})$.

If one repeatedly runs the algorithm $ \frac{2^{g(j, \epsilon)}}{1-\gamma}$ times, with constant probability, the algorithm outputs a $(1+\epsilon,1-\delta)$-approximation solution.

\end{theorem}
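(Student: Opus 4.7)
The plan is to follow the algorithmic template already sketched in the paragraphs leading up to the theorem and verify that the correctness analysis of Har-Peled and Varadarajan~\cite{DBLP:journals/dcg/Har-PeledV04} carries over to the outlier setting when the greedy ``pick the farthest point'' steps are replaced by random sampling from the set of $(\delta_0+\gamma)n$ farthest points. The backbone is an inductive argument on the round index $i=0,1,\ldots,\nu$ that simultaneously tracks two things: a line/flat $l_i$ together with a shrinking witness set $P_i\subseteq P_{\text{opt}}$ such that (a) the HV-type progress inequality holds with respect to $P_i$ rather than $P$, and (b) $|P_i|$ is still a large fraction of $n$.

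First I would treat the case $j=1$. The initialization is handled by Lemma~\ref{lem-flat-1}: $p_\Delta\in P_{\text{opt}}$ occurs with probability $1-\gamma$, and conditioned on that event, $q_\Delta\in P_{\text{opt}}$ occurs with probability at least $\delta_0/(\gamma+\delta_0)$ because among the $(\delta_0+\gamma)n$ farthest points from $p_\Delta$, at most $\gamma n$ can be outliers. Defining $P_0=\{p\in P_{\text{opt}}\colon \|p-p_\Delta\|\le \|q_\Delta-p_\Delta\|\}$ gives $|P_0|\ge (1-\gamma-\delta_0)n$ and, by HV's $4$-approximation guarantee for the line through two optimal points, $\max_{p\in P_0}\mathrm{dist}(l_0,p)\le 4\max_{p\in P_{\text{opt}}}\mathrm{dist}(l_{\text{opt}},p)$. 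Then in each subsequent round $i$, I would sample $p_i$ uniformly from the $(\delta_0+\gamma)n$ farthest points of $P$ from $l_{i-1}$; the same counting argument gives $p_i\in P_{\text{opt}}$ with probability $\delta_0/(\gamma+\delta_0)$. Setting $P_i:=\{p\in P_{i-1}\colon \mathrm{dist}(l_{i-1},p)\le \mathrm{dist}(l_{i-1},p_i)\}$, the size bound $|P_i|\ge(1-\gamma-(i+1)\delta_0)n$ follows inductively, and $p_i$ acts exactly like the ``farthest point of $P_i$'' that HV's inductive step requires. The oracle for choosing $l_i$ among the $O(\epsilon^{-8}\log^2(1/\epsilon))$ candidate lines on the $2$-flat spanned by $l_{i-1}$ and $p_i$ is removed by enumerating all candidates.

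After $\nu=O(\epsilon^{-3}\log(1/\epsilon))$ rounds, HV's convergence guarantee applied to the sequence $P_0\supseteq P_1\supseteq\cdots\supseteq P_\nu$ yields $\max_{p\in P_\nu}\mathrm{dist}(l_\nu,p)\le(1+\epsilon)\max_{p\in P_{\text{opt}}}\mathrm{dist}(l_{\text{opt}},p)$. Choosing $\delta_0=\delta/(\nu+1)$ forces $|P_\nu|\ge(1-\gamma-\delta)n$, which is exactly the $(1+\epsilon,1-\delta)$-bi-criteria requirement. The joint success probability of all $\nu+1$ sampling events is $(1-\gamma)\big(\delta_0/(\gamma+\delta_0)\big)^{\nu+1}$; bounding $\delta_0/(\gamma+\delta_0)\ge 1/2$ after absorbing the $\gamma$-dependence into the $g(j,\epsilon)$ factor (or, more precisely, folding the logarithmic dependence into the repetition count) gives the stated $(1-\gamma)(1/2)^{g(j,\epsilon)}$ bound. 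The enumeration of $2^{g'(j,\epsilon)}$ branches over all $\nu+1$ rounds and the $O(nd)$ cost per round (for computing distances and the top-$(\delta_0+\gamma)n$ farthest set, using a linear-time selection algorithm) yield the running time $O(2^{g'(j,\epsilon)}nd)$. Repeating the whole procedure $2^{g(j,\epsilon)}/(1-\gamma)$ times and using a standard boosting argument makes the overall success probability constant.

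The extension from $j=1$ to general $j$ is the main obstacle, but it reduces to invoking HV's inductive construction for general $j$, where $p_i$ is now a point maximizing distance to a $j$-flat $\mathcal F_{i-1}$ and the update constructs candidate $j$-flats in the $(j{+}1)$-flat spanned by $\mathcal F_{i-1}$ and $p_i$. The only modifications needed are bookkeeping: the number of rounds becomes $\nu=e^{O(j^2)}\epsilon^{-(2j+1)}\log(1/\epsilon)$, the candidate set size per round becomes $e^{O(j^2)}\mathrm{poly}(1/\epsilon)$, and the shrinking-witness argument for $P_i$ goes through verbatim because it only uses the fact that $p_i$ is the farthest point of $P_i$ from the current flat. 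Plugging these parameter values into the analysis above produces $g(j,\epsilon)$ and $g'(j,\epsilon)$ of the stated form $\mathrm{poly}(e^{O(j^2)},\epsilon^{-j})$, completing the proof.
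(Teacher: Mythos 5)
Your proof follows the same route as the paper's own informal derivation: use Lemma~\ref{lem-flat-1} to initialize with a $4$-approximation line through two sampled optimal points, then inductively maintain a shrinking witness set $P_i\subseteq P_{\text{opt}}$, sample $p_i$ from the $(\delta_0+\gamma)n$ farthest points and enumerate the $O(\epsilon^{-8}\log^2(1/\epsilon))$ HV candidate lines to eliminate the oracle, set $\delta_0=\delta/(\nu+1)$ to control the covering loss, and then lift to general $j$ via HV's higher-dimensional recursion with $\nu=e^{O(j^2)}\epsilon^{-(2j+1)}\log(1/\epsilon)$. That is exactly the argument the paper presents in the paragraphs preceding the theorem.

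The one place you go off the rails is the justification of the stated success probability. You assert that $\delta_0/(\gamma+\delta_0)\ge 1/2$ ``after absorbing the $\gamma$-dependence into the $g(j,\epsilon)$ factor,'' but this inequality is simply false when $\gamma$ is not tiny relative to $\delta_0=\delta/(\nu+1)$ (e.g.\ $\gamma=0.1$, $\delta_0=0.01$ gives $\approx 0.09$), and $g(j,\epsilon)$ as written in the theorem has no $\gamma$ or $\delta$ argument to absorb anything into. The honest per-round success probability is $\delta_0/(\gamma+\delta_0)$, so the overall probability is $(1-\gamma)\bigl(\delta_0/(\gamma+\delta_0)\bigr)^{\nu+1}$, exactly in the form of the analogous Theorem~\ref{the-kcenter}. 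This is really a sloppiness in the theorem's stated constant rather than a flaw unique to your write-up, since the paper's own discussion also only derives the $\delta_0/(\gamma+\delta_0)$-type bound; but you should not paper over it with a false inequality. State the probability as $(1-\gamma)\bigl(\delta_0/(\gamma+\delta_0)\bigr)^{\nu+1}$ and let the repetition count be $O\bigl(\tfrac{1}{1-\gamma}\bigl(\tfrac{\gamma+\delta_0}{\delta_0}\bigr)^{\nu+1}\bigr)$, which is the form that the boosting argument actually requires.
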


\textbf{Sublinear time algorithm.} We can view the flat fitting with outliers problem as an MEX with outliers problem. Let $r\geq 0$ and $\mathcal{F}$ be a $j$-dimensional flat. Then we can define a $j$-dimensional ``slab'' $SL(\mathcal{F}, r)=\{p\in\mathbb{R}^d\mid dist(\mathcal{F}, p)\leq r\}$, where its ``center'' and ``size'' are $\mathcal{F}$ and $r$ respectively ({\em e.g.,} a ball is a $0$-dimensional slab); the distance function $f(\mathcal{F}, p)=dist(\mathcal{F}, p)$. It is easy to see that the shape set of slabs satisfies  Property~\ref{prop-1}, \ref{prop-2}, and \ref{prop-4}. Furthermore, finding the optimal flat is equivalent to finding the smallest slab covering $(1-\gamma)n$ points of $P$. Therefore, by using  Lemma~\ref{lem-outlier-general1-general} and~\ref{lem-outlier-general2-generalize}, we achieve the following theorem.

\begin{theorem}
\label{the-flat2}
Let $(P, \gamma)$ be an instance of $j$-dimensional flat fitting with outliers. Given the parameters $\epsilon, \delta, \eta_1, \eta_2\in (0,1)$, there exists an algorithm that outputs a $(1+\epsilon, 1-\delta)$-approximation with probability $(1-\gamma)\big((1-\eta_1)(1-\eta_2)\frac{\delta}{3(\gamma+\delta)}\big)^{g(j, \epsilon)}$ where $g(j, \epsilon)=poly(e^{O(j^2)}, \frac{1}{\epsilon^j})$. The running time is $O(2^{g'(j, \epsilon,\delta,\gamma)} d)$ where $g'(j, \epsilon)=poly(e^{O(j^2)}, \frac{1}{\epsilon^j}, \frac{1}{\delta}, \frac{1}{\gamma})$.

If one repeatedly runs the algorithm  $N=O\Big(\frac{1}{1-\gamma}\big(\frac{1}{1-\eta_1}(\frac{3(\gamma+\delta)}{\delta})\big)^{g(j, \epsilon)}\Big)$ times with setting $\eta_2=O(\frac{1}{2^{g(j, \epsilon)}N})$, with constant probability, the algorithm outputs a $(1+\epsilon,1-\delta)$-approximation solution.

\end{theorem}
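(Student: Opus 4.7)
The plan is to combine the linear time iterative algorithm behind Theorem~\ref{the-flat} with the generalized Uniform-Adaptive Sampling (Lemma~\ref{lem-outlier-general1-general}) and generalized Sandwich Lemma (Lemma~\ref{lem-outlier-general2-generalize}), exactly in the spirit of how Corollary~\ref{cor-outlier2} was obtained from Algorithm~\ref{alg-outlier}. The first step is to cast flat fitting with outliers as an MEX with outliers problem. Define $\mathcal{X}$ to be the collection of $j$-dimensional slabs $SL(\mathcal{F},r)=\{p\in\mathbb{R}^d\mid dist(\mathcal{F},p)\le r\}$, with center $c(x)=\mathcal{F}$, size $s(x)=r$, and distance function $f(\mathcal{F},p)=dist(\mathcal{F},p)$. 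One checks directly that $\mathcal{X}$ satisfies Properties~\ref{prop-1}--\ref{prop-4}: concentric slabs are ordered by $r$; if $dist(\mathcal{F},p)\le dist(\mathcal{F},p_0)$ and $p_0\in SL(\mathcal{F},r)$ then $p\in SL(\mathcal{F},r)$; and the minimal $r$ with $p_0\in SL(\mathcal{F},r)$ equals $dist(\mathcal{F},p_0)$. Hence Lemmas~\ref{lem-outlier-general1-general} and~\ref{lem-outlier-general2-generalize} apply.

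Next, I would rewrite the linear-time bi-criteria algorithm from Theorem~\ref{the-flat} so that every ``farthest point'' step is replaced by a sampled analogue. Recall that the algorithm runs for $\nu=g(j,\epsilon)$ rounds and in each round it needs to find a point $p_i$ from the set $Q$ of the $(\delta_0+\gamma)n$ farthest points of $P$ from the current $j$-dimensional flat $\mathcal{F}_{i-1}$ (with $\delta_0=\delta/(\nu+1)$), and then enumerate the $2^{O(\mathtt{poly}(j,1/\epsilon))}$ candidate flats on the $(j{+}1)$-dimensional flat spanned by $p_i$ and $\mathcal{F}_{i-1}$. By Lemma~\ref{lem-outlier-general1-general}, drawing $n'=O(\tfrac{1}{\delta_0}\log\tfrac{1}{\eta_1})$ uniform samples and picking $p_i$ uniformly from the farthest $\tfrac{3}{2}(\delta_0+\gamma)n'$ of the sample returns a point of $P_{\text{opt}}\cap Q$ with probability at least $(1-\eta_1)\tfrac{\delta_0}{3(\gamma+\delta_0)}$, independent of $n$ and $d$. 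To pick the best candidate among the enumerated flats (both within a single round and across all repetitions), apply Lemma~\ref{lem-outlier-general2-generalize} with $n''=O(\tfrac{\gamma}{\delta_0^2}\log\tfrac{1}{\eta_2})$ samples to estimate the slab ``radius'' $l_i$ of each candidate, which sandwiches the true cost up to an additive $O(\delta_0)n$ error in coverage, easily absorbed by rescaling $\delta_0$ by a constant.

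For the analysis, the success probability of a single run is the product of the independent successes over $\nu=g(j,\epsilon)$ rounds, giving $(1-\gamma)\bigl((1-\eta_1)(1-\eta_2)\tfrac{\delta}{3(\gamma+\delta)}\bigr)^{g(j,\epsilon)}$, matching the claim. Repeating $N=O\bigl(\tfrac{1}{1-\gamma}(\tfrac{1}{1-\eta_1}\cdot\tfrac{3(\gamma+\delta)}{\delta})^{g(j,\epsilon)}\bigr)$ times boosts the probability to a constant, and setting $\eta_2=O(\tfrac{1}{2^{g(j,\epsilon)}N})$ lets the Sandwich estimates remain simultaneously valid across all $2^{g(j,\epsilon)}N$ candidate slabs by a union bound. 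Since each round touches only $\mathtt{poly}(\tfrac{1}{\epsilon},\tfrac{1}{\delta},\tfrac{1}{\gamma},\log\tfrac{1}{\eta_1\eta_2})$ points and $2^{O(\mathtt{poly}(j,1/\epsilon))}$ candidate flats, and the number of rounds and repetitions are both bounded in terms of $j,\epsilon,\delta,\gamma$, the total time is $O(2^{g'(j,\epsilon,\delta,\gamma)}d)$ with $g'$ as stated.

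The main obstacle I expect is not the probability bookkeeping (which is a routine union bound modeled on Theorem~\ref{the-outlier2} and Corollary~\ref{cor-outlier2}) but rather verifying that Property~\ref{prop-4} and the $O(\delta_0)$ slack introduced by the Sandwich Lemma interact correctly with the multi-round ``shrinking'' argument of~\cite{DBLP:journals/dcg/Har-PeledV04}. In the original linear-time proof, the certifying subset $P_i\subseteq P_{i-1}$ with $|P_i|\ge(1-((i+1)\delta_0+\gamma))n$ was defined using the exact farthest distance to $\mathcal{F}_{i-1}$; here that distance is only approximately certified by the sample. One must therefore set $\delta_0=\Theta(\delta/g(j,\epsilon))$ with a sufficiently small constant so that, even after the $O(\delta_0)$ additive coverage error from Lemma~\ref{lem-outlier-general2-generalize} is incurred in every round, the final $|P_\nu|\ge (1-\delta-\gamma)n$ bound and $(1+\epsilon)$-size bound simultaneously hold; this only affects the hidden polynomial in $g'$ and hence does not change the asymptotic form of the theorem.
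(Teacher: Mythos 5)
Your proposal is correct and follows essentially the same approach as the paper: cast $j$-dimensional flat fitting with outliers as MEX with outliers using the slab shape class $SL(\mathcal{F},r)$ with center $\mathcal{F}$, size $r$, and distance $dist(\mathcal{F},p)$, verify Properties~\ref{prop-1}--\ref{prop-4}, and then graft the generalized Uniform-Adaptive Sampling (Lemma~\ref{lem-outlier-general1-general}) and generalized Sandwich Lemma (Lemma~\ref{lem-outlier-general2-generalize}) onto the $\nu$-round Har-Peled--Varadarajan iteration from Theorem~\ref{the-flat}, exactly as Corollary~\ref{cor-outlier2} is obtained from Algorithm~\ref{alg-outlier}. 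One small clarification on your closing caveat: the Sandwich Lemma is invoked only once per candidate solution at the final selection step, not inside every shrinking round, so its $O(\delta)$ coverage slack is a one-time adjustment handled (as the paper does for MEB with outliers in Section~\ref{sec-oulier-improve}) by a constant rescaling of $\delta$; only the per-round $\delta_0=\delta/(\nu+1)$ accounting of the shrinking argument accumulates over $\nu$ rounds, and both effects are indeed absorbed into $g'$ as you conclude.
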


\subsection{One-class SVM with Outliers}
\label{sec-svm-general}

In practice, datasets often contain outliers.  The separating margin of SVM  could be considerably deteriorated by outliers.  As mentioned in~\cite{ding2015random}, most of existing techniques~\cite{conf/aaai/XuCS06,icml2014c2_suzumura14}  for SVM outliers removal are numerical approaches ({\em e.g.,} adding some penalty item to the objective function), and only can guarantee local optimums. Ding and Xu~\cite{ding2015random} modeled SVM with outliers as a combinatorial optimization problem and provided an algorithm called ``Random Gradient Descent Tree''. 
We focus on one-class SVM with outliers first, and explain the extension for two-class SVM with outliers in Section~\ref{sec-two-outlier}. 
Below is the definition of the one-class SVM with outliers problem proposed in~\cite{ding2015random}. 

\begin{definition} [One-class SVM with Outliers]
\label{def-svm}
 Given a set $P$ of $n$ points in $\mathbb{R}^d$ and a small parameter $\gamma\in (0,1)$, the one-class SVM with outliers problem is to find a subset $P'\subset P$ with size $(1-\gamma)n$ and a hyperplane $\mathcal{H}$ separating the origin $o$ and $P'$, such that the distance between $o$ and $\mathcal{H}$ is maximized. 
%
\end{definition}

\begin{algorithm}[tb]
   \caption{Gilbert Algorithm \cite{Gilbert:1966:IPC,ding2015random}}
   \label{alg-gilbert}
\begin{algorithmic}
   \STATE {\bfseries Input:} A point-set $P$ in $\mathbb{R}^d$, and $N\in \mathbb{Z}^+$.
    \STATE {\bfseries Output:} $v_i$ as an approximate solution of the polytope distance between the origin and $P$.
    \begin{enumerate}
   \item Initialize $i=1$ and $v_1$ to be the closest point in $P$ to the origin $o$.
   \item Iteratively perform the following steps until $i=N$.
   \begin{enumerate}
   \item Find the point $p_i \in P$ whose orthogonal projection on the supporting line of segment $\overline{ov_{i}}$ has the closest distance to $o$ (called the projection distance of $p_{i}$), {\em i.e.,} $p_i=\arg\min_{p\in P}\{\frac{\langle p, v_i\rangle}{||v_i||}\}$, 
     where $\langle p, v_i\rangle$ is the inner product of $p$ and $v_i$ (see Figure~\ref{fig-gilbert}).
   \item Let $v_{i+1}$ be the point on segment $\overline{v_i p_i}$ closest to the origin $o$; update $i=i+1$.
    \end{enumerate}
    \end{enumerate}
\end{algorithmic}
\end{algorithm}

\begin{figure}[]
\vspace{-0.1in}
   \centering
  \includegraphics[height=1.2in]{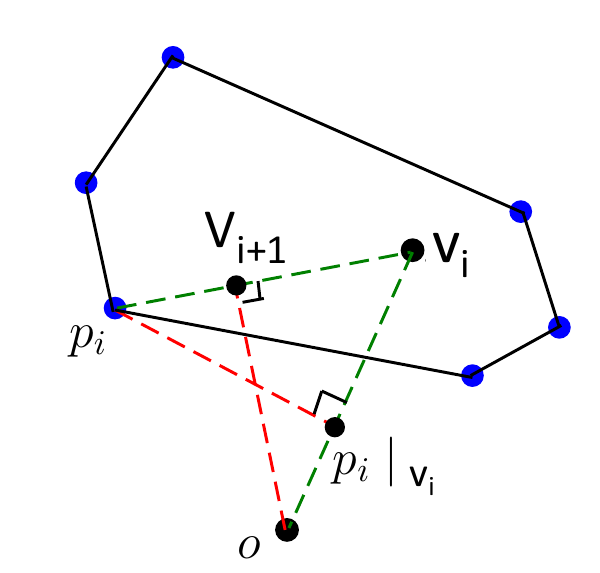}
  \vspace{-0.1in}
      \caption{An illustration of step 2 in Algorithm~\ref{alg-gilbert}; $p_i\mid_{v_i}$ is the projection of $p_i$ on $\overline{o v_i }$.}
  \label{fig-gilbert}
  \vspace{-0.1in}
\end{figure}

\textbf{Linear time algorithm.} We briefly overview the algorithm of~\cite{ding2015random}. They also considered the ``bi-criteria approximation'' with two small parameters $\epsilon, \delta\in (0,1)$: a hyperplane  $\mathcal{H}$ separates the origin $o$ and a subset $P'\subset P$ with size $\big(1- \delta-\gamma\big)n$, where the distance between $o$ and $\mathcal{H}$ is at least $(1-\epsilon)$ of the optimum. The idea of~\cite{ding2015random} is based on the fact that the SVM (without outliers) problem is equivalent to the polytope distance problem in computational geometry~\cite{GJ09}. 

\vspace{0.05in}
{\em Let $o$ be the origin and  $P$ be a given set of points in $\mathbb{R}^d$. The \textbf{polytope distance problem} is to find a point $q$ inside the convex hull of $P$ so that the distance $||q-o||$ is minimized. }
\vspace{0.05in}

For an instance $P$ of one-class SVM, it can be proved that the vector $q_{opt}-o$, if $q_{opt}$ is the optimal solution for the polytope distance between $o$ and $P$, is the normal vector of the optimal hyperplane. We refer the reader to \cite{ding2015random,GJ09} for more details. The polytope distance problem can be efficiently solved by {\em Gilbert Algorithm}~\cite{frank1956algorithm,Gilbert:1966:IPC}. For completeness, we present it in Algorithm~\ref{alg-gilbert}.

Similar to the core-set construction method of MEB in Section~\ref{sec-newanalysis}, the algorithm also greedily improves the current solution by selecting some point $p_i$ in each iteration. Let $\rho$ be the polytope distance between $o$ and $P$, $D=\max_{p,q\in P}||p-q||$, and $E=\frac{D^2}{\rho^2}$. Given $\epsilon\in (0,1)$, it has been proved that a $(1-\epsilon)$-approximation of one-class SVM ({\em i.e.,} a separating margin with the width at least $(1-\epsilon)$ of the optimum) can be achieved by running Algorithm~\ref{alg-gilbert}  at most $2\lceil 2E/\epsilon\rceil$ steps~\cite{GJ09,C10}. To handle outliers, the algorithm of~\cite{ding2015random} follows the similar intuition of Section~\ref{sec-quality}; it replaces the step of greedily selecting the point $p_i$ by randomly sampling a point from a set $Q$, which contains the $(\delta+\gamma) n$ points  having the smallest projection distances ({\em i.e.,} the values of the function $\frac{\langle p, v_i\rangle}{||v_i||}$ in Step 2(a) of Algorithm~\ref{alg-gilbert}). To achieve a $(1-\epsilon, 1-\delta)$-approximation with constant success probability,  the algorithm takes $O\big(\frac{1}{1-\gamma}(1+\frac{\gamma}{\delta})^{z}\frac{D^2}{\epsilon\rho^2}nd\big)$ time, where $z=O(\frac{D^2}{\epsilon\rho^2})$. 



\begin{figure}[]
\vspace{-0.1in}
   \centering
  \includegraphics[height=1.2in]{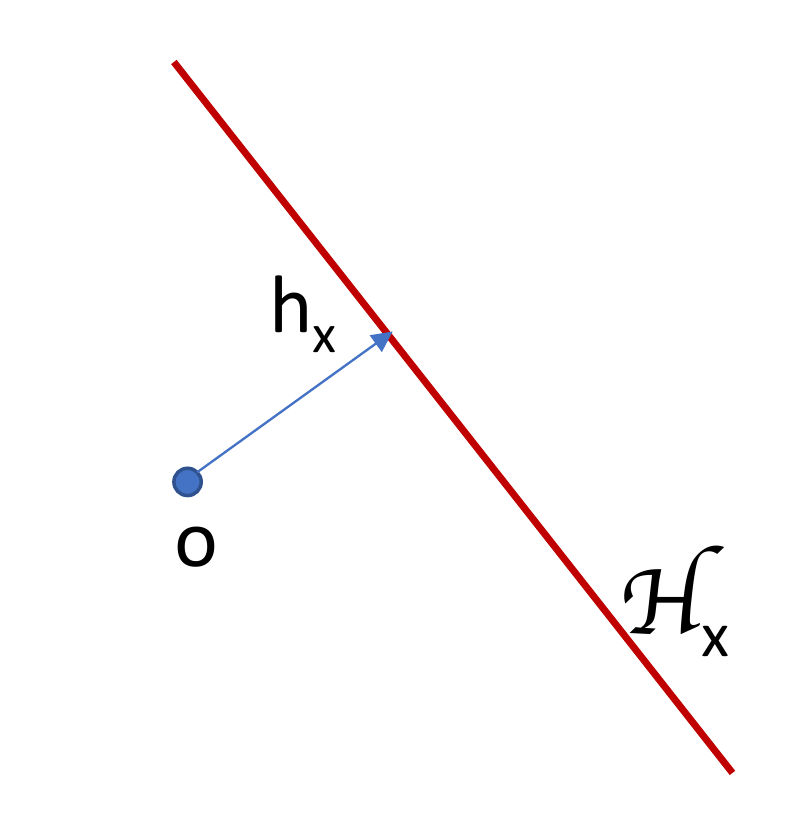}
  \vspace{-0.1in}
      \caption{An illustration for $\mathcal{H}_x$ and $h_x$.}
  \label{fig-nsvm1}
  \vspace{-0.1in}
\end{figure}

\textbf{Sublinear time algorithm.}
We define $\mathcal{X}$ to be the set of all the closed half-spaces not covering the origin $o$ in $\mathbb{R}^d$; for each $x\in \mathcal{X}$, let $\mathcal{H}_x$ be the hyperplane enclosing $x$ and let $h_x$ be the projection of $o$ on $\mathcal{H}_x$ (see Figure~\ref{fig-nsvm1}). We suppose that the given instance $(P, \gamma)$ has feasible solution. That is, there exists at least one half-space $x\in \mathcal{X}$ that the hyperplane $\mathcal{H}_x$ separates the origin $o$ and a subset $P'$ with size $(1-\gamma)n$. We define the center $c(x)= \frac{h_x}{||h_x||}$; since the MEX with outlier problem in Definition~\ref{def-outlier-general} is a minimization problem, we design the size function  $s(x)= \frac{1}{||h_x||}$. 
%
Obviously, a $(1-\epsilon)$-approximation of the SVM with outliers problem  is equivalent to a $\frac{1}{1-\epsilon}$-approximation with respect to the size function $s(x)$. 
 We also define the distance function $f(c(x), p)=-\langle p, \frac{h_x}{||h_x||}\rangle$. It is easy to verify that the shape set $\mathcal{X}$ satisfies Property~\ref{prop-1}, \ref{prop-2}, and~\ref{prop-4}. 
 
Recall that Algorithm~\ref{alg-gilbert} selects the point $p_i=\arg\min_{p\in P}\{\frac{\langle p, v_i\rangle}{||v_i||}\}$ in each iteration. Actually, the vector $\frac{v_i}{||v_i||}$ can be viewed as a shape center 
and $p_i$  is the farthest point to $\frac{v_i}{||v_i||}$ based on the distance function $f(c(x), p)$. Moreover, the set $Q$ mentioned in the previous linear time algorithm actually is the set of the farthest $(\delta+\gamma) n$ points from $P$ to $\frac{v_i}{||v_i||}$. 
Consequently, we can apply  Lemma~\ref{lem-outlier-general1-general} to sample a point from $P_{opt}\cap Q$, and apply Lemma~\ref{lem-outlier-general2-generalize} to estimate the value of $l_i$ for each candidate solution $\frac{v_i}{||v_i||}$. Overall, we can improve the running time of the algorithm of~\cite{ding2015random} to be independent of $n$. 

\begin{theorem}
\label{the-svm}
Let $(P, \gamma)$ be an instance of SVM with outliers. Given the parameters $\epsilon, \delta, \eta_1, \eta_2\in (0,1)$, there exists an algorithm that outputs a $(1-\epsilon, 1-\delta)$-approximation with probability $(1-\gamma)\big((1-\eta_1)(1-\eta_2)\frac{\delta}{3(\gamma+\delta)}\big)^{z}$ where $z=O(\frac{D^2}{\epsilon\rho^2})$. The running time is $\tilde{O}(\frac{D^2\gamma}{\delta^2  \epsilon^2\rho^2} d)$.

If one repeatedly runs the algorithm  $N=O\Big(\frac{1}{1-\gamma}\big(\frac{1}{1-\eta_1}(3+\frac{3\gamma}{\delta})\big)^{z}\Big)$ times with setting $\eta_2=O(\frac{1}{zN})$, with constant probability, the algorithm outputs a $(1-\epsilon,1-\delta)$-approximation solution.

\end{theorem}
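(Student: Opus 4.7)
The plan is to mirror the proof strategy of Theorem~\ref{the-outlier2} and Corollary~\ref{cor-outlier2} for MEB with outliers, but within the MEX-with-outliers framework instantiated by the half-space shape set $\mathcal{X}$ described just before the theorem. First I would verify (briefly, since the text already asserts it) that $\mathcal{X}$, with $c(x)=h_x/\|h_x\|$, $s(x)=1/\|h_x\|$, and $f(c(x),p)=-\langle p,h_x/\|h_x\|\rangle$, fulfills Properties~\ref{prop-1}, \ref{prop-2}, \ref{prop-4}: two half-spaces with the same normal direction are nested exactly when their signed offsets compare; any point with smaller $f$-value lies on the correct side of a given bounding hyperplane; and given a direction $c(x)$ and a point $p_0$, the smallest $s(x)$ with $p_0\in x$ is determined uniquely. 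Under this dictionary, Gilbert's greedy step $p_i=\arg\min_{p\in P}\langle p,v_i\rangle/\|v_i\|$ is exactly the ``farthest point'' step with respect to $f$ applied to the candidate center $v_i/\|v_i\|$, and $(1-\epsilon)$-approximating the SVM margin corresponds to $(1+O(\epsilon))$-approximating $s(x)$.

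Next I would describe the sublinear algorithm, which is Gilbert's algorithm (Algorithm~\ref{alg-gilbert}) with two modifications that exactly parallel Algorithm~\ref{alg-outlier2}. In each of $z=O(D^2/(\epsilon\rho^2))$ iterations, instead of computing the set $Q$ of the $(\delta+\gamma)n$ points of $P$ with smallest projection $\langle p,v_i\rangle/\|v_i\|$, I draw $n'=O(\delta^{-1}\log\eta_1^{-1})$ uniform samples and let $Q'$ be the $\frac{3}{2}(\delta/5+\gamma)n'$ ``farthest'' points of the sample w.r.t.\ $f$; pick $q\in Q'$ uniformly and feed it to Gilbert's update rule. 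Then, to score the candidate center, I apply the generalized Sandwich Lemma: draw $n''=O(\gamma\delta^{-2}\log\eta_2^{-1})$ uniform samples and take the $\big((1+\delta/(5\gamma))^2\gamma n''+1\big)$-th largest value of $f(c(x),\cdot)$ from the sample as the estimate $\tilde l_i$. Lemma~\ref{lem-outlier-general1-general} says that with probability at least $1-\eta_1$, the fraction of $Q'$ inside $P_{\text{opt}}\cap Q$ is at least $\delta/(3(\gamma+\delta))$, so the probability that the chosen $q$ lies in $P_{\text{opt}}$ is at least $(1-\eta_1)\,\delta/(3(\gamma+\delta))$; Lemma~\ref{lem-outlier-general2-generalize} says that with probability $1-\eta_2$ the estimate $\tilde l_i$ satisfies $\tilde l_i\leq l_i$ and $|P\setminus x(c_i,\tilde l_i)|\leq (\gamma+\delta)n$, so a true $(1-\epsilon, 1-\delta)$-approximation is certified.

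For the success probability, the initial uniform sample picks a point of $P_{\text{opt}}$ with probability $1-\gamma$; each of the $z$ subsequent iterations succeeds (selected point is in $P_{\text{opt}}\cap Q$ \emph{and} the Sandwich estimate is valid) with probability at least $(1-\eta_1)(1-\eta_2)\,\delta/(3(\gamma+\delta))$; multiplying out yields the claimed bound. Convergence after $z=O(D^2/(\epsilon\rho^2))$ successful iterations is inherited verbatim from the Gilbert-algorithm analysis of~\cite{GJ09,C10,ding2015random}, because conditioning on all selected points belonging to $P_{\text{opt}}$ reduces the problem to running Gilbert's algorithm on the point set $P_{\text{opt}}$ against a sequence of points with projection at most $(1+\epsilon)$ times the optimum. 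The running time per iteration is $O(n'+n''+d)$ times $d$ for projections, giving $\tilde O\!\left(\tfrac{D^2\gamma}{\delta^2\epsilon^2\rho^2}d\right)$ after summing over $z$ iterations (absorbing logs in $\tilde O$).

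For the boosting step, as in Corollary~\ref{cor-outlier2} I would split the per-run success into $\mathcal{P}_1=(1-\gamma)\!\left((1-\eta_1)\tfrac{\delta}{3(\gamma+\delta)}\right)^{z}$, the probability that all sampled points land in $P_{\text{opt}}$, and $\mathcal{P}_2=(1-\eta_2)^{z}$, the probability that the Sandwich Lemma never fails. Running the algorithm $N=O(1/\mathcal{P}_1)$ times produces a pool of $O(zN)$ candidate half-spaces; to ensure that \emph{every} candidate's radius is correctly estimated simultaneously, a union bound requires $\eta_2=O(1/(zN))$. Returning the candidate with the smallest $\tilde l_i$ then yields a $(1-\epsilon,1-\delta)$-approximation with constant probability. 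The main obstacle I anticipate is the dictionary-switching step: one must confirm that the ``farthest points w.r.t.\ $f$'' appearing in Lemma~\ref{lem-outlier-general1-general} and Lemma~\ref{lem-outlier-general2-generalize} really correspond to the ``smallest projection'' points used inside Gilbert's iterations, so that the MEX-with-outliers framework truly certifies a $(1-\epsilon)$-margin guarantee (rather than merely a geometric covering guarantee) -- this is why Property~\ref{prop-4} is stated with the caveat that $s(x)$ and $f$ may have different meanings, and it is essential to use that flexibility here.
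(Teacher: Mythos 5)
Your proposal is correct and follows essentially the same route the paper takes: instantiate the MEX-with-outliers framework with the half-space shape set $\mathcal{X}$, apply the generalized Uniform-Adaptive Sampling and Sandwich lemmas (Lemmas~\ref{lem-outlier-general1-general} and~\ref{lem-outlier-general2-generalize}) to modify Gilbert's algorithm exactly as Algorithm~\ref{alg-outlier2} modifies the core-set procedure, inherit the $z=O(D^2/(\epsilon\rho^2))$ convergence bound from the standard polytope-distance analysis, and boost with $N$ repetitions and $\eta_2=O(1/(zN))$ as in Corollary~\ref{cor-outlier2}. The ``dictionary'' concern you flag at the end is handled cleanly by the paper's choice $f(c(x),p)=-\langle p, h_x/\|h_x\|\rangle$: maximizing $f$ is exactly Gilbert's $\arg\min_{p}\langle p,v_i\rangle/\|v_i\|$ step, and $s(x)=1/\|h_x\|$ converts a $(1-\epsilon)$-margin guarantee into a $\frac{1}{1-\epsilon}$-size guarantee, so the MEX machinery certifies the margin directly rather than a mere covering guarantee.
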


\subsection{Two-class SVM with Outliers}
\label{sec-two-outlier}
 
 Below is the definition of the two-class SVM with outliers problem proposed in~\cite{ding2015random}. 

\begin{definition} [Two-class SVM with Outliers]
\label{def-svm2}
 Given two point sets $P_1$ and $P_2$ in $\mathbb{R}^d$ and two small parameters $\gamma_1, \gamma_2\in (0,1)$, the two-class SVM with outliers problem is to find two subsets $P'_1\subset P_1$ and $P'_2\subset P_2$ with $|P'_1|=(1-\gamma_1)|P_1|$ and $|P'_2|=(1-\gamma_2)|P_2|$, and a margin separating  $P'_1$ and $P'_2$, such that the width of the margin is maximized.   
 %
\end{definition}
We use $P^{opt}_1$ and $P^{opt}_2$, where $|P^{opt}_1|=(1-\gamma_1)|P_1|$ and $|P^{opt}_2|=(1-\gamma_2)|P_2|$, to denote the subsets of $P_1$ and $P_2$ which are separated by the optimal margin. 
The ordinary two-class SVM (without outliers) problem is equivalent to computing the polytope distance between the origin $o$ and $\mathcal{M}(P_1, P_2)$, where  $\mathcal{M}(P_1, P_2)$ is the Minkowski difference of $P_1$ and $P_2$~\cite{GJ09}. Note that it is not necessary to compute the set  $\mathcal{M}(P_1, P_2)$ explicitly. Instead, Algorithm~\ref{alg-gilbert} only needs to select one point from $\mathcal{M}(P_1, P_2)$ in each iteration, and overall the running time is still linear in the input size. To deal with two-class SVM with outliers, Ding and Xu~\cite{ding2015random} slightly modified their algorithm for the case of one-class. In each iteration, it considers two subsets $Q_1\subset P_1$ and $Q_2\subset P_2$, which respectively consist of points having the $(\delta+\gamma_1)|P_1|$ smallest  projection distances among all points in $P_{1}$ and the $(\delta+\gamma_2)|P_2|$ largest  projection distances among all points in $P_2$ on the vector $v_i$; then, the algorithm randomly selects two points $p^1_i\in Q_1$ and $p^2_i\in Q_2$, and their difference vector $p^2_i-p^1_i$ will serve as the role of $p_i$ in Step 2(a) of  Algorithm~\ref{alg-gilbert} to update the current solution $v_i$. This approach yields a $(1-\epsilon, 1-\delta)$-approximation in linear time.

\begin{figure}[]
\vspace{-0.1in}
   \centering
  \includegraphics[height=2in]{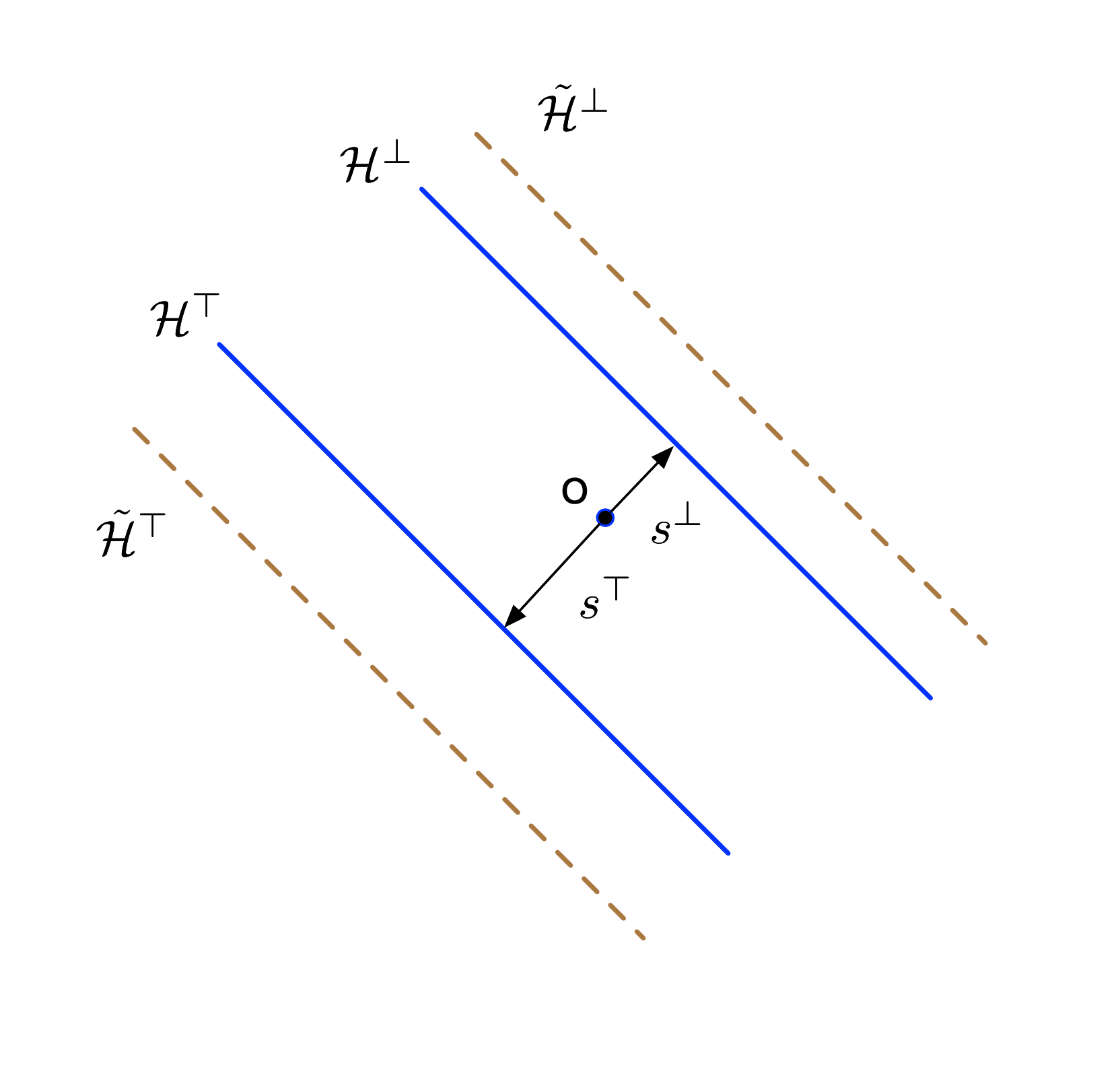}
  \vspace{-0.1in}
      \caption{An illustration for two-class SVM. The distances from $o$ to $\mathcal{H}^\perp$ and $\mathcal{H}^\top$ are $s^\perp$ and $s^\top$, respectively. The hyperplanes  $\tilde{\mathcal{H}}^\perp$ and $\tilde{\mathcal{H}}^\top$ are the estimations of $\mathcal{H}^\perp$ and $\mathcal{H}^\top$, and the distances from $o$ to them are $\tilde{s}^\perp$ and $\tilde{s}^\top$ respectively.}
  \label{fig-twosvm}
  \vspace{-0.1in}
\end{figure}

To improve the algorithm to be sublinear, we need several modifications on our previous idea for the case of one-class. First, we change the distance function to be:
\[ f(p, c)= \left\{ \begin{array}{ll}
       -\langle p, \frac{h_x}{||h_x||}\rangle & \mbox{if $p\in P_1$};\\
       \langle p, \frac{h_x}{||h_x||}\rangle& \mbox{if $p\in P_2$}.\end{array} \right. \] 
By using this new distance function, we can apply Lemma~\ref{lem-outlier-general1-general} to obtain the points $p^1_i\in Q_1\cap P^{opt}_1$ and $p^2_i\in Q_2\cap P^{opt}_2$ separately in sublinear time. Given a vector ({\em i.e.,} candidate center) $\frac{v_i}{||v_i||}$, assume $\mathcal{H}^\perp$ and $\mathcal{H}^\top$ are the parallel hyperplanes orthogonal to $\frac{v_i}{||v_i||}$ that the margin formed by them separates $P'_1$ and $P'_2$, where $P'_1\subset P_1$ and $P'_2\subset P_2$ with $|P'_1|=(1-\gamma_1)|P_1|$ and $|P'_2|=(1-\gamma_2)|P_2|$. 
Without loss of generality, we assume that the origin $o$ is inside the margin. Suppose that the distances from $o$ to $\mathcal{H}^\perp$ and $\mathcal{H}^\top$ are $s^\perp$ and $s^\top$, respectively. Then, we obtain two shapes (closed half-spaces) $x^\perp=(-\frac{v_i}{||v_i||}, \frac{1}{s^\perp})$ and $x^\top=(\frac{v_i}{||v_i||}, \frac{1}{s^\top})$ with $P'_1\subset x^\perp$ and $P'_2\subset x^\top$. Consequently, we can apply  Lemma~\ref{lem-outlier-general2-generalize} twice to obtain two values $ \frac{1}{\tilde{s}^\perp}\leq \frac{1}{s^\perp}$ and $ \frac{1}{\tilde{s}^\top}\leq \frac{1}{s^\top}$ with $\Big| P_1\setminus x(-\frac{v_i}{||v_i||}, \frac{1}{\tilde{s}^\perp})\Big|\leq (O(\delta)+\gamma_1)|P_1|$ and $\Big| P_2\setminus x(\frac{v_i}{||v_i||}, \frac{1}{\tilde{s}^\top})\Big|\leq (O(\delta)+\gamma_2)|P_2|$. Therefore, we can use the value $\tilde{s}^\perp+\tilde{s}^\top$ as an estimation of $s^\perp+s^\top$. See Figure~\ref{fig-twosvm} for an illustration. Overall, we can achieve a $(1-\epsilon, 1-O(\delta))$-approximation in sublinear time.

\section{Future Work}
\label{sec-future}
Following our work, several interesting problems deserve to be studied in future. For example, different from radius approximation, the current research on covering approximation of MEB is still inadequate. In particular, can we provide a lower bound for the complexity of computing covering approximate MEB,  as the lower bound result for radius approximate MEB proved by~\cite{DBLP:journals/jacm/ClarksonHW12}? Also, is it possible to extend the stability notion to other geometric optimization problems with more complicated structures? In Section~\ref{sec-ext}, we only provide the bi-criteria approximations for the MEX with outliers problems. So it is interesting to consider to extend the stability notion to these geometric optimization problems, and then we can design the hybrid approximation algorithms for them.


 \section{Acknowledgements} 
 The research of this work was supported in part by National Key R\&D program of China through
grant 2021YFA1000900 and the Provincial NSF of Anhui through grant 2208085MF163. The author also
want to thank Prof. Jinhui Xu for his helpful comments on this draft. 
 

\bibliographystyle{abbrv}

\bibliography{stability2}

\appendix

\section{Proof of Theorem~\ref{the-newbc}}
\label{sec-proof-newbc}
To ensure the expected improvement in each iteration of the algorithm of~\cite{badoiu2003smaller}, they showed that the following two inequalities hold if the algorithm always selects the farthest point to the current center of $\mathbf{MEB}(T)$:
\begin{eqnarray}
r_{i+1}  \geq  (1+\epsilon)\mathbf{Rad}(P)-L_i; \hspace{0.2in} r_{i+1} \geq  \sqrt{r^2_i+L^2_i},\label{for-bc2}
 \end{eqnarray}
where $r_i$ and $r_{i+1}$ are the radii of $\mathbf{MEB}(T)$ in the $i$-th and $(i+1)$-th iterations, respectively, and $L_i$ is the shifting distance of the center of $\mathbf{MEB}(T)$ from the $i$-th to $(i+1)$-th iteration.

However, we often compute only an approximate $\mathbf{MEB}(T)$ in each iteration. In the $i$-th iteration, we let $c_i$ and $o_i$ denote the centers of the exact and the approximate $\mathbf{MEB}(T)$, 
respectively. Suppose that $||c_i-o_i||\leq \xi r_i$, where $\xi\in (0,\frac{\epsilon}{1+\epsilon})$ (we will see why  this bound is needed later). Note that we only compute $o_i$ rather than $c_i$ in each iteration. As a consequence, we can only select the farthest point (say $q$) to $o_i$. If $||q-o_i||\leq (1+\epsilon)\mathbf{Rad}(P)$,  we are done and a $(1+\epsilon)$-radius approximation of MEB is already obtained. Otherwise, we have
\begin{eqnarray}
(1+\epsilon)\mathbf{Rad}(P)&<& ||q-o_i||\nonumber\\
&\leq& ||q-c_{i+1}||+||c_{i+1}-c_i||+||c_i-o_i||\nonumber\\
&\leq& r_{i+1}+L_i+\xi r_i \label{for-bc-1}
\end{eqnarray}
by  the triangle inequality. In other words, we should replace the first inequality of (\ref{for-bc2}) by $r_{i+1} > (1+\epsilon)\mathbf{Rad}(P)-L_i-\xi r_i$. Also, the second inequality of (\ref{for-bc2}) still holds since it depends only on the property of the exact MEB (see Lemma 2.1 in~\cite{badoiu2003smaller}). Thus,  we have 
\begin{eqnarray}
r_{i+1}\geq \max\Big\{\sqrt{r^2_i+L^2_i}, (1+\epsilon)\mathbf{Rad}(P)-L_i-\xi r_i\Big\}.\label{for-bc4}
\end{eqnarray}

Similar to the analysis in~\cite{badoiu2003smaller}, we let $\lambda_i=\frac{r_i}{(1+\epsilon)\mathbf{Rad}(P)}$. Because $r_i$ is the radius of $\mathbf{MEB}(T)$ and $T\subset P$,  we know $r_i\leq \mathbf{Rad}(P)$ and then $\lambda_i\leq1/(1+\epsilon)$. By simple calculation, we know that when $L_i=\frac{\big((1+\epsilon)\mathbf{Rad}(P)-\xi r_i\big)^2-r^2_i}{2\big((1+\epsilon)\mathbf{Rad}(P)-\xi r_i\big)}$ the lower bound of $r_{i+1}$ in (\ref{for-bc4}) achieves the minimum value. Plugging this value of $L_i$ into (\ref{for-bc4}), we have
\begin{eqnarray}
\lambda^2_{i+1}\geq \lambda^2_i+\frac{\big((1-\xi\lambda_i)^2-\lambda^2_i\big)^2}{4(1-\xi\lambda_i)^2}.\label{for-bc5}
\end{eqnarray}
To simplify  inequality (\ref{for-bc5}), we consider the function $g(x)=\frac{(1-x)^2-\lambda^2_i}{1-x}$, where $0<x<\xi$. Its derivative $g'(x)=-1-\frac{\lambda^2_i}{(1-x)^2}$ is always negative, thus we have
\begin{eqnarray}
g(x)\geq g(\xi)=\frac{(1-\xi)^2-\lambda^2_i}{1-\xi}. \label{for-bc2-1}
\end{eqnarray}
Because $\xi<\frac{\epsilon}{1+\epsilon}$ and $\lambda_i\leq 1/(1+\epsilon)$, we know that  the right-hand side of (\ref{for-bc2-1}) is always non-negative. Using (\ref{for-bc2-1}), inequality (\ref{for-bc5}) can be simplified to 
\begin{eqnarray}
\lambda^2_{i+1}&\geq& \lambda^2_i+\frac{1}{4}\big(g(\xi)\big)^2\nonumber\\
&=&\lambda^2_i+\frac{\big((1-\xi)^2-\lambda^2_i\big)^2}{4(1-\xi)^2}.\label{for-bc2-2}
\end{eqnarray}
(\ref{for-bc2-2}) can be further rewritten as 
\begin{eqnarray}
\Big(\frac{\lambda_{i+1}}{1-\xi}\Big)^2&\geq&\frac{1}{4}\Big(1+(\frac{\lambda_{i}}{1-\xi})^2\Big)^2 \nonumber\\
\Longrightarrow  \frac{\lambda_{i+1}}{1-\xi}&\geq&\frac{1}{2}\Big(1+(\frac{\lambda_{i}}{1-\xi})^2\Big).\label{for-bc2-3}
\end{eqnarray}

Now, we can apply a similar transformation of $\lambda_i$ which was used in~\cite{badoiu2003smaller}. Let $\gamma_i=\frac{1}{1-\frac{\lambda_i}{1-\xi}}$.  We know $\gamma_i>1$ (note $0\leq\lambda_i\leq\frac{1}{1+\epsilon}$ and $\xi<\frac{\epsilon}{1+\epsilon}$). Then, (\ref{for-bc2-3}) implies that 
\begin{eqnarray}
\gamma_{i+1}&\geq&\frac{\gamma_i}{1-\frac{1}{2\gamma_i}}\nonumber\\
&=&\gamma_i\big(1+\frac{1}{2\gamma_i}+(\frac{1}{2\gamma_i})^2+\cdots\big)\nonumber\\
&>&\gamma_i+\frac{1}{2}, \label{for-bc2-4}
\end{eqnarray}
where the equation comes from the fact that $\gamma_i>1$ and thus $\frac{1}{2\gamma_i}\in(0,\frac{1}{2})$. Note that $\lambda_0=0$ and thus $\gamma_0=1$. As a consequence, we have $\gamma_i>1+\frac{i}{2}$. In addition, since $\lambda_i\leq\frac{1}{1+\epsilon}$, that is, $\gamma_i\leq\frac{1}{1-\frac{1}{(1+\epsilon)(1-\xi)}}$, we have
\begin{eqnarray}
i< \frac{2}{\epsilon-\xi-\epsilon\xi}=\frac{2}{(1-\frac{1+\epsilon}{\epsilon}\xi)\epsilon}.\label{for-bc2-5}
\end{eqnarray}

Consequently, we obtain the theorem.

\section{Lemma 2.2 in \cite{BHI}}
\label{sec-bhi}

\begin{lemma}[\cite{BHI}]
Let $\mathbb{B}(c, r)$ be a minimum enclosing ball of a point set $P\subset\mathbb{R}^d$, then any closed half-space that contains $c$, must also contain at least a point from $P$ that is at distance $r$ from $c$.
\end{lemma}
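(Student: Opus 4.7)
My plan is to prove this classical fact by contradiction via a small perturbation argument. Suppose for contradiction that there is a closed half-space $H \ni c$ such that no point of $P$ at distance exactly $r$ from $c$ lies in $H$. Write $H = \{x \in \mathbb{R}^d : \langle x, n\rangle \geq \alpha\}$ for some unit vector $n$, and let $P_r = \{p \in P : ||p - c|| = r\}$ be the set of extremal points. Since $c \in H$ while no point of $P_r$ is in $H$, for every $p \in P_r$ we have $\langle p, n\rangle < \alpha \leq \langle c, n\rangle$, so $\langle p - c, n\rangle < 0$; that is, the extremal points all lie strictly on the side of the bounding hyperplane opposite to the direction $n$.

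The idea is then to shift $c$ a tiny amount toward the extremal points and show that this produces a strictly smaller enclosing ball. I would set $c' = c - \epsilon n$ for a small $\epsilon > 0$ to be chosen. For any $p \in P_r$, expanding
\[
||p - c'||^2 = r^2 + 2\epsilon\, \langle p - c, n\rangle + \epsilon^2,
\]
and using $\langle p - c, n\rangle < 0$ together with the finiteness of $P_r$, one can pick $\epsilon$ smaller than $\min_{p \in P_r}\bigl(-2\langle p - c, n\rangle\bigr)$ to force $||p - c'|| < r$ for every $p \in P_r$. For $p \in P \setminus P_r$ we have $||p - c|| < r$ strictly, so by continuity and the finiteness of $P$ we can further shrink $\epsilon$ so that $||p - c'|| < r$ also holds for these points. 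Taking $r' := \max_{p \in P} ||p - c'||$, we then get $r' < r$, so $\mathbb{B}(c', r')$ is an enclosing ball of $P$ strictly smaller than $\mathbb{B}(c, r)$, contradicting the minimality of $\mathbb{B}(c, r)$.

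The one subtle point worth checking first is that $P_r$ is nonempty: otherwise every point of $P$ would lie strictly inside $\mathbb{B}(c, r)$, and by finiteness of $P$ one could already shrink the radius slightly, contradicting minimality. Hence $P_r \neq \emptyset$ and the perturbation above is nonvacuous. I do not anticipate any real obstacle beyond careful bookkeeping on the choice of $\epsilon$; since $\epsilon$ only needs to satisfy finitely many strict inequalities, one uniform choice exists because $P$ is finite.
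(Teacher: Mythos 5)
Your proof is correct. The paper does not include its own proof of this lemma---it simply restates it from \cite{BHI} in the appendix---so there is nothing to compare against in the paper's text. Your argument is the standard perturbation proof: since every extremal point satisfies $\langle p-c, n\rangle < 0$, shifting the center to $c' = c - \epsilon n$ strictly decreases $\|p-c'\|$ below $r$ for all $p\in P_r$ once $\epsilon$ is small, and a further uniform shrinkage of $\epsilon$ keeps the non-extremal points inside as well, yielding a strictly smaller enclosing ball and contradicting minimality. The expansion $\|p-c'\|^2 = r^2 + 2\epsilon\langle p-c,n\rangle + \epsilon^2$ is correct, the threshold $\epsilon < \min_{p\in P_r}\bigl(-2\langle p-c,n\rangle\bigr)$ is the right one, and your preliminary check that $P_r\neq\emptyset$ (else the radius could already be shrunk with $c$ fixed) closes the only potential loophole.
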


\end{document}